\newtheorem{theorem}{Theorem}
\newtheorem{corollary}[theorem]{Corollary}
\newtheorem{definition}[theorem]{Definition}
\newtheorem{example}[theorem]{Example}
\newtheorem{lemma}[theorem]{Lemma}
\newtheorem{proposition}[theorem]{Proposition}
\newtheorem{remark}[theorem]{Remark}
\newcommand{\lasttrade}{\Upsilon}
\newcommand{\nexttrade}{\Xi}
\newcommand{\nbid}{\underline{n}}
\newcommand{\nask}{\overline{n}}
\newcommand{\ebid}{\underline{e}}
\DeclareMathOperator{\BES}{BES}
\DeclareMathOperator{\csch}{csch}
\DeclareMathOperator{\sgn}{sgn}
\let\pdfoutput=\undefined\fi
\chardef\@x10\chardef\@xv60
\def\tcitime{
\def\@time{%
  \@minute\time\@hour\@minute\divide\@hour\@xv
  \ifnum\@hour<\@x 0\fi\the\@hour:%
  \multiply\@hour\@xv\advance\@minute-\@hour
  \ifnum\@minute<\@x 0\fi\the\@minute
  }}%
\def\x@hyperref#1#2#3{%
   \catcode`\~ = 12
   \catcode`\$ = 12
   \catcode`\_ = 12
   \catcode`\# = 12
   \catcode`\& = 12
   \y@hyperref{#1}{#2}{#3}%
}
\def\y@hyperref#1#2#3#4{%
   #2\ref{#4}#3
   \catcode`\~ = 13
   \catcode`\$ = 3
   \catcode`\_ = 8
   \catcode`\# = 6
   \catcode`\& = 4
}
\def\QCTOpt[#1]#2{%
  \def\QCTOptB{#1}
  \def\QCTOptA{#2}
}
\def\QCTNOpt#1{%
  \def\QCTOptA{#1}
  \let\QCTOptB\empty
}
\def\Qct{%
  \@ifnextchar[{%
    \QCTOpt}{\QCTNOpt}
}
\def\QCBOpt[#1]#2{%
  \def\QCBOptB{#1}%
  \def\QCBOptA{#2}%
}
\def\QCBNOpt#1{%
  \def\QCBOptA{#1}%
  \let\QCBOptB\empty
}
\def\Qcb{%
  \@ifnextchar[{%
    \QCBOpt}{\QCBNOpt}%
}
\def\PrepCapArgs{%
  \ifx\QCBOptA\empty
    \ifx\QCTOptA\empty
      {}%
    \else
      \ifx\QCTOptB\empty
        {\QCTOptA}%
      \else
        [\QCTOptB]{\QCTOptA}%
      \fi
    \fi
  \else
    \ifx\QCBOptA\empty
      {}%
    \else
      \ifx\QCBOptB\empty
        {\QCBOptA}%
      \else
        [\QCBOptB]{\QCBOptA}%
      \fi
    \fi
  \fi
}
\def\GRAPHICSPS#1{%
 \ifcase\GRAPHICSTYPE
   \special{ps: #1}%
 \or
   \special{language "PS", include "#1"}%
 \fi
}%
\def\graffile#1#2#3#4{%
    \bgroup
	   \@inlabelfalse
       \leavevmode
       \@ifundefined{bbl@deactivate}{\def~{\string~}}{\activesoff}%
        \raise -#4 \BOXTHEFRAME{%
           \hbox to #2{\raise #3\hbox to #2{\null #1\hfil}}}%
    \egroup
}%
\def\draftbox#1#2#3#4{%
 \leavevmode\raise -#4 \hbox{%
  \frame{\rlap{\protect\tiny #1}\hbox to #2%
   {\vrule height#3 width\z@ depth\z@\hfil}%
  }%
 }%
}%
\let\nographics=\@msidraft
\newif\ifwasdraft
\def\GRAPHIC#1#2#3#4#5{%
   \ifnum\@msidraft=\@ne\draftbox{#2}{#3}{#4}{#5}%
   \else\graffile{#1}{#3}{#4}{#5}%
   \fi
}
\def\addtoLaTeXparams#1{%
    \edef\LaTeXparams{\LaTeXparams #1}}%
\newif\ifBoxFrame \BoxFramefalse
\newif\ifOverFrame \OverFramefalse
\newif\ifUnderFrame \UnderFramefalse
\def\BOXTHEFRAME#1{%
   \hbox{%
      \ifBoxFrame
         \frame{#1}%
      \else
         {#1}%
      \fi
   }%
}
\def\doFRAMEparams#1{\BoxFramefalse\OverFramefalse\UnderFramefalse\readFRAMEparams#1\end}%
\def\readFRAMEparams#1{%
 \ifx#1\end%
  \let\next=\relax
  \else
  \ifx#1i\dispkind=\z@\fi
  \ifx#1d\dispkind=\@ne\fi
  \ifx#1f\dispkind=\tw@\fi
  \ifx#1t\addtoLaTeXparams{t}\fi
  \ifx#1b\addtoLaTeXparams{b}\fi
  \ifx#1p\addtoLaTeXparams{p}\fi
  \ifx#1h\addtoLaTeXparams{h}\fi
  \ifx#1X\BoxFrametrue\fi
  \ifx#1O\OverFrametrue\fi
  \ifx#1U\UnderFrametrue\fi
  \ifx#1w
    \ifnum\@msidraft=1\wasdrafttrue\else\wasdraftfalse\fi
    \@msidraft=\@ne
  \fi
  \let\next=\readFRAMEparams
  \fi
 \next
 }%
\def\IFRAME#1#2#3#4#5#6{%
      \bgroup
      \let\QCTOptA\empty
      \let\QCTOptB\empty
      \let\QCBOptA\empty
      \let\QCBOptB\empty
      #6%
      \parindent=0pt
      \leftskip=0pt
      \rightskip=0pt
      \setbox0=\hbox{\QCBOptA}%
      \@tempdima=#1\relax
      \ifOverFrame
          \typeout{This is not implemented yet}%
          \show\HELP
      \else
         \ifdim\wd0>\@tempdima
            \advance\@tempdima by \@tempdima
            \ifdim\wd0 >\@tempdima
               \setbox1 =\vbox{%
                  \unskip\hbox to \@tempdima{\hfill\GRAPHIC{#5}{#4}{#1}{#2}{#3}\hfill}%
                  \unskip\hbox to \@tempdima{\parbox[b]{\@tempdima}{\QCBOptA}}%
               }%
               \wd1=\@tempdima
            \else
               \textwidth=\wd0
               \setbox1 =\vbox{%
                 \noindent\hbox to \wd0{\hfill\GRAPHIC{#5}{#4}{#1}{#2}{#3}\hfill}\\%
                 \noindent\hbox{\QCBOptA}%
               }%
               \wd1=\wd0
            \fi
         \else
            \ifdim\wd0>0pt
              \hsize=\@tempdima
              \setbox1=\vbox{%
                \unskip\GRAPHIC{#5}{#4}{#1}{#2}{0pt}%
                \break
                \unskip\hbox to \@tempdima{\hfill \QCBOptA\hfill}%
              }%
              \wd1=\@tempdima
           \else
              \hsize=\@tempdima
              \setbox1=\vbox{%
                \unskip\GRAPHIC{#5}{#4}{#1}{#2}{0pt}%
              }%
              \wd1=\@tempdima
           \fi
         \fi
         \@tempdimb=\ht1
         \advance\@tempdimb by -#2
         \advance\@tempdimb by #3
         \leavevmode
         \raise -\@tempdimb \hbox{\box1}%
      \fi
      \egroup%
}%
\def\DFRAME#1#2#3#4#5{%
  \vspace\topsep
  \hfil\break
  \bgroup
     \leftskip\@flushglue
	 \rightskip\@flushglue
	 \parindent\z@
	 \parfillskip\z@skip
     \let\QCTOptA\empty
     \let\QCTOptB\empty
     \let\QCBOptA\empty
     \let\QCBOptB\empty
	 \vbox\bgroup
        \ifOverFrame 
           #5\QCTOptA\par
        \fi
        \GRAPHIC{#4}{#3}{#1}{#2}{\z@}%
        \ifUnderFrame 
           \break#5\QCBOptA
        \fi
	 \egroup
  \egroup
  \vspace\topsep
  \break
}%
\def\FFRAME#1#2#3#4#5#6#7{%
  \@ifundefined{floatstyle}
    {
     \begin{figure}[#1]%
    }
    {
	 \ifx#1h
      \begin{figure}[H]%
	 \else
      \begin{figure}[#1]%
	 \fi
	}
  \let\QCTOptA\empty
  \let\QCTOptB\empty
  \let\QCBOptA\empty
  \let\QCBOptB\empty
  \ifOverFrame
    #4
    \ifx\QCTOptA\empty
    \else
      \ifx\QCTOptB\empty
        \caption{\QCTOptA}%
      \else
        \caption[\QCTOptB]{\QCTOptA}%
      \fi
    \fi
    \ifUnderFrame\else
      \label{#5}%
    \fi
  \else
    \UnderFrametrue%
  \fi
  \begin{center}\GRAPHIC{#7}{#6}{#2}{#3}{\z@}\end{center}%
  \ifUnderFrame
    #4
    \ifx\QCBOptA\empty
      \caption{}%
    \else
      \ifx\QCBOptB\empty
        \caption{\QCBOptA}%
      \else
        \caption[\QCBOptB]{\QCBOptA}%
      \fi
    \fi
    \label{#5}%
  \fi
  \end{figure}%
 }%
\def\makeactives{
  \catcode`\"=\active
  \catcode`\;=\active
  \catcode`\:=\active
  \catcode`\'=\active
  \catcode`\~=\active
}
   \gdef\activesoff{%
      \def"{\string"}%
      \def;{\string;}%
      \def:{\string:}%
      \def'{\string'}%
      \def~{\string~}%
    }
\def\FRAME#1#2#3#4#5#6#7#8{%
 \bgroup
 \ifnum\@msidraft=\@ne
   \wasdrafttrue
 \else
   \wasdraftfalse%
 \fi
 \def\LaTeXparams{}%
 \dispkind=\z@
 \def\LaTeXparams{}%
 \doFRAMEparams{#1}%
 \ifnum\dispkind=\z@\IFRAME{#2}{#3}{#4}{#7}{#8}{#5}\else
  \ifnum\dispkind=\@ne\DFRAME{#2}{#3}{#7}{#8}{#5}\else
   \ifnum\dispkind=\tw@
    \edef\@tempa{\noexpand\FFRAME{\LaTeXparams}}%
    \@tempa{#2}{#3}{#5}{#6}{#7}{#8}%
    \fi
   \fi
  \fi
  \ifwasdraft\@msidraft=1\else\@msidraft=0\fi{}%
  \egroup
 }%
\def\TEXUX#1{"texux"}
\def\func#1{\mathop{\rm #1}\nolimits}%
\long\def\QQQ#1#2{%
     \long\expandafter\def\csname#1\endcsname{#2}}%
\long\def\QQA#1#2{}%
\def\QTR#1#2{{\csname#1\endcsname {#2}}}%
\def\EXPAND#1[#2]#3{}%
\def\NOEXPAND#1[#2]#3{}%
\def\LaTeXparent#1{}%
\def\ChildStyles#1{}%
\def\ChildDefaults#1{}%
\def\QTagDef#1#2#3{}%
  \providecommand{\UNICODE}[2][]{\protect\rule{.1in}{.1in}}
  \providecommand{\U}[1]{\protect\rule{.1in}{.1in}}
\def\QQfnmark#1{\footnotemark}
 \def\abstract{%
  \if@twocolumn
   \section*{Abstract (Not appropriate in this style!)}%
   \else \small 
   \begin{center}{\bf Abstract\vspace{-.5em}\vspace{\z@}}\end{center}%
   \quotation 
   \fi
  }%
   \def\registered{\relax\ifmmode{}\r@gistered
                    \else$\m@th\r@gistered$\fi}%
 \def\r@gistered{^{\ooalign
  {\hfil\raise.07ex\hbox{$\scriptstyle\rm\text{R}$}\hfil\crcr
  \mathhexbox20D}}}}{}%
\newdimen\theight
\def\newfmtname{LaTeX2e}
  \DeclareOldFontCommand{\rm}{\normalfont\rmfamily}{\mathrm}
  \DeclareOldFontCommand{\sf}{\normalfont\sffamily}{\mathsf}
  \DeclareOldFontCommand{\tt}{\normalfont\ttfamily}{\mathtt}
  \DeclareOldFontCommand{\bf}{\normalfont\bfseries}{\mathbf}
  \DeclareOldFontCommand{\it}{\normalfont\itshape}{\mathit}
  \DeclareOldFontCommand{\sl}{\normalfont\slshape}{\@nomath\sl}
  \DeclareOldFontCommand{\sc}{\normalfont\scshape}{\@nomath\sc}
\def\alpha{{\Greekmath 010B}}%
\def\beta{{\Greekmath 010C}}%
\def\gamma{{\Greekmath 010D}}%
\def\delta{{\Greekmath 010E}}%
\def\epsilon{{\Greekmath 010F}}%
\def\zeta{{\Greekmath 0110}}%
\def\eta{{\Greekmath 0111}}%
\def\theta{{\Greekmath 0112}}%
\def\iota{{\Greekmath 0113}}%
\def\kappa{{\Greekmath 0114}}%
\def\lambda{{\Greekmath 0115}}%
\def\mu{{\Greekmath 0116}}%
\def\nu{{\Greekmath 0117}}%
\def\xi{{\Greekmath 0118}}%
\def\pi{{\Greekmath 0119}}%
\def\rho{{\Greekmath 011A}}%
\def\sigma{{\Greekmath 011B}}%
\def\tau{{\Greekmath 011C}}%
\def\upsilon{{\Greekmath 011D}}%
\def\phi{{\Greekmath 011E}}%
\def\chi{{\Greekmath 011F}}%
\def\psi{{\Greekmath 0120}}%
\def\omega{{\Greekmath 0121}}%
\def\varepsilon{{\Greekmath 0122}}%
\def\vartheta{{\Greekmath 0123}}%
\def\varpi{{\Greekmath 0124}}%
\def\varrho{{\Greekmath 0125}}%
\def\varsigma{{\Greekmath 0126}}%
\def\varphi{{\Greekmath 0127}}%
\def\nabla{{\Greekmath 0272}}
\def\FindBoldGroup{%
   {\setbox0=\hbox{$\mathbf{x\global\edef\theboldgroup{\the\mathgroup}}$}}%
}
\def\Greekmath#1#2#3#4{%
    \if@compatibility
        \ifnum\mathgroup=\symbold
           \mathchoice{\mbox{\boldmath$\displaystyle\mathchar"#1#2#3#4$}}%
                      {\mbox{\boldmath$\textstyle\mathchar"#1#2#3#4$}}%
                      {\mbox{\boldmath$\scriptstyle\mathchar"#1#2#3#4$}}%
                      {\mbox{\boldmath$\scriptscriptstyle\mathchar"#1#2#3#4$}}%
        \else
           \mathchar"#1#2#3#4%
        \fi 
    \else 
        \FindBoldGroup
        \ifnum\mathgroup=\theboldgroup 
           \mathchoice{\mbox{\boldmath$\displaystyle\mathchar"#1#2#3#4$}}%
                      {\mbox{\boldmath$\textstyle\mathchar"#1#2#3#4$}}%
                      {\mbox{\boldmath$\scriptstyle\mathchar"#1#2#3#4$}}%
                      {\mbox{\boldmath$\scriptscriptstyle\mathchar"#1#2#3#4$}}%
        \else
           \mathchar"#1#2#3#4%
        \fi     	    
	  \fi}
\newif\ifGreekBold  \GreekBoldfalse
\let\SAVEPBF=\pbf
\def\pbf{\GreekBoldtrue\SAVEPBF}%
  \newcounter{equationnumber}  
  \def\mathletters{%
     \addtocounter{equation}{1}
     \edef\@currentlabel{\theequation}%
     \setcounter{equationnumber}{\c@equation}
     \setcounter{equation}{0}%
     \edef\theequation{\@currentlabel\noexpand\alph{equation}}%
  }
    \def\BibTeX{{\rm B\kern-.05em{\sc i\kern-.025em b}\kern-.08em
                 T\kern-.1667em\lower.7ex\hbox{E}\kern-.125emX}}}{}%
\def\AmS{{\protect\usefont{OMS}{cmsy}{m}{n}%
                A\kern-.1667em\lower.5ex\hbox{M}\kern-.125emS}}}{}%
\def\@@eqncr{\let\@tempa\relax
    \ifcase\@eqcnt \def\@tempa{& & &}\or \def\@tempa{& &}%
      \else \def\@tempa{&}\fi
     \@tempa
     \if@eqnsw
        \iftag@
           \@taggnum
        \else
           \@eqnnum\stepcounter{equation}%
        \fi
     \fi
     \global\tag@false
     \global\@eqnswtrue
     \global\@eqcnt\z@\cr}
\def\TCItag{\@ifnextchar*{\@TCItagstar}{\@TCItag}}
\def\@TCItag#1{%
    \global\tag@true
    \global\def\@taggnum{(#1)}}
\def\@TCItagstar*#1{%
    \global\tag@true
    \global\def\@taggnum{#1}}
\def\ExitTCILatex{\makeatother }
\if@compatibility\message{amsmath already loaded}\fi\aftergroup\ExitTCILatex}
\if@compatibility\message{amstex already loaded}\fi\aftergroup\ExitTCILatex}
\if@compatibility\message{amsgen already loaded}\fi\aftergroup\ExitTCILatex}
\let\DOTSI\relax
\def\RIfM@{\relax\ifmmode}%
\def\FN@{\futurelet\next}%
\def\iint{\DOTSI\intno@\tw@\FN@\ints@}%
\def\iiint{\DOTSI\intno@\thr@@\FN@\ints@}%
\def\iiiint{\DOTSI\intno@4 \FN@\ints@}%
\def\idotsint{\DOTSI\intno@\z@\FN@\ints@}%
\def\ints@{\findlimits@\ints@@}%
\newif\iflimtoken@
\newif\iflimits@
\def\findlimits@{\limtoken@true\ifx\next\limits\limits@true
 \else\ifx\next\nolimits\limits@false\else
 \limtoken@false\ifx\ilimits@\nolimits\limits@false\else
 \ifinner\limits@false\else\limits@true\fi\fi\fi\fi}%
\def\multint@{\int\ifnum\intno@=\z@\intdots@                          
 \else\intkern@\fi                                                    
 \ifnum\intno@>\tw@\int\intkern@\fi                                   
 \ifnum\intno@>\thr@@\int\intkern@\fi                                 
 \int}
\def\multintlimits@{\intop\ifnum\intno@=\z@\intdots@\else\intkern@\fi
 \ifnum\intno@>\tw@\intop\intkern@\fi
 \ifnum\intno@>\thr@@\intop\intkern@\fi\intop}%
\def\intic@{%
    \mathchoice{\hskip.5em}{\hskip.4em}{\hskip.4em}{\hskip.4em}}%
\def\negintic@{\mathchoice
 {\hskip-.5em}{\hskip-.4em}{\hskip-.4em}{\hskip-.4em}}%
\def\ints@@{\iflimtoken@                                              
 \def\ints@@@{\iflimits@\negintic@
   \mathop{\intic@\multintlimits@}\limits                             
  \else\multint@\nolimits\fi                                          
  \eat@}
 \else                                                                
 \def\ints@@@{\iflimits@\negintic@
  \mathop{\intic@\multintlimits@}\limits\else
  \multint@\nolimits\fi}\fi\ints@@@}%
\def\intkern@{\mathchoice{\!\!\!}{\!\!}{\!\!}{\!\!}}%
\def\plaincdots@{\mathinner{\cdotp\cdotp\cdotp}}%
\def\intdots@{\mathchoice{\plaincdots@}%
 {{\cdotp}\mkern1.5mu{\cdotp}\mkern1.5mu{\cdotp}}%
 {{\cdotp}\mkern1mu{\cdotp}\mkern1mu{\cdotp}}%
 {{\cdotp}\mkern1mu{\cdotp}\mkern1mu{\cdotp}}}%
\def\RIfM@{\relax\protect\ifmmode}
\def\text{\RIfM@\expandafter\text@\else\expandafter\mbox\fi}
\let\nfss@text\text
\def\text@#1{\mathchoice
   {\textdef@\displaystyle\f@size{#1}}%
   {\textdef@\textstyle\tf@size{\firstchoice@false #1}}%
   {\textdef@\textstyle\sf@size{\firstchoice@false #1}}%
   {\textdef@\textstyle \ssf@size{\firstchoice@false #1}}%
   \glb@settings}
\def\textdef@#1#2#3{\hbox{{%
                    \everymath{#1}%
                    \let\f@size#2\selectfont
                    #3}}}
\newif\iffirstchoice@
\def\Let@{\relax\iffalse{\fi\let\\=\cr\iffalse}\fi}%
\def\vspace@{\def\vspace##1{\crcr\noalign{\vskip##1\relax}}}%
\def\multilimits@{\bgroup\vspace@\Let@
 \baselineskip\fontdimen10 \scriptfont\tw@
 \advance\baselineskip\fontdimen12 \scriptfont\tw@
 \lineskip\thr@@\fontdimen8 \scriptfont\thr@@
 \lineskiplimit\lineskip
 \vbox\bgroup\ialign\bgroup\hfil$\m@th\scriptstyle{##}$\hfil\crcr}%
\def\Sb{_\multilimits@}%
\def\endSb{\crcr\egroup\egroup\egroup}%
\def\Sp{^\multilimits@}%
\newdimen\ex@
\def\rightarrowfill@#1{$#1\m@th\mathord-\mkern-6mu\cleaders
 \hbox{$#1\mkern-2mu\mathord-\mkern-2mu$}\hfill
 \mkern-6mu\mathord\rightarrow$}%
\def\leftarrowfill@#1{$#1\m@th\mathord\leftarrow\mkern-6mu\cleaders
 \hbox{$#1\mkern-2mu\mathord-\mkern-2mu$}\hfill\mkern-6mu\mathord-$}%
\def\leftrightarrowfill@#1{$#1\m@th\mathord\leftarrow
\mkern-6mu\cleaders
 \hbox{$#1\mkern-2mu\mathord-\mkern-2mu$}\hfill
 \mkern-6mu\mathord\rightarrow$}%
\def\overrightarrow{\mathpalette\overrightarrow@}%
\def\overrightarrow@#1#2{\vbox{\ialign{##\crcr\rightarrowfill@#1\crcr
 \noalign{\kern-\ex@\nointerlineskip}$\m@th\hfil#1#2\hfil$\crcr}}}%
\def\overleftarrow{\mathpalette\overleftarrow@}%
\def\overleftarrow@#1#2{\vbox{\ialign{##\crcr\leftarrowfill@#1\crcr
 \noalign{\kern-\ex@\nointerlineskip}$\m@th\hfil#1#2\hfil$\crcr}}}%
\def\overleftrightarrow{\mathpalette\overleftrightarrow@}%
\def\overleftrightarrow@#1#2{\vbox{\ialign{##\crcr
   \leftrightarrowfill@#1\crcr
 \noalign{\kern-\ex@\nointerlineskip}$\m@th\hfil#1#2\hfil$\crcr}}}%
\def\underrightarrow{\mathpalette\underrightarrow@}%
\def\underrightarrow@#1#2{\vtop{\ialign{##\crcr$\m@th\hfil#1#2\hfil
  $\crcr\noalign{\nointerlineskip}\rightarrowfill@#1\crcr}}}%
\def\underleftarrow{\mathpalette\underleftarrow@}%
\def\underleftarrow@#1#2{\vtop{\ialign{##\crcr$\m@th\hfil#1#2\hfil
  $\crcr\noalign{\nointerlineskip}\leftarrowfill@#1\crcr}}}%
\def\underleftrightarrow{\mathpalette\underleftrightarrow@}%
\def\underleftrightarrow@#1#2{\vtop{\ialign{##\crcr$\m@th
  \hfil#1#2\hfil$\crcr
 \noalign{\nointerlineskip}\leftrightarrowfill@#1\crcr}}}%
\def\qopnamewl@#1{\mathop{\operator@font#1}\nlimits@}
\let\nlimits@\displaylimits
\def\setboxz@h{\setbox\z@\hbox}
\def\varlim@#1#2{\mathop{\vtop{\ialign{##\crcr
 \hfil$#1\m@th\operator@font lim$\hfil\crcr
 \noalign{\nointerlineskip}#2#1\crcr
 \noalign{\nointerlineskip\kern-\ex@}\crcr}}}}
 \def\rightarrowfill@#1{\m@th\setboxz@h{$#1-$}\ht\z@\z@
  $#1\copy\z@\mkern-6mu\cleaders
  \hbox{$#1\mkern-2mu\box\z@\mkern-2mu$}\hfill
  \mkern-6mu\mathord\rightarrow$}
\def\leftarrowfill@#1{\m@th\setboxz@h{$#1-$}\ht\z@\z@
  $#1\mathord\leftarrow\mkern-6mu\cleaders
  \hbox{$#1\mkern-2mu\copy\z@\mkern-2mu$}\hfill
  \mkern-6mu\box\z@$}
\def\projlim{\qopnamewl@{proj\,lim}}
\def\injlim{\qopnamewl@{inj\,lim}}
\def\varinjlim{\mathpalette\varlim@\rightarrowfill@}
\def\varprojlim{\mathpalette\varlim@\leftarrowfill@}
\def\varliminf{\mathpalette\varliminf@{}}
\def\varliminf@#1{\mathop{\underline{\vrule\@depth.2\ex@\@width\z@
   \hbox{$#1\m@th\operator@font lim$}}}}
\def\varlimsup{\mathpalette\varlimsup@{}}
\def\varlimsup@#1{\mathop{\overline
  {\hbox{$#1\m@th\operator@font lim$}}}}
\def\align{\@verbatim \frenchspacing\@vobeyspaces \@alignverbatim
You are using the "align" environment in a style in which it is not defined.}
\let\csname endalign*\endcsname =\endtrivlist
\def\alignat{\@verbatim \frenchspacing\@vobeyspaces \@alignatverbatim
You are using the "alignat" environment in a style in which it is not defined.}
\let\csname endalignat*\endcsname =\endtrivlist
\def\xalignat{\@verbatim \frenchspacing\@vobeyspaces \@xalignatverbatim
You are using the "xalignat" environment in a style in which it is not defined.}
\let\csname endxalignat*\endcsname =\endtrivlist
\def\gather{\@verbatim \frenchspacing\@vobeyspaces \@gatherverbatim
You are using the "gather" environment in a style in which it is not defined.}
\let\csname endgather*\endcsname =\endtrivlist
\def\multiline{\@verbatim \frenchspacing\@vobeyspaces \@multilineverbatim
You are using the "multiline" environment in a style in which it is not defined.}
\let\csname endmultiline*\endcsname =\endtrivlist
\def\arrax{\@verbatim \frenchspacing\@vobeyspaces \@arraxverbatim
You are using a type of "array" construct that is only allowed in AmS-LaTeX.}
\def\tabulax{\@verbatim \frenchspacing\@vobeyspaces \@tabulaxverbatim
You are using a type of "tabular" construct that is only allowed in AmS-LaTeX.}
\let\csname endarrax*\endcsname =\endtrivlist
\let\csname endtabulax*\endcsname =\endtrivlist
 \def\endequation{%
     \ifmmode\ifinner 
      \iftag@
        \addtocounter{equation}{-1} 
        $\hfil
           \displaywidth\linewidth\@taggnum\egroup \endtrivlist
        \global\tag@false
        \global\@ignoretrue   
      \else
        $\hfil
           \displaywidth\linewidth\@eqnnum\egroup \endtrivlist
        \global\tag@false
        \global\@ignoretrue 
      \fi
     \else   
      \iftag@
        \addtocounter{equation}{-1} 
        \eqno \hbox{\@taggnum}
        \global\tag@false%
        $$\global\@ignoretrue
      \else
        \eqno \hbox{\@eqnnum}
        $$\global\@ignoretrue
      \fi
     \fi\fi
 } 
 \newif\iftag@ \tag@false
 \def\TCItag{\@ifnextchar*{\@TCItagstar}{\@TCItag}}
 \def\@TCItag#1{%
     \global\tag@true
     \global\def\@taggnum{(#1)}}
 \def\@TCItagstar*#1{%
     \global\tag@true
     \global\def\@taggnum{#1}}
     \def\tag{\@ifnextchar*{\@tagstar}{\@tag}}
     \def\@tag#1{%
         \global\tag@true
         \global\def\@taggnum{(#1)}}
     \def\@tagstar*#1{%
         \global\tag@true
         \global\def\@taggnum{#1}}
\begin{document}

\title{Brownian trading excursions and avalanches}
\author{Friedrich Hubalek\thanks{%
Financial and Actuarial Mathematics, Vienna University of Technology,
Wiedner Hauptstraße~8/ 105-1, 1040 Vienna, Austria. phone
+43-1-58801-10511 \ fax +43-1-58801-9-105199 (\texttt{%
fhubalek@fam.tuwien.ac.at})} \and Paul Krühner\thanks{%
Financial and Actuarial Mathematics, Vienna University of Technology,
Wiedner Hauptstraße~8/ 105-1, 1040 Vienna, Austria. phone
+43-1-58801-10552 \ fax +43-1-58801-9-10552 (\texttt{%
paulkrue@fam.tuwien.ac.at})} \and Thorsten Rheinländer\thanks{%
Financial and Actuarial Mathematics, Vienna University of Technology,
Wiedner Hauptstraße~8/ 105-1, 1040 Vienna, Austria. phone
+43-1-58801-10550 \ fax +43-1-58801-9-10550 (\texttt{%
rheinlan@fam.tuwien.ac.at})} }
\maketitle

\begin{abstract}
{We study a parsimonious but non-trivial model of the latent limit order book
where orders get placed with a fixed displacement from a center price process,
i.e.\ some process in-between best bid and best ask, and get
executed whenever this center price reaches their level. This mechanism
corresponds to the fundamental solution of the stochastic heat equation with
multiplicative noise for the relative order volume distribution. We classify
various types of trades, and introduce the trading excursion process which is
a Poisson point process. This allows to derive the Laplace transforms of the
times to various trading events under the corresponding intensity measure. As
a main application, we study the distribution of order avalanches, i.e.\ a
series of order executions not interrupted by more than an $\varepsilon$-time
interval, which moreover generalizes recent results about Parisian options.}

\end{abstract}

\section{Introduction}

The main object of interest in this study is to develop a parsimonious model
of the limit order book (LOB) for financial assets, where price level and
number of orders away from the best bid/ask prices are recorded. We refer to
\cite{Cartea} for an overview of market
microstructure trading.

Quite a few articles on the LOB, starting amongst others with Kruk \cite{K},
are investigating the limiting behavior of some discretely modeled dynamics.
Cont and de Larrard \cite{CL} model the dynamics of best bid and ask quotes as
two interacting queues. Their structural model combines high frequency price
dynamics with the order flow, and a Markovian jump-diffusion process in the
positive orthant is reached as scaling limit. Horst et al.\ \cite{BHQ} derive
a functional limit theorem where the limits of the standing buy and sell
volume densities are described by two linear stochastic partial differential
equations, which are coupled with a two-dimensional reflected Brownian motion
that is the limit of the best bid and ask price processes, whereas Abergel and
Jedidi \cite{AJ} consider the volume of the LOB at different distances to the
best ask price and determine a diffusion limit for the mid price. Delattre et
al.\ \cite{dLRR} study the efficient price which is a price market
practitioners could agree upon and its statistical estimation. The placing of
orders is captured by Osterrieder \cite{O} in a marked point process model, so
that the order book is modeled by several measure valued processes.

Our study is quite different to the aforementioned works. For the point of
focus, we consider a \emph{latent order book model}, see \cite{TothEtAl2011},
which contains the orders of low-frequency traders, whereas high
frequency orders which get typically cancelled after a very short time span
are not recorded. As we are in particular interested how limit orders get
intrinsically executed, we do not allow for any other mechanism besides that
the center price, which we model as a Brownian motion, hits the level where the
limit orders are placed. Here orders get issued relative to the actual center
price according to some universal aggregated volume density function.

Expanding formally the relative order volume distribution via Ito's formula,
it results that this volume distribution solves a stochastic heat equation
with multiplicative noise, which will be studied in a subsequent paper
\cite{HKR-SPDE}. Here we are interested in
the fundamental solution, which corresponds to order placement according to a
Dirac measure on some level $\mu$ away from the best bid or ask price. This
leads to an approachable, but nonetheless highly non-trivial model of limit
order executions. We do not make any claims that our model is realistic, but
it should be understood as a parsimonious model which can later be extended in
various directions, like more sophisticated models for the order arrival
process as well as for the center price.

In this context, we discuss in detail and classify various types of trading
times which can be characterized via doubly reflected Brownian motion. {There
are two basic execution mechanisms for the ask side of the book (which one can
then subdivide further): a Type~{I} trade occurs whenever the price maximum
increases, whereas a Type~{II} trade is triggered after a downfall by more
than the displacement followed by an equal surge of the center price. }We then
study excursions to the next trading time. The trading excursion process is a
Poisson point process, for which the intensity measure is known. This allows
us to calculate the Laplace transforms under the intensity measure of the
times to various types of trades in terms of hypberbolic functions.

A major application of these results is the study of order execution
avalanches, {i.e.\ a series of order executions not interrupted by more than
an $\varepsilon$-time interval.} One has to allow for a small time window
where orders do not get executed due to the fact that Brownian motion has no
point of increase. Here we drew some inspiration from the paper Stapleton and
Christensen \cite{SC} about avalanches which is in the spirit of the theory of
self-organized criticality. We derive the Laplace transform of the general
avalanche length of order execution in our model, which improves over several
known results in the context of Parisian options before, in particular by
Dassios and Wu \cite{DW} as well as Gauthier \cite{G}. A similar result for
simple avalanches (not containing Type II trades) has been proved by Dudok de
Wit \cite{DDW} by a different method in a limit order book framework. 

The structure of the paper is as follows: In the next section, we introduce
our latent limit order book model, in particular the order placement and
execution mechanisms. Section 3 contains the classification of various types
of trading times, followed by an analysis of the order book with Dirac-type
placement. In Section 5 the central idea of trading excursions is introduced,
which leads in Section 6 to the hyperbolic function table regarding Laplace transforms
of the times to various trading events. As our main application, we derive in
Section 7 the Laplace transform of the order avalanche length.

\section{A Brownian motion model for the limit order book}

\label{sec:brown} As in \cite[Sec.XII.2, p.480]{RY} we shall work with the
canonical version $W$ of Brownian motion on the Wiener space $(\mathbf{W},%
\mathcal{F},P)$. This means $\mathbf{W}$ is the space of continuous
functions $w:\mathbb{R}_+\to\mathbb{R}$ with $w(0)=0$, equipped with the
locally uniform topology, $P$ is the Wiener measure, $\mathcal{F}$ is the
Borel $\sigma$-field of $\mathbf{W}$ completed with respect to $P$, and $%
W_t(w)=w(t)$ for $t\geq0$ and $w\in\mathbf{W}$.

We denote by $\{L_{t}^{a}:a\in \mathbb{R},t\in \mathbb{R}_{+}\}$ a
bicontinuous modification of the family of local times of~$W$, see \cite[%
Thm.VI.1.7, p.225]{RY}.

We assume that orders arrive with density one in every infinitesimal time
interval $dt$, model the center price process $W$ as a Brownian motion, and
denote by $\mathbb{F=}\left( \mathcal{F}_{t}\right) _{t\geq 0}$ its
augmented filtration. This `center price' is just thought to lie in between
best bid and ask, see below for the precise order execution mechanism at
best bid/ask.

\subsection{Absolutely continuous order placement}

\label{s:ac order placement} Let $V(t,x)$ denote the order volume at time $%
t\geq0$ and level $x\in\mathbb{R}$. The placement of new limit orders is
governed by some integrable function $g:\mathbb{R}\setminus\{0\}\to\mathbb{R}%
_+$. An intuitive description of the dynamics is as follows:

\begin{itemize}
\item During an infinitesimal time interval $dt$, it is assumed that new
limit orders are created at every level $W_t+x$ with volume density $g(x)dx$,

\item limit orders at level $x$ are executed once the center price hits the
corresponding level, i.e.\ when $W_t=x$,

\item there will be no order withdrawal.
\end{itemize}

For a rigorous definition let us denote by 
\begin{equation}
\sigma (t,x):=\sup \{s\in \lbrack 0,t]:\mbox{$W_s=x$ or $s=0$}\}
\end{equation}%
the \emph{last exit time of} $W$ from level $x$ before time $t$. Consider
now a time $t\geq 0$ and a level $x\in \mathbb{R}$. At time $\sigma (t,x)$
all orders at level $x$ are executed. The volume $V(t,x)$ is made up from
new orders placed during the time interval $(\sigma (t,x),t]$ and thus we
define 
\begin{equation}
V(t,x):=\int_{\sigma (t,x)}^{t}g(x-W_{s})\text{ }ds.  \label{volume function}
\end{equation}%
Note that order execution is included in (\ref{volume function}) since the
integral gets void once $W$ reaches the level $x$, capturing the
aforementioned execution mechanism. In particular $V(t,W_{t})=0$ for all $%
t\geq 0$.

We distinguish the bid order book and the ask order book processes, which we
define as 
\begin{equation}
\underline{V}(t,x):=V(t,x)I_{x\leq W_{t}},\quad \overline{V}%
(t,x):=V(t,x)I_{x\geq W_{t}}.  \label{bid-ask}
\end{equation}%
Obviously\footnote{%
We have $\underline{V}(t,x)\geq 0$ and $\overline{V}(t,x)\geq 0$. Some
authors, for example \cite[Sec.1.1, p.550]{CST} distinguish the bid and ask
side of the order book by attaching a negative sign to the bid volume.} we
have $V(t,x)=\overline{V}(t,x)+\underline{V}(t,x)$.

\begin{definition}
The \emph{best ask process} $\alpha$ is given by 
\begin{equation}  \label{alpha}
\alpha(t) := \inf\{ x\in\mathbb{R}: \overline V(t,x) > 0 \},\quad t> 0
\end{equation}
and the \emph{best bid process} $\beta$ is given by 
\begin{equation}  \label{beta}
\beta(t) := \sup\{ x\in\mathbb{R}: \underline V(t,x) > 0 \},\quad t> 0
\end{equation}
\end{definition}
\begin{remark}
It is shown in Hubalek, Krühner and Rheinländer \cite{HKR-SPDE} that
the relative volume random field $v\left( t,x\right) :=V\left(
t,x+W_{t}\right) $ is a weak solution (in an appropriate sense) of the SPDE%
\begin{align}
dv(t,x)& =\left( \frac{1}{2}\partial _{x}^{2}v(t,x)+g(x)\right) dt+\partial
_{x}v(t,x)dW_{t},  \label{dv(t,x)} \\
v(0,x)& =0,\quad v(t,0)=0,\quad\forall(t,x)\in\mathbb R_+\times\mathbb R,
\end{align}%
and $V$ can be expressed in terms of Brownian local time $L^{y}$ at the
level $y$ as 
\begin{equation}
V(t,x)=\int_{\mathbb{R}}\left( L_{t}^{x-y}-L_{\sigma (t,x)}^{x-y}\right)
g(y)dy,  \label{V-Wiener specific}
\end{equation}
which follows from (\ref{volume function}) by the occupation times formula, cf.\ \cite[Corollary VI.1.6]{RY}.
\end{remark}
\subsection{General order placement}

In view of (\ref{V-Wiener specific}), we propose for a finite Borel measure $%
G$ with $G(\{0\})=0$ the order book process 
\begin{equation}
V(t,x)=\int_{\mathbb{R}}\left( L_{t}^{x-y}-L_{\sigma (t,x)}^{x-y}\right)
G(dy).  \label{V-Wiener}
\end{equation}

The decomposition into bid and ask (\ref{bid-ask}) and the definitions for
best bid and ask (\ref{alpha}) and (\ref{beta}) apply unchanged also to the
model with general order placement.

Of particular importance for the analysis is the case when order placements
are not absolutely continuous, but occur only at a fixed distance $\mu >0$
from the center. This corresponds to choosing 
\begin{equation}
G(dx)=\delta _{-\mu }(dx)+\delta _{\mu }(dx),
\end{equation}%
with $\delta _{\pm \mu }$ denoting the Dirac distribution at $\pm \mu $, and
leads to 
\begin{equation}
\underline{V}(t,x)=L_{t}^{x+\mu }-L_{\sigma (t,x)}^{x+\mu },\quad \overline{V%
}(t,x)=L_{t}^{x-\mu }-L_{\sigma (t,x)}^{x-\mu },  \label{funda}
\end{equation}%
where $L_{t}^{x}$ denotes the Brownian local time at level $x$.

These definitions can be motivated by the analogous notions for the discrete
order book model from~\cite{HKR-DISCRETE}, which can be studied by
elementary counting of single orders of size one.

While this basic model is a gross simplification, it nevertheless gives some
insight about the classification of trading times, and leads to new
mathematical results regarding order avalanches which have been studied
before in the context of Parisian options, see \cite{DW}. 
Moreover, (\ref{funda}) can be considered as \emph{fundamental solution} 
or \emph{Green's function} for the SPDE~(\ref{dv(t,x)}) with general order placement
intensity~$g$.

\section{Trading times -- definition and classification}

\label{s:tradingtimes} 
In this section we start with a pathwise analysis of trading times on the
Wiener space. For this we consider a fixed $w\in \mathbf{W}$ that admits a continuous local
time function $L$, i.e.\ $L:[0,\infty )\times \mathbb{R}\rightarrow \mathbb{R}_{+}$
such that $\int_{0}^{t}1_{\{w(s)\in A\}}ds=\int_{A}L_{t}^{x}dx$ for any
Borel set $A\in \mathcal{B}(\mathbb{R})$, $t\geq 0$. 
\begin{remark}
To emphasize the pathwise nature of the results in this section 
we should write $L_t^x(w)$ instead of $L_t^x$
and similarily $\underline{V}(t,x,w)$, $\overline{V}(t,x,w)$ etc., but for better
readability we omit the $w$ in the notation.
\end{remark}
We define in complete analogy to
Equation~(\ref{funda}), but now for the single path $w$, 
\begin{equation}
\underline{V}(t,x):=L_{t}^{x+\mu }-L_{\sigma (t,x)}^{x+\mu },\quad \overline{%
V}(t,x):=L_{t}^{x-\mu }-L_{\sigma (t,x)}^{x-\mu },
\end{equation}%
$\alpha (t):=\inf \{x>w(t):\overline{V}(t,x)>0\}$ and $\beta (t):=\sup
\{x<w(t):\overline{V}(t,x)>0\}$ for $t>0$, $x\in \mathbb{R}$.

The trading times are exactly the times when the path $w$ hits the best ask $%
\alpha$, resp.\ the best bid $\beta$.

\begin{definition}
\label{trading time points} We define the set of \emph{ask} \emph{trading
times} $\overline{\Theta }$ and the set of \emph{bid trading times} $%
\underline{\Theta }$ by 
\begin{equation}
\overline{\Theta }:=\{t\geq 0:w(t)=\alpha (t)\},\quad \underline{\Theta }%
:=\{t\geq 0:w(t)=\beta (t)\}.
\end{equation}
\end{definition}

\begin{remark}
In the following we shall focus on the ask side and simply write $\Theta$
for $\overline\Theta$. The corresponding definitions and results for the bid
side are completely analogous.
\end{remark}

To classify trading times let us introduce the last and next trading time.

\begin{definition}\label{d:first and last trading time}
The \emph{last trading time} before~$t$ is 
\begin{equation}
\Upsilon(t):=\sup\{s\in[0,t):\mbox{$s\in\Theta$ or $s=0$}\},
\end{equation}
the \emph{next trading time} after~$t$ is 
\begin{equation}
\Xi(t) := \inf\{s\in(t,\infty):\mbox{$s\in\Theta$ or $s=\infty$}\}.
\end{equation}
\end{definition}

We start classifying different trades into those trades where the best ask
increases (Type~{I}) and those where the best ask decreases (Type~{II}). By
convention we consider the first trade to be of Type~{II}.

\begin{definition}
The set of \emph{Type~{I} trades} is defined by 
\begin{equation}
\Theta _{\mathrm{I}}:=\{t\in \Theta :\alpha (\Upsilon (t))\leq \alpha
(t),\Upsilon (t)\neq 0\}
\end{equation}%
the set of \emph{Type~{II} trades} is 
\begin{equation*}
\Theta _{\mathrm{II}}:=\{t\in \Theta :%
\mbox{$\alpha(\lasttrade(t)) >
\alpha(t)$ or $\lasttrade(t)=0$}\}.
\end{equation*}
\end{definition}

Schematic illustrations for a Type~{I} resp.\ Type~{II} trade are given in
Fig.~\ref{Fig:AskIs} resp.\ in Fig.~\ref{Fig:AskIIs}.

For a finer classification, we distinguish the cases where trades accumulate
(a) before and after $t$, (b) before but not after $t$, (c) after but not
before $t$, and (d) isolated trades. Thus a priory we have eight types.

\begin{definition}
\begin{equation}
\begin{array}{ll}
\Theta_\mathrm{I_a} & :=\{t\in \Theta_\mathrm{I}: \Upsilon(t) = t = \Xi(t)
\}, \\ 
\Theta_\mathrm{I_b} & :=\{t\in \Theta_\mathrm{I}: \Upsilon(t) = t < \Xi(t)
\}, \\ 
\Theta_\mathrm{I_c} & :=\{t\in \Theta_\mathrm{I}: \Upsilon(t) < t = \Xi(t)
\}, \\ 
\Theta_\mathrm{I_d} & :=\{t\in \Theta_\mathrm{I}: \Upsilon(t) < t < \Xi(t)
\},%
\end{array}
\quad 
\begin{array}{ll}
\Theta_\mathrm{II_a} & :=\{t\in \Theta_\mathrm{II}: \Upsilon(t) = t = \Xi(t)
\}, \\ 
\Theta_\mathrm{II_b} & :=\{t\in \Theta_\mathrm{II}: \Upsilon(t) = t < \Xi(t)
\}, \\ 
\Theta_\mathrm{II_c} & :=\{t\in \Theta_\mathrm{II}: \Upsilon(t) < t = \Xi(t)
\}\quad\text{and} \\ 
\Theta_\mathrm{II_d} & :=\{t\in \Theta_\mathrm{II}: \Upsilon(t) < t < \Xi(t)
\}.%
\end{array}%
\end{equation}
\end{definition}

However, we shall see in Proposition~\ref{no-IIab} that Type~{IIa} and Type~{%
IIb} trades do not exist, and, then again in a stochastic setup, in
Proposition~\ref{no isolated trades} that the probability for isolated
trades (i.e.\ Type~{Id} and Type~{IId}) is zero. The only trades that do
occur with positive probability are Type~{Ia}, Type~{Ib}, Type~{Ic}, and
Type~{IIc}.

A schematic illustration for various types of trades is given 
in Fig.~\ref{Fig:Paul} on page~\pageref{Fig:Paul}.
\begin{proposition}\label{no-IIab} 
Type~{II} trades do not accumulate from the left, i.e., we
have $\Theta _{\mathrm{II_{a}}}=\emptyset $ and 
$\Theta _{\mathrm{II_{b}}}=\emptyset $ for all $w\in \mathbf{W}$.
\end{proposition}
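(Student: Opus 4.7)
The plan is to derive a contradiction by directly unpacking the definitions. Both $\Theta_\mathrm{II_a}$ and $\Theta_\mathrm{II_b}$ force two constraints simultaneously: on the one hand $\Upsilon(t)=t$, and on the other $t\in \Theta_{\mathrm{II}}$ (in particular $t\in\Theta$). I will show that these two constraints are jointly impossible.

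First I would dispose of the boundary case $t=0$. At time $0$ the order book is empty, i.e.\ $\overline V(0,x)=0$ for every $x\in\mathbb R$, because the local time differences in \eqref{funda} vanish. Consequently $\alpha(0)=\inf\emptyset=+\infty\neq 0=w(0)$, so $0\notin\Theta$. This rules out $t=0$ and we may assume $t>0$ from now on.

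The main case is essentially a tautology. Since $\Upsilon(t)$ is, by Definition~\ref{d:first and last trading time}, a real number and since we are assuming it equals $t$, substituting into the function $\alpha$ yields the equality
\begin{equation*}
\alpha(\Upsilon(t))=\alpha(t).
\end{equation*}
However, $t\in\Theta_{\mathrm{II}}$ requires either $\alpha(\Upsilon(t))>\alpha(t)$ or $\Upsilon(t)=0$. The first alternative fails by the displayed equality (equality is not strict inequality), and the second fails because $\Upsilon(t)=t>0$. Thus $t\notin\Theta_{\mathrm{II}}$, contradicting the assumption and forcing $\Theta_{\mathrm{II_a}}=\Theta_{\mathrm{II_b}}=\emptyset$.

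There is no substantial obstacle; the statement is a consistency check internal to the definitions. The one point worth noting explicitly is that $\Upsilon(t)=t$ is genuinely possible even though the supremum in Definition~\ref{d:first and last trading time} is taken over the half-open interval $[0,t)$: any sequence of ask trading times $s_n\in\Theta$ with $s_n\uparrow t$ and $s_n<t$ witnesses it, so the claim is not vacuous. The proposition then says that precisely when such an accumulation from the left occurs, left-continuity $\alpha(\Upsilon(t))=\alpha(t)$ prevents the trade from being of Type~II.
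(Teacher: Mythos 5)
Your proof is correct and takes essentially the same route as the paper's: assuming $\Upsilon(t)=t$ forces $\alpha(\Upsilon(t))=\alpha(t)$, which is incompatible with the strict inequality required for membership in $\Theta_{\mathrm{II}}$. You are in fact slightly more careful than the paper, which does not explicitly dispose of the alternative clause $\Upsilon(t)=0$ in the definition of $\Theta_{\mathrm{II}}$ by checking that $0\notin\Theta$.
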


\begin{proof}
Let $t\in \Theta_\mathrm{II}$ and assume that $\Upsilon (t)=t$. Then $\alpha
\left( \Upsilon (t)\right) =\alpha \left( t\right) $ and hence $t\notin
\Theta_\mathrm{II}$. Therefore, $t\in \Theta _{\mathrm{II_{c}}}\cup \Theta _{%
\mathrm{II_{d}}}$.
\end{proof}

\begin{remark}
Note that if $w_{0}=0$, and the order book is initially empty, then the
first trade will be a Type~{II} trade by definition. Moreover, all Type~{II}
trades afterwards happen at levels below or equal the last trading level,
whereas Type~{I} trades happen at levels higher than the last trade. After
the first trade, we have the following succession of trades: at first, there
are in every $\varepsilon $-interval infinitely many Type~{Ia} trades, until
a downward excursion from $\alpha $, i.e.\ an upward excursion of $\alpha -W$
from zero.
\end{remark}

\begin{figure}[H]
\includegraphics[width=0.9\textwidth]{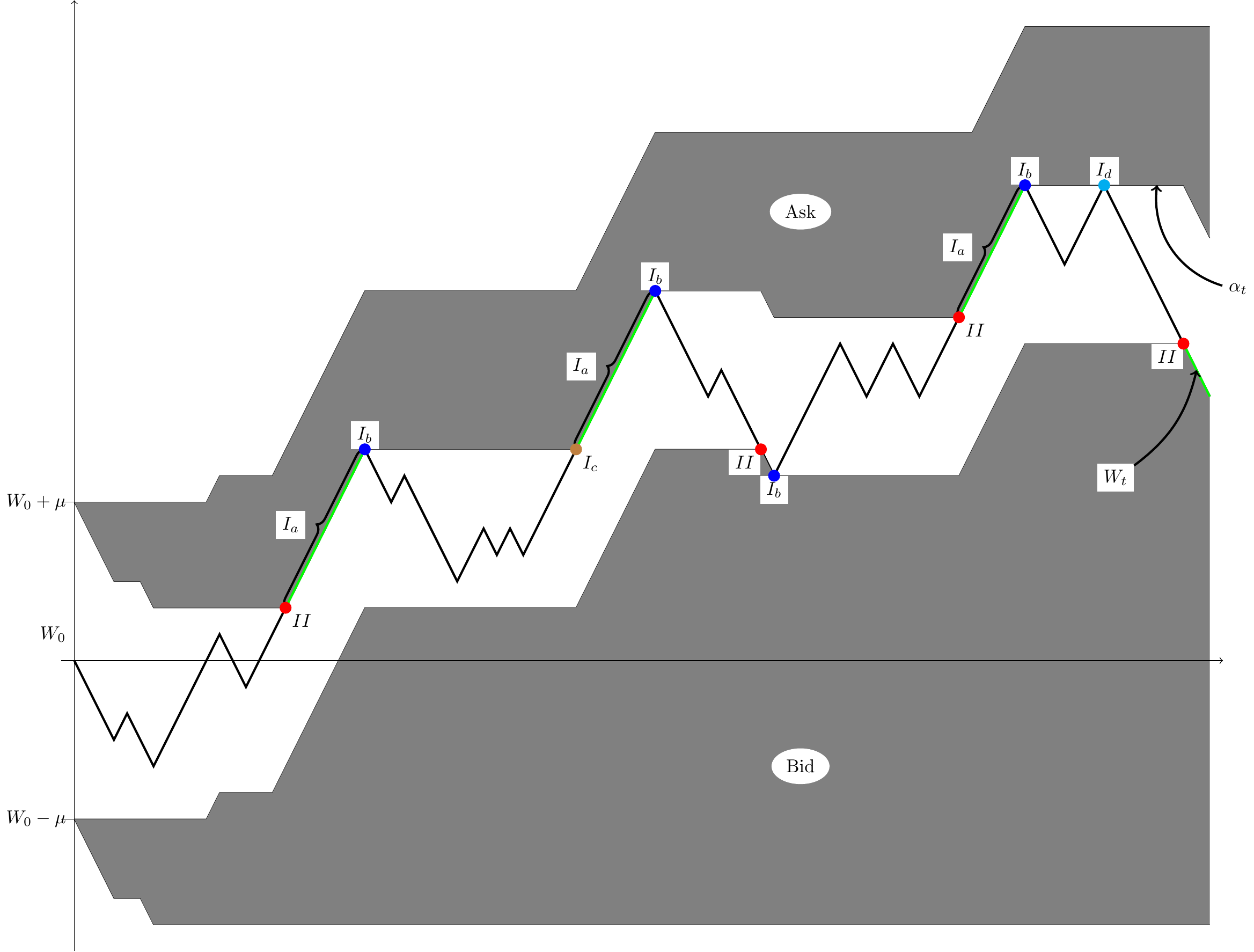}%
\caption{A schematic illustration of different types of trading times}
\label{Fig:Paul}
\end{figure}

We introduce an alternative representation for the best ask process $\alpha$. 
This will be used in the next sections for characterising the time to next
trade in a probabilistic way. Denote by 
\begin{equation}
w^{\ast}(s,t):=\sup\left\{ w(r):r\in\left[ s,t\right] \right\}, \quad
w_{\ast}(s,t):=\inf\left\{ w(r):r\in\left[ s,t\right] \right\}
\end{equation}
the running maximum respectively minimum of the path $w$ in the interval $%
[s,t]$.

\begin{definition}
\label{type II trades} Let 
\begin{equation}
\Gamma_s:=\inf\{t\geq s:w(t)=w_{\ast}(s,t)+\mu\},\quad \Psi_{s}
:=\inf\{t\geq s:w(t)=w^{\ast}(s,t)-\mu\}
\end{equation}
and define for any $n\geq0$ the times $\tau_{0}:=0$ and 
\begin{equation}
\tau_{n+1}:=%
\begin{cases}
\infty & \tau_{n}=\infty \\ 
\Gamma_{\tau_{n}} & n\text{ even, }t_{n}<\infty \\ 
\Psi_{\tau_{n}} & \text{otherwise.}%
\end{cases}%
\end{equation}
\end{definition}

We start with a small observation for the time-points $\tau_{n}$, $n\in%
\mathbb{N}$.

\begin{lemma}
The sequence $(\tau_{n})_{n\in\mathbb{N}}$ is strictly increasing until
reaching $\infty$ and it has no finite accumulation point.
\end{lemma}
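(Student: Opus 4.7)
The plan is to exploit the continuity of the path $w$ in two ways: once to show strict monotonicity, and once to rule out a finite accumulation point via a lower bound on the oscillation between consecutive $\tau_n$'s.

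For strict monotonicity, I would first argue that $\Gamma_s > s$ and $\Psi_s > s$ for every $s$ at which $w$ is finite. Indeed, by continuity of $w$, both $w_{\ast}(s,t)$ and $w^{\ast}(s,t)$ converge to $w(s)$ as $t \downarrow s$, so $w(t) - w_{\ast}(s,t) \to 0$ and $w^{\ast}(s,t) - w(t) \to 0$. Since $\mu > 0$, neither defining equation $w(t) = w_{\ast}(s,t) + \mu$ nor $w(t) = w^{\ast}(s,t) - \mu$ can hold at or arbitrarily close to $t = s$. Applying this with $s = \tau_n$ (when $\tau_n < \infty$) gives $\tau_{n+1} > \tau_n$, as required; the case $\tau_n = \infty$ is by definition.

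For the absence of a finite accumulation point, the key observation is that the oscillation of $w$ on every interval $[\tau_n,\tau_{n+1}]$ (with $\tau_{n+1} < \infty$) is at least $\mu$. For even $n$, writing $m := w_{\ast}(\tau_n,\tau_{n+1})$, the definition of $\Gamma_{\tau_n}$ gives $w(\tau_{n+1}) = m + \mu$, so
\begin{equation*}
\sup_{s \in [\tau_n,\tau_{n+1}]} w(s) - \inf_{s \in [\tau_n,\tau_{n+1}]} w(s) \geq w(\tau_{n+1}) - m = \mu.
\end{equation*}
The analogous bound for odd $n$ follows from the definition of $\Psi_{\tau_n}$.

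Now suppose, towards a contradiction, that $\tau_n \uparrow T$ for some finite $T$. By continuity of $w$ at $T$, for any $\varepsilon > 0$ there exists $\delta > 0$ such that $|w(t)-w(T)|<\varepsilon/2$ for all $t\in[T-\delta,T]$, and hence the oscillation of $w$ on $[\tau_n,T]$ is less than $\varepsilon$ for all $n$ large enough. Choosing $\varepsilon < \mu$ contradicts the lower bound above applied on $[\tau_n,\tau_{n+1}] \subseteq [\tau_n,T]$. Therefore no finite accumulation point can exist. The only delicate step is verifying the oscillation bound in the edge case where the infimum (resp.\ supremum) is attained at the left endpoint $\tau_n$ itself; but even there one has $w(\tau_{n+1}) - w(\tau_n) = \mu$, so the bound still holds.
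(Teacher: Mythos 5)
Your proof is correct and follows essentially the same route as the paper: continuity of $w$ near $\tau_n$ forces a positive gap before $\tau_{n+1}$, and continuity at a hypothetical finite accumulation point contradicts the fact that $w$ must oscillate by at least $\mu$ on each interval $[\tau_n,\tau_{n+1}]$. Your explicit oscillation lower bound is a slightly cleaner packaging of the contradiction the paper derives via the inequality $\tau_{n+1}>\tau_n+\eta$, but the underlying idea is identical.
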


\begin{proof}
Let $n\in\mathbb{N}$ such that $\tau_{n}\neq\infty$. By continuity of $w$ there
is $\eta>0$ such that $|w(t)-w(\tau_{n})|<\mu/2$ for any $t\in\lbrack \tau_{n}%
,\tau_{n}+\eta]$. Then, $w^{\ast}(\tau_{n},\tau_{n}+\eta)-w_{\ast}(\tau_{n},\tau_{n}%
+\eta)<\mu$ and hence $\Psi_{\tau_{n}},\Gamma_{\tau_{n}}>\tau_{n}+\eta$. Consequently,
$\tau_{n+1}>\tau_{n}+\eta$.

Now assume by contradiction that $\tau_{n}\nearrow t$ for some $t\in(0,\infty)$.
By continuity of $w$, there is $\eta>0$ such that $|w(t)-w(s)|<\mu/2$ for any
$s\in\lbrack t-\eta,t]$. Moreover, there is $n\in\mathbb{N}$ such that
$|\tau_{n}-t|\leq\eta$ and hence we have $t>\tau_{n+1}>\tau_{n}+\eta\geq t$. A contradiction.
\end{proof}
We can now identify the behavior of the best ask process $\alpha$ in terms
of the times $(\tau_n)_{n\geq 0}$.

\begin{proposition}
\label{structure alpha} We have 
\begin{equation*}
\alpha(t) = 
\begin{cases}
w_{\ast}(\tau_{n},t)+\mu; & \tau_{n}\leq t<\tau_{n+1},~n\text{ \textrm{even}}
\\ 
w^{\ast}(\tau_{n},t); & \tau_{n}\leq t<\tau_{n+1},~n\text{ \textrm{odd.}}%
\end{cases}%
\end{equation*}
for any $t> 0$. Moreover, $\alpha$ is a continuous function of finite
variation which is non-increasing on $\{t: \exists n\in\mathbb{N}: \tau_n
\leq t < \tau_{n+1},n\text{ even}\}$ and non-decreasing on the compliment.
\end{proposition}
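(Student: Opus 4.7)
The plan is to establish the piecewise formula for $\alpha(t)$ by induction on $n$ using a direct analysis via the local-time representation \eqref{funda}, and then to read off continuity, monotonicity, and finite variation from the explicit form. The key local inequality is $w(s)<w_*(\tau_n,s)+\mu$ for $s\in[\tau_n,\tau_{n+1})$ when $n$ is even, immediate from $\tau_{n+1}=\Gamma_{\tau_n}$, and the symmetric $w(s)>w^*(\tau_n,s)-\mu$ when $n$ is odd.

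For the base case $n=0$ and $t\in(0,\tau_1)$: if $x\in(w(t),w_*(0,t)+\mu)$ then $x-\mu<w_*(0,t)$, so $L_t^{x-\mu}=0$ (level $x-\mu$ was never visited) and hence $\overline{V}(t,x)=0$; for $x$ slightly above $w_*(0,t)+\mu$, $\sigma(t,x)=0$ and $L_t^{x-\mu}>0$, so $\overline{V}(t,x)>0$. Hence $\alpha(t)=w_*(0,t)+\mu$.

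For the inductive step with $n$ even (the odd case is symmetric, with running minima replaced by running maxima), I first extract the boundary value $\alpha(\tau_n)=w(\tau_n)+\mu$ from the inductive hypothesis on $[\tau_{n-1},\tau_n)$ using $\tau_n=\Psi_{\tau_{n-1}}$, so that $w^*(\tau_{n-1},\tau_n)=w(\tau_n)+\mu$, together with continuity of the ask book across the bid-side event at $\tau_n$. Fixing $t\in[\tau_n,\tau_{n+1})$ and $x\in(w(t),w_*(\tau_n,t)+\mu)$, which implies $x-\mu<w_*(\tau_n,t)$, I distinguish two cases. If $\sigma(t,x)\ge\tau_n$, then $w_*(\sigma(t,x),t)\ge w_*(\tau_n,t)>x-\mu$, so level $x-\mu$ has not been visited since the last visit to $x$ and $\overline{V}(t,x)=0$. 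If $\sigma(t,x)<\tau_n$, then $x$ has not been visited in $[\tau_n,t]$, forcing $x>w^*(\tau_n,t)\ge w(\tau_n)$, and since $x-\mu<w_*(\tau_n,t)$ the local-time increment $L_t^{x-\mu}-L_{\tau_n}^{x-\mu}$ vanishes, so $\overline{V}(t,x)=\overline{V}(\tau_n,x)=0$ by the boundary value applied to $x\in(w(\tau_n),w(\tau_n)+\mu)$. For $x$ slightly above $w_*(\tau_n,t)+\mu$, positivity of $\overline{V}(t,x)$ follows from the path having accumulated local time at the running minimum $w_*(\tau_n,t)$ after $\tau_n$. Thus $\alpha(t)=w_*(\tau_n,t)+\mu$ as claimed.

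Continuity at each $\tau_n$ follows from the defining relations: at $\tau_n=\Gamma_{\tau_{n-1}}$ one has $w_*(\tau_{n-1},\tau_n)+\mu=w(\tau_n)=w^*(\tau_n,\tau_n)$, and at $\tau_n=\Psi_{\tau_{n-1}}$ one has $w^*(\tau_{n-1},\tau_n)=w(\tau_n)+\mu=w_*(\tau_n,\tau_n)+\mu$, matching the left and right limits. Monotonicity is immediate from the piecewise formula since running extrema are monotonic, and combined with continuity and the absence of finite accumulation of the $\tau_n$ (established in the preceding lemma) this yields locally finite variation. The main obstacle is the sub-case $\sigma(t,x)<\tau_n$ of the inductive step, where the inductive hypothesis must be invoked at the boundary point $\tau_n$, exploiting continuity of the ask book across bid-side events to bridge between the open intervals on which the induction takes place.
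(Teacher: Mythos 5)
Your overall strategy---induction along the intervals $[\tau_n,\tau_{n+1})$, continuity at the junctions from the defining relations $w(\tau_{n+1})=w_*(\tau_n,\tau_{n+1})+\mu$ resp.\ $w(\tau_{n+1})=w^*(\tau_n,\tau_{n+1})-\mu$, and monotonicity plus finite variation read off from the explicit formula together with the non-accumulation of the $\tau_n$---matches the paper's. Where you genuinely diverge is the inductive step: the paper propagates the identity $V(t,x)=\bigl(V(\tau_{n+1},x)+L_t^{x-\mu}-L_{\tau_{n+1}}^{x-\mu}\bigr)\mathbf{1}_{\{w^*(\tau_{n+1},t)<x\}}$ and then invokes Lemma~\ref{l:depths of lob} (the book has depth $\mu$ after $\tau_1$) to locate the infimum, whereas you argue directly from the definition of $\alpha$ as an infimum via a case split on $\sigma(t,x)$. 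Your route is more self-contained in that it does not need the depth lemma, and both cases $\sigma(t,x)\ge\tau_n$ and $\sigma(t,x)<\tau_n$ are handled correctly.

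Two points need repair. First, and more seriously, your treatment of the boundary value $\alpha(\tau_n)=w(\tau_n)+\mu$ is circular: you invoke ``continuity of the ask book across the bid-side event at $\tau_n$'', but continuity of $\alpha$ is part of the conclusion of the proposition and is not available inside the induction; the inductive hypothesis only gives the formula on $[\tau_{n-1},\tau_n)$, hence only the left limit $\alpha(\tau_n-)=w^*(\tau_{n-1},\tau_n)=w(\tau_n)+\mu$, not the value at $\tau_n$ itself---and your sub-case $\sigma(t,x)<\tau_n$ relies on that value. The gap is patchable by a direct computation in the spirit of the rest of your argument: for $x\in(w(\tau_n),w(\tau_n)+\mu)=(w(\tau_n),w^*(\tau_{n-1},\tau_n))$ the level $x$ lies in the range of $w$ over $[\tau_{n-1},\tau_n]$, so $\sigma(\tau_n,x)\ge\tau_{n-1}$, and on $[\sigma(\tau_n,x),\tau_n]$ one has $w(r)>w^*(\tau_{n-1},r)-\mu\ge x-\mu$ strictly (by minimality of $\tau_n=\Psi_{\tau_{n-1}}$ and since $x<w^*(\tau_{n-1},\tau_n)$), so the level $x-\mu$ is not revisited and $\overline V(\tau_n,x)=0$. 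Second, in the base case the assertion that $\sigma(t,x)=0$ for $x$ slightly above $w_*(0,t)+\mu$ is false in general: for $t<\tau_1$ the oscillation $w^*(0,t)-w_*(0,t)$ may exceed $\mu$, so such $x$ may well have been visited. What you need---and what you state correctly in the inductive step---is only that after the last visit to $x$ the path descends to the running minimum and therefore accumulates local time at $x-\mu$ for Lebesgue-a.e.\ such $x$; this suffices because $\alpha$ is defined as an infimum.
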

\begin{proof}
  Define \[
\gamma(t):=%
\begin{cases}
w_{\ast}(\tau_{n},t)+\mu; & \tau_{n}\leq t<\tau_{n+1},~n\text{ \textrm{even}}\\
w^{\ast}(\tau_{n},t); & \tau_{n}\leq t<\tau_{n+1},~n\text{ \textrm{odd.}}%
\end{cases}
\]
for any $t>0$. We first show that $\gamma$ is continuous. Clearly, $\gamma$ is c\`adl\`ag and it is continuous outside $\{\tau_{n}:n\geq0\}$ by
definition. Let $n\geq1$ with $\tau_{n}\neq\infty$.

\underline{Case 1}: $n$ is even. By definition we have $\tau_{n}=\Psi_{\tau_{n-1}}$ and hence
$w(\tau_{n})+\mu=w^{\ast}(\tau_{n-1},\tau_{n})$.
\begin{align*}
\gamma(\tau_{n}-)  &  =\lim_{t\nearrow \tau_{n}}\gamma(t)=\lim_{t\nearrow \tau_{n}%
}w^{\ast}(\tau_{n-1},t)=w^{\ast}(\tau_{n-1},\tau_{n})\\
&  =w(\tau_{n})+\mu=w_{\ast}(\tau_{n},\tau_{n})+\mu=\gamma(\tau_{n}).
\end{align*}

\underline{Case 2}: $n$ is odd. This is proved analogously like the even case.

Thus $\gamma$ is a continuous function. Next we show that $\alpha=\gamma$.
Now, let $t\in(\tau_0,\tau_1]$. Then, $\gamma(t) = w_*(t_0,t)+\mu$ and $\{w(s):s\in [0,t]\} = [w_*(0,t),w^*(0,t)]$. Hence 
$L_t^u>0$ for Lebesgue almost any $u\in[w_*(0,t),w^*(0,t)]$ and $L_t^u=0$ for any $u\in\mathbb R\backslash [w_*(0,t),w^*(0,t)]$. Consequently,
  $$ \alpha(t) = \inf\{x>w(t) : V(t,x)>0 \} = w_*(0,t) + \mu = \gamma(t). $$
  
Now let 
\begin{align*} 
   I:=\{n\in\mathbb N: \forall t\in (\tau_n,\tau_{n+1}]: \alpha(t)=\gamma(t)\}. 
\end{align*}
Let $n\in I\cup\{0\}$ and $t\in [\tau_{n+1},\tau_{n+2}]$.

\underline{Case 1}: $n+1$ is even. Then, $\gamma(t) = w^*(\tau_{n+1},t)$ and 
 $w(t) \leq w_*(\tau_{n+1},\tau_{n+2})+\mu$ by definition of $\gamma$ and $\tau_{n+2}$. Consequently,
  $$ w^*(\tau_{n+1},\tau_{n+2}) = w_*(\tau_1,\tau_2)+\mu $$
 and hence
  $$ V(t,x) = \left(V(\tau_{n+1},x)+L_t^{x-\mu}-L_{\tau_{n+1}}^{x-\mu}\right)\mathbf{1}_{\{w^*(\tau_{n+1},t)<x\}}\geq V(\tau_{n+1},x)\mathbf{1}_{\{w^*(\tau_{n+1},t)<x\}}.  $$
 Lemma \ref{l:depths of lob} yields $\alpha(t) = w^*(\tau_{n+1},t) = \gamma(t)$.
 
\underline{Case 2}: $n+1$ is odd. This works similar and we get $n+1\in I$.

By induction $\mathbb N= I\cup\{0\}$ which yields the claim.
\end{proof}
\begin{corollary}
\label{type II trades as stopping times} $\tau_1$ is the first (ask) trade
and $\tau_{2n-1}$ denotes the $n$-th Type~{II} trade (in the ask order book)
for any $n\in\mathbb{N}$.
\end{corollary}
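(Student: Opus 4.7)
My strategy is to use Proposition~\ref{structure alpha} (the explicit formula for $\alpha$ on each interval $[\tau_n, \tau_{n+1})$) together with the infimum definitions of the $\tau_n$ to pin down the trades inside each cycle $[\tau_{2k-1}, \tau_{2k+1}]$, and then to classify each of them by comparing $\alpha(\Upsilon(t))$ with $\alpha(t)$.

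First I handle $\tau_1$. By Proposition~\ref{structure alpha}, $\alpha(t) = w_\ast(0,t) + \mu$ on $[0,\tau_1)$. Since $w(0) = 0 < \mu = \alpha(0)$ and $\tau_1 = \Gamma_0$ is the infimum of the zero set of $w - \alpha$ on $[0,\infty)$, an intermediate value argument shows $w(t) < \alpha(t)$ throughout $[0,\tau_1)$, so no trade occurs before $\tau_1$; at $\tau_1$ continuity gives $w(\tau_1) = \alpha(\tau_1)$, hence $\tau_1 \in \Theta$, and $\Upsilon(\tau_1) = 0$ places $\tau_1$ in $\Theta_\mathrm{II}$ by definition.

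Next I induct on $k \geq 1$. Assuming $\tau_{2k-1}$ is the $k$-th Type~II trade, I examine the cycle $[\tau_{2k-1}, \tau_{2k+1}]$. On $[\tau_{2k-1}, \tau_{2k})$ Proposition~\ref{structure alpha} gives $\alpha(t) = w^\ast(\tau_{2k-1}, t)$, which is non-decreasing, so the trades there are exactly the running-maximum times of $w$; each of them (other than $\tau_{2k-1}$ itself) satisfies $\alpha(\Upsilon(t)) \leq \alpha(t)$ and is therefore of Type~I. Let $t^\ast_k$ be the last such trade. At $\tau_{2k} = \Psi_{\tau_{2k-1}}$ the identity $w(\tau_{2k}) = w^\ast(\tau_{2k-1},\tau_{2k}) - \mu < \alpha(\tau_{2k})$ shows $\tau_{2k} \notin \Theta$; on $(\tau_{2k}, \tau_{2k+1})$ we have $\alpha(t) = w_\ast(\tau_{2k},t) + \mu$ and the same intermediate-value argument as above rules out trades; finally $w(\tau_{2k+1}) = \alpha(\tau_{2k+1})$ shows $\tau_{2k+1} \in \Theta$, with $\Upsilon(\tau_{2k+1}) = t^\ast_k$, $\alpha(t^\ast_k) = w^\ast(\tau_{2k-1}, \tau_{2k}) = w(\tau_{2k}) + \mu$, and $\alpha(\tau_{2k+1}) = w_\ast(\tau_{2k}, \tau_{2k+1}) + \mu$.

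The remaining point, and the main obstacle, is to verify that $\tau_{2k+1} \in \Theta_\mathrm{II}$, i.e.\ that $w_\ast(\tau_{2k}, \tau_{2k+1}) < w(\tau_{2k})$ strictly. This can fail for a general continuous path (for example one that ascends monotonically just after $\tau_{2k}$), so this is the genuine probabilistic input of the statement: since Brownian motion almost surely has no point of increase, $w_\ast(\tau_{2k}, \tau_{2k} + \varepsilon) < w(\tau_{2k})$ for every $\varepsilon > 0$ almost surely, yielding the strict inequality and hence the Type~II classification of each $\tau_{2k+1}$. This, combined with Step~2, identifies $\{\tau_{2k-1} : k \geq 1\}$ almost surely as the successive Type~II trades and closes the induction.
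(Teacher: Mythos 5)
Your proof is correct and follows the same route as the paper, whose own proof simply cites Proposition~\ref{structure alpha} and calls the claim immediate; you have unpacked that citation into an explicit induction over the cycles $[\tau_{2k-1},\tau_{2k+1}]$. The one substantive difference is your final step, and it is a good catch: the Type~II classification of $\tau_{2k+1}$ requires the \emph{strict} inequality $\alpha(\Upsilon(\tau_{2k+1}))>\alpha(\tau_{2k+1})$, i.e.\ $w_\ast(\tau_{2k},\tau_{2k+1})<w(\tau_{2k})$, and nothing in the pathwise hypotheses of Section~3 (continuity of $w$ and of its occupation density) rules out $w_\ast(\tau_{2k},\tau_{2k+1})=w(\tau_{2k})$, in which case $\tau_{2k+1}$ would satisfy $\alpha(\Upsilon(\tau_{2k+1}))=\alpha(\tau_{2k+1})$ and be a Type~I trade under the paper's definitions. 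So the corollary, read purely pathwise, is not quite ``immediate''; it holds $P$-a.s., and your probabilistic repair is the right one. Invoking the no-point-of-increase theorem is heavier than necessary, though: since each $\tau_{2k}$ is a stopping time, the strong Markov property together with $P\left(\inf_{[0,\varepsilon]}B<0\right)=1$ for every $\varepsilon>0$, combined with $\tau_{2k+1}>\tau_{2k}$ a.s., already yields $w_\ast(\tau_{2k},\tau_{2k+1})<w(\tau_{2k})$ for all $k$ simultaneously outside a single null set. The rest of your argument (no trades on $[0,\tau_1)$ and on $[\tau_{2k},\tau_{2k+1})$ by the infimum definitions of $\Gamma$ and $\Psi$, identification of $\Upsilon(\tau_{2k+1})$ with the last running-maximum time $t^\ast_k$, and the Type~I classification of all trades in $(\tau_{2k-1},\tau_{2k})$ via the monotonicity of $\alpha$ there) is accurate and is exactly what the paper's terse proof presupposes.
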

In view of this corollary we define, for mathematical convenience, 
the first trade to be of Type~II.
\begin{proof}
  The first statement is immediate from the definition and the second 
  statement follow immediately from Proposition \ref{structure alpha}.
\end{proof}
\begin{remark}
In this section we have not used any specific properties of Brownian
motion. In fact, we solely argued from the existence of a continuous
occupation density which exists a.s.\ for many processes. 

Let $X$ be any continuous semimartingale such that its quadratic variation
is given by $[X,X](t) = \int_0^t c(X(s)) ds$ where $c:\mathbb{R}%
\rightarrow(0,\infty)$ is a continuous function. \cite[VI.1.7]{RY} yields
that it has local time $L$ which is continuous in time and c\`adl\`ag in its
space variable. \cite[Corollary VI.1.6]{RY} yields that for any Borel set $%
A\in\mathcal{B}(\mathbb{R})$ we have 
\begin{equation*}
\int_0^t 1_{\{X(s)\in A\}} ds = \int_0^t \frac{1_{\{X(s)\in A\}}}{c(X(s))}
d[X,X](s) = \int_{A} \frac{L_t^x}{c(x)} dx
\end{equation*}
and, hence, $X$ has occupation density $\rho^t_x = \frac{L_t^x}{c(x)}$, $x\in%
\mathbb{R}$, $t\geq0$. In particular, if its local time posses a continuous
version, then so does its occupation density.

For more details on occupation densities see \cite{geman.horowitz.80}.
\end{remark}
\begin{remark}
Throughout this section we worked with the specific Dirac order placement.
However, a close inspection of the arguments reveals that this is not
strictly necessary to obtain the preceeding results. If orders are placed
with respect to some measure $G$ instead, as in Equation \eqref{V-Wiener},
and $0\notin \mathrm{supp}(G)$, then defining $\mu := \inf\mathrm{supp}(G|_{\mathcal{B}(\mathbb{R}_+)})$ 
allows to obtain the same results as presented
in this section as long as $\mu>0$.
\end{remark}
\section{Analysis of the Brownian order book with Dirac order placement}
\subsection{Characterizing trading times via a doubly reflected Brownian
motion}
We return to our stochastic setup as in Section~\ref{s:ac order placement}.
Our aim is now to characterize trading times via a doubly reflected Brownian
motion. The results from the preceding section hold almost surely
by the occupation times formula \cite[Theorem VI.1.6]{RY} and \cite[Theorem VI.1.7]{RY}.

\begin{remark}
\label{r: tau stopping times} Firstly, we observe that $(\tau_n)_{n\geq 0}$ from Definition~\ref{type II trades}
is an increasing sequence of stopping times.
\end{remark}

Up to here we have essentially gathered pathwise properties which do not
rely on the specific structure of the Brownian motion except for the
continuous sample path property and the existence of a continuous occupation
density. This, however, holds for many other processes as well, cf.\ \cite[Theorem IV.76, Corollary IV.2]{protter.04}. For the rest of this section we
consider features of trading times which appear to be more specific to the
Brownian motion.

\begin{definition}
\label{d:DRBM} Let $\mu>0$. A $[0,\mu]$-valued stochastic process $X$ is
called a \emph{doubly reflected Brownian motion} if 
\begin{equation*}
f(X(t)) - \int_0^t \frac{1}{2}f^{\prime \prime }(X(s)) ds,\quad t\geq 0
\end{equation*}
is a martingale for any twice continuously differentiable function $f:[0,\mu]\rightarrow\mathbb{R}$ with $f^{\prime }(0)=0=f^{\prime }(\mu)$.
\end{definition}

Recall that \cite[Theorems 8.1.1, 4.5.4]{ethier.kurtz.86} yield that such a
process exist and \cite[Theorem 4.4.1]{ethier.kurtz.86} yields that its
process law is uniquely determined by its initial distribution $P^{X(0)}$.

\begin{theorem}
\label{doubly reflected BM}The process $\alpha-W$ is a doubly reflected
Brownian motion on the interval $[0,\mu]$ with $(\alpha-W)(0)=\mu$.
Moreover, we have $\Theta=\{t:(\alpha-W)(t)=0\}$.

Clearly, $W-\beta $ is another doubly reflected Brownian motion on the
interval $[0,\mu ]$ and $\underline{\Theta }=\{t:(W-\beta )(t)=0\}$. 
\end{theorem}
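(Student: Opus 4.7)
The plan is to verify the martingale characterisation from Definition~\ref{d:DRBM}: for every $f\in C^{2}([0,\mu])$ with $f'(0)=f'(\mu)=0$ I will show that
\begin{equation*}
M^{f}(t):=f\bigl(\alpha(t)-W(t)\bigr)-\tfrac{1}{2}\int_{0}^{t}f''\bigl(\alpha(s)-W(s)\bigr)\,ds
\end{equation*}
is a martingale, via It\^o's formula applied to $f(\alpha-W)$. The identity $\Theta=\{t:(\alpha-W)(t)=0\}$ is immediate from Definition~\ref{trading time points}, and the assertion for $W-\beta$ is symmetric. That $\alpha-W$ takes values in $[0,\mu]$ and satisfies $(\alpha-W)(0)=\mu$ follows directly from Proposition~\ref{structure alpha}: on an even piece $[\tau_{2n},\tau_{2n+1})$ one has $\alpha(t)-W(t)=w_{\ast}(\tau_{2n},t)+\mu-W(t)\in[0,\mu]$ because $W(t)\geq w_{\ast}(\tau_{2n},t)$ and, by the definition $\tau_{2n+1}=\Gamma_{\tau_{2n}}$, also $W(t)\leq w_{\ast}(\tau_{2n},t)+\mu$; the odd case is analogous using $\tau_{2n+2}=\Psi_{\tau_{2n+1}}$, and letting $t\downarrow 0$ on the first piece gives $(\alpha-W)(0)=\mu$.

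The heart of the argument is the observation that the Stieltjes measure $d\alpha$ is carried by $\{t:\alpha(t)-W(t)\in\{0,\mu\}\}$. Indeed, on an even piece $\alpha(t)=w_{\ast}(\tau_{2n},t)+\mu$ is non-increasing and its driving measure charges only the set $\{W(t)=w_{\ast}(\tau_{2n},t)\}=\{\alpha(t)-W(t)=\mu\}$, while on an odd piece $\alpha(t)=w^{\ast}(\tau_{2n+1},t)$ is non-decreasing and its driving measure charges only $\{W(t)=w^{\ast}(\tau_{2n+1},t)\}=\{\alpha(t)-W(t)=0\}$. Since $\alpha$ is continuous and of locally finite variation (monotone on each piece $[\tau_{n},\tau_{n+1}]$, with no finite accumulation of the $\tau_{n}$), It\^o's formula for continuous semimartingales applies and yields
\begin{equation*}
f\bigl(\alpha(t)-W(t)\bigr)=f(\mu)+\int_{0}^{t}f'(\alpha(s)-W(s))\,d\alpha(s)-\int_{0}^{t}f'(\alpha(s)-W(s))\,dW_{s}+\tfrac{1}{2}\int_{0}^{t}f''(\alpha(s)-W(s))\,ds.
\end{equation*}
The $d\alpha$-integral vanishes because its integrand takes values in $\{f'(0),f'(\mu)\}=\{0\}$ on the support of $d\alpha$, and the stochastic integral is a true martingale because $f'$ is bounded on $[0,\mu]$. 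Hence $M^{f}$ is a martingale, and by Definition~\ref{d:DRBM} together with the uniqueness in law for reflected Brownian motion on $[0,\mu]$ noted after that definition, $\alpha-W$ is the doubly reflected Brownian motion on $[0,\mu]$ starting at $\mu$.

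The main obstacle is the support-of-$d\alpha$ step; once it is in place the Neumann boundary conditions $f'(0)=f'(\mu)=0$ exactly annihilate the finite-variation drift carried by $\alpha$, after which everything reduces to a standard It\^o expansion. Some minor bookkeeping is needed to justify applying It\^o's formula across the countable family $\tau_{n}$ of stopping times, which is handled by working piecewise on $[\tau_{n},\tau_{n+1}]$, exploiting the continuity of $\alpha$ and Remark~\ref{r: tau stopping times}.
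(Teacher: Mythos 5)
Your proof is correct, but it takes a genuinely different route from the paper's. The paper verifies the martingale characterisation of Definition~\ref{d:DRBM} piece by piece on the intervals $[\tau_n,\tau_{n+1}]$: on an even piece it writes the process as an affine transformation of $X-X_\ast$ for a fresh Brownian motion $X$ started at $\tau_n$, invokes L\'evy's theorem that $X-X_\ast$ has the law of $|B|$, deduces the required martingale property on that piece from the known reflected-Brownian-motion case, and then glues the pieces together by induction over $n$ and the tower property. You instead run a single global It\^o computation on the continuous semimartingale $\alpha-W$, whose key structural input is that the finite-variation part $d\alpha$ is carried by the boundary set $\{\alpha-W\in\{0,\mu\}\}$ --- decreasing only at new minima of $W$ on even pieces, increasing only at new maxima on odd pieces --- so that the Neumann conditions $f'(0)=f'(\mu)=0$ annihilate the drift term. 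This is essentially the Skorokhod-decomposition characterisation of doubly reflected Brownian motion and is arguably more transparent; the paper's argument trades your support-of-$d\alpha$ observation for a reduction to L\'evy's distributional identity on each block. Both arguments rest on Proposition~\ref{structure alpha} for the explicit form of $\alpha$ (in particular its continuity and local finite variation, which your It\^o step needs), both handle the range $[0,\mu]$ and the initial value $\mu$ the same way, and both conclude via the uniqueness-in-law remark following Definition~\ref{d:DRBM}; the identification $\Theta=\{t:(\alpha-W)(t)=0\}$ is indeed immediate from Definition~\ref{trading time points} in either treatment.
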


\begin{proof}
Let $f:[0,\mu]\rightarrow\mathbb R$ be a twice continuously differentiable 
function with $f'(0)=0=f'(\mu)$ and define $R:=W-\alpha$. Let
 \begin{align*} 
 I:=\Bigg\{n\in\mathbb N: & \textstyle E\left[f(R(\tau_n+1))-\frac12\int_{\tau_n}^{\tau_{n+1}}f''(R(s))ds \middle\vert\mathcal F_{\eta}\right] = 
 f(R(\eta))-\int_{\tau_n}^{\eta}\frac{1}{2}f''(R(s))ds \\ 
  &\text{ for any stopping time }\eta\in[\tau_n,\tau_{n+1}]\Bigg\}
 \end{align*}
Let $n\in \mathbb N$ be even and define $X(t) := W(t+\tau_n)-W(\tau_n)$ 
and $X_\ast(t):=\inf\{t\geq0:X(t)\}$. Then $X$ is a Brownian motion which 
is independent of $\mathcal F_{\tau_n}$ and Proposition~\ref{structure alpha} 
yields that $R(\eta+\tau_n) = R(\tau_n)-X(\eta)+X_\ast(\eta)$ for any random 
time $\eta$ which is bounded by $\Delta\tau_n:=\tau_{n+1}-\tau_n$. Moreover, the 
law of $X_\ast-X$ coincides with the law of $-|B|$ for some Brownian motion $B$, which follows from
a well-known result of L\'evy, see for example \cite[Thm.VI.2.3, p.240]{RY},  and hence 
$$ f(R(\tau_n)+(X_\ast-X)^{\Delta\tau_n}(t)) + \int_0^{t\wedge\Delta\tau_n} \frac{1}{2} f''(R(\tau_n)+(X_\ast-X)(s)) ds,\quad t\geq 0$$
is a martingale. Hence, $n\in I$.
For odd $n\in\mathbb N$ similar arguments show that $n\in I$ and thus $I=\mathbb N$. 
The tower property yields that $(R(t) - \int_0^t \frac{1}{2}f''(R(s)) ds)_{t\geq 0}$ is a 
martingale and hence $R$ is an $[0,\mu]$-valued process with $[R](t)=[W](t)=t$ and reflecting 
boundaries and hence a doubly reflected Brownian motion, cf.~\cite[p. 366]{ethier.kurtz.86}.
\end{proof}Next we show that there are no isolated trades, i.e.\ there are
no trades of Type~{Id} or Type~{IId}.

\begin{corollary}
\label{no isolated trades}We have no isolated trades, i.e., 
$P(\Theta_{\mathrm{I}_d}\cup\Theta_{\mathrm{II}_d}=\emptyset)=1$. 
In particular, we have 
\begin{align*}
P(\Theta_{\mathrm{II}} = \Theta_{\mathrm{II}_c}) &= 1.
\end{align*}
\end{corollary}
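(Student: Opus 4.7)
My plan is to combine the characterization of $\Theta$ as the zero set of a doubly reflected Brownian motion (Theorem~\ref{doubly reflected BM}) with the strong Markov property to show that every trading time is a right-accumulation point of $\Theta$. Note that $\Theta_{\mathrm{I}_d} \cup \Theta_{\mathrm{II}_d}$ consists exactly of those $t > 0$ for which $\Upsilon(t) < t < \Xi(t)$, i.e.\ points that are isolated in $\Theta$ (no trades in either a left- or a right-neighbourhood). Once I know that every point of $\Theta$ is a right-accumulation point almost surely, no isolated point can exist, and the first equality follows.

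To establish right-accumulation, set $R := \alpha - W$, which by Theorem~\ref{doubly reflected BM} is a doubly reflected Brownian motion on $[0,\mu]$ with $\Theta = \{t : R(t) = 0\}$. For each rational $q \geq 0$ introduce the stopping time $T_q := \inf\{t \geq q : R(t) = 0\}$. By the strong Markov property, conditional on $\{T_q < \infty\}$ the shifted process $(R(T_q + s))_{s \geq 0}$ is again a doubly reflected Brownian motion, started from $0$. Near $0$ this process behaves like a one-sided reflected Brownian motion (it takes strictly positive time to reach the upper boundary $\mu$), and $0$ is a regular point for the latter; this is classical and follows, e.g., from L\'evy's identity $|B| \stackrel{d}{=} M - B$ which guarantees that zeros accumulate immediately on the right of $0$. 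Hence almost surely $T_q$ is a right-accumulation point of $\Theta$, and summing over the countably many rationals, this property holds simultaneously at every $T_q$ on a single full-measure event.

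Finally, any left-isolated zero $t$ of $R$ must coincide with some $T_q$: if $R > 0$ on $(t-\varepsilon, t)$ then $T_q = t$ for every rational $q \in (t-\varepsilon, t)$. Combined with the right-accumulation just established at every $T_q$, this rules out isolated zeros of $R$ and proves $P(\Theta_{\mathrm{I}_d} \cup \Theta_{\mathrm{II}_d} = \emptyset) = 1$. The second assertion is then immediate: Proposition~\ref{no-IIab} eliminates $\Theta_{\mathrm{II}_a}$ and $\Theta_{\mathrm{II}_b}$ pathwise, so after discarding the null set on which Type~IId occurs one obtains $\Theta_{\mathrm{II}} = \Theta_{\mathrm{II}_c}$ almost surely. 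The only nontrivial ingredient is the regularity of $0$ for the doubly reflected motion started at $0$, which I do not expect to present any real obstacle; everything else is a countable-union argument.
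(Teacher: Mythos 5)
Your proof is correct, and it shares the paper's key reduction: by Theorem~\ref{doubly reflected BM} everything comes down to showing that the zero set of the doubly reflected Brownian motion $\alpha-W$ on $[0,\mu]$ almost surely has no isolated points, after which Proposition~\ref{no-IIab} disposes of the second assertion exactly as you say. Where you genuinely diverge is in how that fact is established. The paper unfolds the doubly reflected process into a standard Brownian motion via the folding construction of \cite{KS}, Section~2.8.C, so that trading times become visits of an ordinary Brownian motion to the lattice $\{2z\mu : z\in\mathbb{Z}\}$, and then quotes \cite{KS}, Chapter~2, Theorem~9.6 for the absence of isolated points in such level sets. You instead run the classical proof of that very theorem directly on the reflected process: rational stopping times $T_q$, the strong Markov property, regularity of $0$, and a countable union. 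The paper's route is shorter because it delegates the probabilistic content to a quotable theorem; yours is self-contained and avoids the unfolding, at the cost of having to justify (i) the strong Markov property of the doubly reflected Brownian motion, which follows from well-posedness of the martingale problem of Definition~\ref{d:DRBM} (cf.\ \cite{ethier.kurtz.86}), and (ii) the regularity of $0$, for which your localization to a one-sided reflected Brownian motion before the first hitting time of $\mu$ is the right idea and does go through --- it is essentially the same identification via L\'evy's theorem that the paper already uses inside the proof of Theorem~\ref{doubly reflected BM}. Both arguments are sound.
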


\begin{proof}
By Theorem~\ref{doubly reflected BM} we have to show that a doubly
reflected Brownian motion on $[0,\mu]$ has $P$-a.s.\ no isolated zeros. Using
the construction in~\cite{KS}, Section 2.8.C, we see that this is equivalent to show that a standard Brownian motion
has $P$-a.s.\ no isolated times in the set $\{2z\mu:z\in\mathbb{Z}\}$. This is
a consequence of~\cite{KS}, Theorem 9.6, Chapter 2.
\end{proof}

\subsection{Stopping times and trading times}

So far we have defined trading times pathwise: $t\in \mathbb{R}^{+}$ is a
trading time for $w\in \mathbf{W}$ if $w(t)=\alpha (t,w)$. We say a random
time $\tau $ is a trading time, if $W_{\tau}=\alpha (\tau)$ a.s. Next, we will give
some examples of trading times which are also stopping times and we will
show that a stopping time which is a trading time is not of Type~{Ib}.

\begin{lemma}
\label{stopping times not 1b} Let $\tau$ be a stopping time such that $P(\tau\in\Theta)=1$. Then $\Xi(\tau)=\tau$ $P$-a.s. In particular, $P(\tau\in\Theta_{\mathrm{I}_b})=0$.
\end{lemma}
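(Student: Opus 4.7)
The plan is to translate the claim into a statement about a doubly reflected Brownian motion via Theorem~\ref{doubly reflected BM}, then apply the strong Markov property at $\tau$ and exploit the local behaviour near~$0$. By that theorem, $R := \alpha - W$ is a doubly reflected Brownian motion on $[0,\mu]$ and $\Theta = \{t \geq 0 : R_t = 0\}$, so the hypothesis $P(\tau \in \Theta)=1$ is equivalent to $R_\tau = 0$ almost surely.

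A doubly reflected Brownian motion is a strong Markov process, as the law of the solution to the martingale problem in Definition~\ref{d:DRBM} is uniquely determined by its initial distribution (the remark after that definition). Hence, conditionally on $\mathcal F_\tau$, the shifted process $\tilde R_s := R_{\tau+s}$, $s \geq 0$, is a doubly reflected Brownian motion on $[0,\mu]$ started at $\tilde R_0 = 0$. If I can show that for such a process the hitting time of $\{0\}$ from the right is zero a.s., i.e.\ $\inf\{s>0:\tilde R_s = 0\} = 0$ almost surely, then
\begin{equation*}
\Xi(\tau) \;=\; \tau + \inf\{s>0 : R_{\tau+s}=0\} \;=\; \tau
\quad P\text{-a.s.}
\end{equation*}

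The key technical step is therefore this zero-accumulation statement, which I would handle by the same L\'evy reflection identity already used in the proof of Theorem~\ref{doubly reflected BM}. Let $\zeta := \inf\{s>0 : \tilde R_s = \mu\}$; by continuity $\zeta > 0$ almost surely, and on $[0,\zeta]$ the boundary at $\mu$ is never active, so $\tilde R|_{[0,\zeta]}$ agrees in law with $|B||_{[0,\zeta]}$ for some standard Brownian motion $B$ with $B_0 = 0$. Since $0$ is a right accumulation point of the zero set of a Brownian motion starting at $0$ almost surely, the same holds for $|B|$ and hence for $\tilde R$. For the \emph{in particular} assertion, observe that by definition $\Theta_{\mathrm{I}_b} \subseteq \{t : \Xi(t) > t\}$, so $\{\tau \in \Theta_{\mathrm{I}_b}\} \subseteq \{\Xi(\tau) > \tau\}$, which is a $P$-null set by what was just proved.

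The main obstacle is not computational: it lies in articulating cleanly the strong Markov property of the doubly reflected Brownian motion and invoking the correct local reflection identity, so that one really is entitled to reduce the question to the classical Brownian fact that $0$ is a right accumulation point of zeros of a Brownian motion starting at~$0$. Once this reduction is in place, the rest is bookkeeping.
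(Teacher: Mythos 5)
Your proof is correct, but it takes a genuinely different route from the paper's. The paper does not pass through the doubly reflected Brownian motion at all: it applies the strong Markov property to $W$ itself at $\tau$, sets $B(t):=W(t+\tau)-W(\tau)$, lets $\sigma_\epsilon$ be the argmax of $B$ on $[0,\epsilon]$, and observes that $B(\sigma_\epsilon)>0$ a.s., so at time $\tau+\sigma_\epsilon$ the path sits at a fresh running maximum strictly above $W(\tau)=\alpha(\tau)$; by the structure of $\alpha$ (Proposition~\ref{structure alpha}) this forces $W=\alpha$ there, so $(\tau,\tau+\epsilon]$ contains a trading time for every $\epsilon>0$. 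That argument is elementary: it needs only that the maximum of a Brownian motion on $[0,\epsilon]$ is a.s.\ positive, and it sidesteps the strong Markov property of $\alpha-W$ entirely. Your route instead leans on Theorem~\ref{doubly reflected BM}, the well-posedness of the martingale problem (to get the strong Markov property of the reflected process at $\tau$, via the Ethier--Kurtz references), and the localization identifying the process with $|B|$ before it hits $\mu$; this is heavier machinery but buys generality -- it is essentially the same mechanism as Corollary~\ref{no isolated trades} and would apply verbatim to any center-price model whose reflected process has a zero set without points isolated from the right. Both the reduction $\Xi(\tau)=\tau+\inf\{s>0:R_{\tau+s}=0\}$ and the deduction of the \emph{in particular} clause from $\Theta_{\mathrm{I}_b}\subseteq\{t:\Xi(t)>t\}$ are fine.
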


\begin{proof}
  Let $\epsilon >0$ and define $B(t):=W(t+\tau)-W(\tau)$, $t\geq0$. 
  Then, $B$ is a standard Brownian motion. Let $\sigma_\epsilon$ be the 
  time where $B$ attains its maximum on $[0,\epsilon]$. Then, 
  $\sigma_\epsilon$ is measurable and $B(\sigma_\epsilon)>0$ $P$-a.s. 
  Hence $\tau+\sigma_\epsilon$ is a trading time and, consequently, 
  $P(\nexttrade(\tau)=\tau)=1$.
\end{proof}Corollary \ref{type II trades as stopping times} together with
Remark \ref{r: tau stopping times} reveals that the Type~{II} trades can be
enumerated by stopping times. Lemma \ref{stopping times not 1b} shows that
the Type~{Ib} trades are not stopping times. This leaves the question
whether stopping times can be of Type~{Ia} or Type~{Ic}, which is, indeed,
the case.

First entry times of high levels are actually Type~{Ia} trades.

\begin{example}
Let $x > \mu$ and $\gamma_x:=\inf\{t\geq 0: w(t) = x\}$. Then we have 
\begin{equation*}
P(\gamma_x \text{ is a Type~{Ia} trade}) = 1.
\end{equation*}
\end{example}

\begin{proof}
  By Proposition~\ref{structure alpha} we have $\alpha(t) \leq w^*(0,t)$ and
  clearly $w(t)\leq \alpha(t)$ for any $t\in[\tau_1,\infty)$. Since 
  $$\tau_1 = \inf\{t>0: w(t) = w^*(0,t)-\mu\}$$
 we have $w^*(0,\tau_1)\leq \mu$. Hence, we have $\tau_1\leq \gamma_x$. Thus we get
  $$ w^*(0,\gamma_x)=x=w(\gamma_x) \leq \alpha(\gamma_x)\leq w^*(0,\alpha_x). $$
 and hence $\gamma_x\in\Theta$.
 
  Let $\epsilon\in(0,\gamma_x)$. Then, there is $s_\epsilon\in(\gamma_x-\epsilon,\gamma_x)$ 
  such that $w(s_\epsilon) = w^*(0,s_\epsilon)>\mu$. Hence, $\alpha(s_\epsilon)=w(s_\epsilon)$ and 
  we have $s_\epsilon\in\Theta$. This implies that $\lasttrade(\gamma_x) = \gamma_x$. 
  Lemma~\ref{stopping times not 1b} yields that $\nexttrade(\gamma_x)=\gamma_x$ $P$-a.s. Hence, $\gamma_x\in\Theta_{\mathrm{I}_a}$ $P$-a.s.
\end{proof}

\begin{example}\label{ExIc}
There is a stopping time $\eta$ such that 
\begin{equation*}
P(\eta \text{ is a Type~{Ic} trade}) = 1.
\end{equation*}
\end{example}

\begin{proof}
 For a stopping time $\eta$ define the new stopping times
  \begin{align*}
    \gamma_0(\eta) &:= \inf\{t\geq \eta: (\alpha-W)(t) = 0\}, \\
    \gamma_1(\eta) &:= \inf\{t\geq \eta: (\alpha-W)(t) = \mu/2\}, \\
    \gamma_2(\eta) &:= \inf\{t\geq \eta: (\alpha-W)(t) \in\{0,\mu\}\}.
  \end{align*}
  Clearly, $\gamma_j(\eta)$ is $P$-a.s.\ finite for any finite stopping time $\eta$, $j=0,1,2$. Moreover, $$P((\alpha-W)(\gamma_2(\gamma_1(\eta))) = 0) = 1/2$$ 
  by symmetry and the Markov property for any stopping time $\eta$. We have
$$ 
A(\eta) := \{ \gamma_2(\gamma_1(\gamma_0(\eta))):(\alpha-W)(\gamma_2(\gamma_1(\gamma_0(\eta)))) = 0 \} 
\subseteq \Theta_{\mathrm{I}_c}
$$
for any finite stopping time $\eta$. 
  
  Define $\eta_0 := \tau_2$ where $\tau_2$ is given in Definition \ref{type II trades}. 
  Observe that $w(\tau_2)\neq \alpha(\tau_2)$. Define recursively
  $$ \eta_{n+1} := 
   \begin{cases}
      \eta_n & \text{if $\alpha(\eta_n) = w(\eta_n)$}, \\
      \gamma_2(\gamma_1(\gamma_0(\eta_n))) & \text{otherwise.}
   \end{cases}
  $$
 Then, $(\eta_n)_{n\in\mathbb N}$ converges $P$-a.s.\ in finitely many steps. 
 Denote $\eta_\infty := \lim_{n\rightarrow\infty}\eta_n$. Clearly, $\alpha(\eta_\infty) = w(\eta_\infty)$ $P$-a.s. 
 Moreover, denote $\eta_- := \eta_{\{\sup\{n\in\mathbb N:\eta_n\neq \eta_\infty\}\}}$. 
 Then, we have $\gamma_2(\gamma_1(\gamma_0(\eta_-))) = \eta_\infty$. Consequently, 
  $$ P(\eta_\infty \in \Theta_{\mathrm{II}_c}) = 1. $$
\end{proof}

\section{Trading excursions}

\subsection{The trading excursion process}

Theorem~\ref{doubly reflected BM} shows that trading times correspond to the
zeroes of the Markov processes $X$ and $Y$ defined by 
\begin{equation}
X=\alpha -W,\quad Y=W-\beta .
\end{equation}
This allows to study trading times by using \emph{excursion theory}. Let us
recapitulate briefly the terminology and notation of excursion theory, for
background and more details we refer the reader to \cite[Ch.XII]{RY} and 
\cite{Blu}.

For $w\in\mathbf{W}$ define 
\begin{equation}
R(w)=\inf\{t>0:w(t)=0\}.
\end{equation}
Let $U^+$ denote all nonnegative functions $w$ such that $0<R(w)<\infty$,
let $\delta$ denote the function that is identically zero, and set $U^+_\delta=U^+\cup\{\delta\}$, and let $\mathcal{U}^+_\delta$ denote the trace
of the Borel $\sigma$-field on $\mathbf{W}$ in $U^+_\delta$.

First we note that $X$ is a continuous semi-martingale, namely doubly
reflected Brownian motion on $[0,\mu ]$. Thus it admits a local time at zero
that satifies the Tanaka formula, 
\begin{equation}
L_{t}(X)=|X_{t}|-|X_{0}|-\int_{0}^{t}\sgn(X_{s})dX_{s},\quad t\geq 0.
\end{equation}
Consider the inverse local time process, 
\begin{equation}
\tau _{s}(X)=\inf \{t\geq 0:L_{t}(X)\geq s\},\quad s>0.
\end{equation}

\begin{definition}
The \emph{trading excursion process for the ask-side} is the process $(\overline{e}_{s},s>0)$, 
i.e. the zero-excursion process for $X$. The \emph{trading excursion process for the bid-side} is the 
process $(\underline{e}_{s},s>0)$ i.e. the zero-excursion process for $Y$.
\end{definition}

This means, that $\overline{e}$ and $\underline{e}$ are defined 
on $\Omega\times\mathbb R_+$ and take values in $U_\delta^+$ as
follows, see \cite[Def.XII.2.1, p.480]{RY}:

\begin{enumerate}
\item If $\Delta \tau _{s}(X)>0$, then $\overline{e}_{s}(w)$ is the map 
\begin{equation}
r\mapsto \overline{e}_{s}(r,w)=I_{[r\leq \Delta \tau _{s}(w)]}X_{\tau
_{s-}(w)+r}(w),
\end{equation}

\item if $\Delta \tau _{s}(X)=0$, then $\overline{e}_{s}(w)=\delta .$
\end{enumerate}

Thus $\overline{e}$ and $\underline{e}$ take values in the function space $U_\delta^+$. Illustrations for the trading excursion process are given in
Figures~\ref{Fig:AskIs},\ref{Fig:AskIx},\ref{Fig:AskIe} for a Type~{I} trade
resp.\ in Figures~\ref{Fig:AskIIs},\ref{Fig:AskIIx},\ref{Fig:AskIIe} for a
Type~{II} trade.

\begin{theorem}
The bid and ask trading excursion processes are Poisson point processes.
\end{theorem}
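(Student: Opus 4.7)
The plan is to reduce the claim to a direct application of It\^o's excursion theorem (see, e.g., \cite[Thm.~XII.2.4, p.~481]{RY} or \cite{Blu}). By Theorem~\ref{doubly reflected BM}, the process $X=\alpha-W$ is a doubly reflected Brownian motion on $[0,\mu]$, and $\overline\Theta=\{t:X_t=0\}$. By definition, $\overline e$ is the zero-excursion process of~$X$, and analogously $\underline e$ is the zero-excursion process of the doubly reflected Brownian motion $Y=W-\beta$. Thus it suffices to verify the hypotheses of It\^o's theorem for these two processes separately.

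The first step is to recall that doubly reflected Brownian motion on $[0,\mu]$ is a Feller (hence strong Markov) process: this is built into its characterization via the martingale problem with Neumann boundary conditions that appears in Definition~\ref{d:DRBM}, together with \cite[Thms.~4.4.1, 4.5.4, 8.1.1]{ethier.kurtz.86}. Next I would check that $0$ is a regular recurrent point for $X$. Regularity, i.e.\ $P_0(\inf\{t>0:X_t=0\}=0)=1$, follows locally near $0$ from L\'evy's identity used already in the proof of Theorem~\ref{doubly reflected BM}: before the first hitting of $\mu$, the process $X$ behaves like $|B|$ for a Brownian motion~$B$, and $0$ is regular for $|B|$. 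Recurrence is immediate since the state space $[0,\mu]$ is compact and the process is non-degenerate; equivalently, the doubly reflected Brownian motion has a finite invariant (uniform) measure on $[0,\mu]$, so every point is positive recurrent.

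With these two ingredients in place, It\^o's theorem applies: the Tanaka formula gives a continuous local time $L_t(X)$ at~$0$ and its right-continuous inverse $\tau_s(X)=\inf\{t\ge 0:L_t(X)\ge s\}$; the point process $\overline e=(\overline e_s)_{s>0}$ with values in $(U^+_\delta,\calU^+_\delta)$, defined as in the excerpt, is then a Poisson point process on the local-time clock, with characteristic measure the It\^o excursion measure $n^{X}$ associated to~$X$. The identical argument applied to~$Y$ yields that $\underline e$ is a Poisson point process with characteristic measure $n^{Y}$, finishing the proof.

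The only non-routine point is the verification of regularity and recurrence of~$0$ for doubly reflected Brownian motion; everything else is a bookkeeping exercise in which definitions are aligned with the hypotheses of the standard excursion theorem. Note that, by the symmetry of $X$ and $Y$ about the midpoint~$\mu/2$, the two characteristic measures $n^X$ and $n^Y$ coincide; this symmetry will be convenient in the explicit Laplace-transform computations in the following section.
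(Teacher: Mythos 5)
Your argument is correct and follows essentially the same route as the paper: both reduce the claim, via Theorem~\ref{doubly reflected BM}, to the general excursion theorem for Markov processes (the paper cites \cite[Thm.3.18, p.95]{Blu} directly). The only difference is that you additionally spell out the verification of regularity and recurrence of the boundary point $0$ for the doubly reflected Brownian motion, which the paper leaves implicit.
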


\begin{proof}
The previous proposition says that ask trading excursions for $W$ correspond to excursions from zero for $\alpha-W$. The process $\alpha-W$ is a doubly reflected Brownian motion, which is a Markov process.
We can apply \cite[Thm.3.18, p.95]{Blu}. The same argument holds for $W-\beta$.
\end{proof}
\begin{figure}[H]
\includegraphics{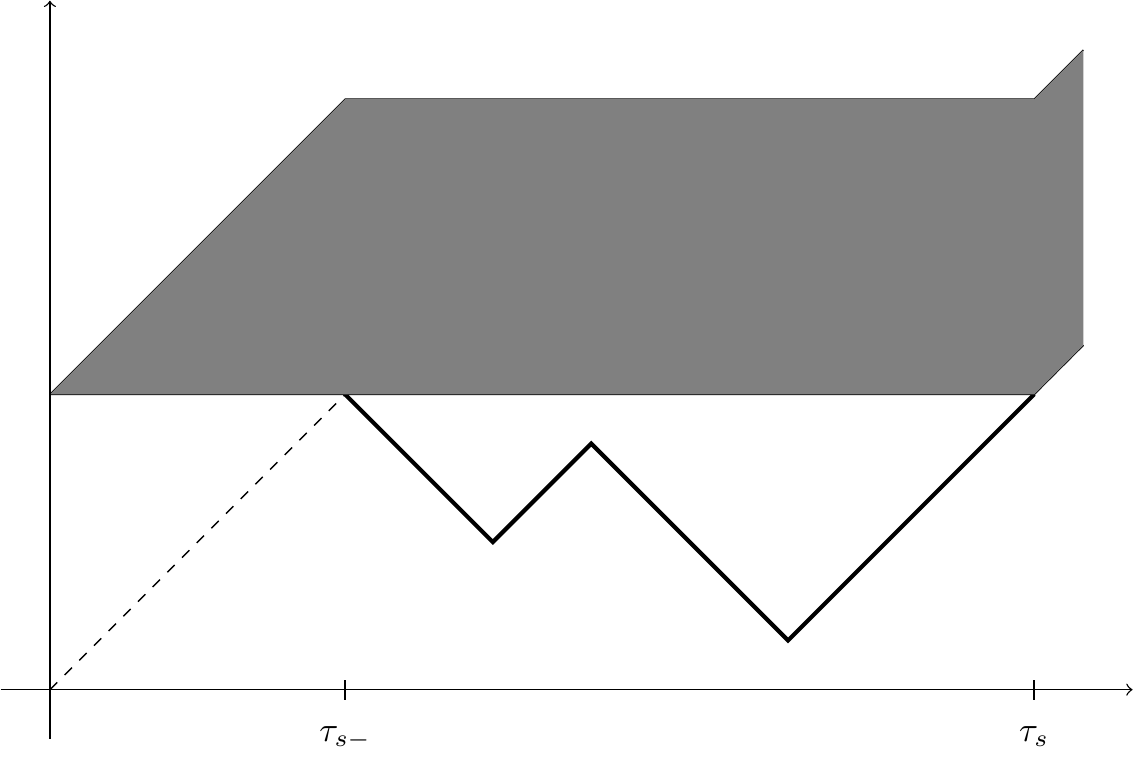}
\caption{Ask trade of Type I}
\label{Fig:AskIs}
\end{figure}
\begin{figure}[H]
\includegraphics{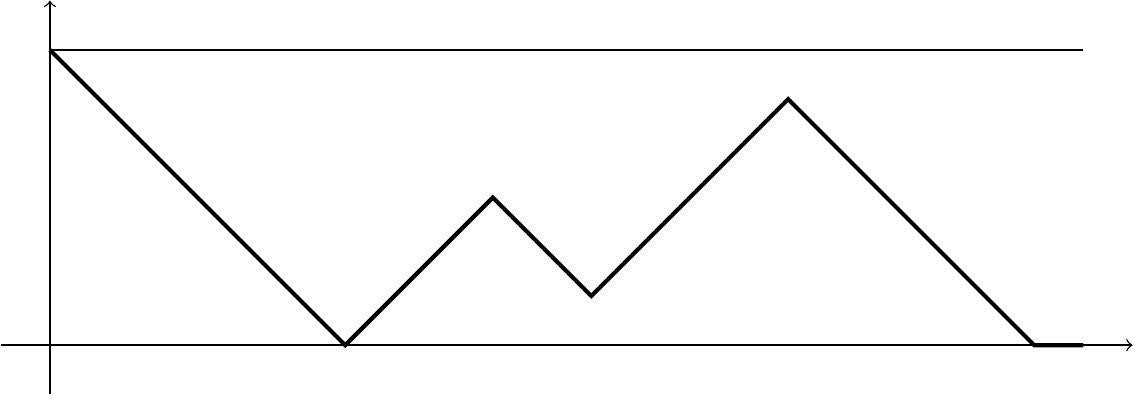}
\caption{Corresponding path of $\protect\alpha-W$}
\label{Fig:AskIx}
\end{figure}
\begin{figure}[H]
\includegraphics{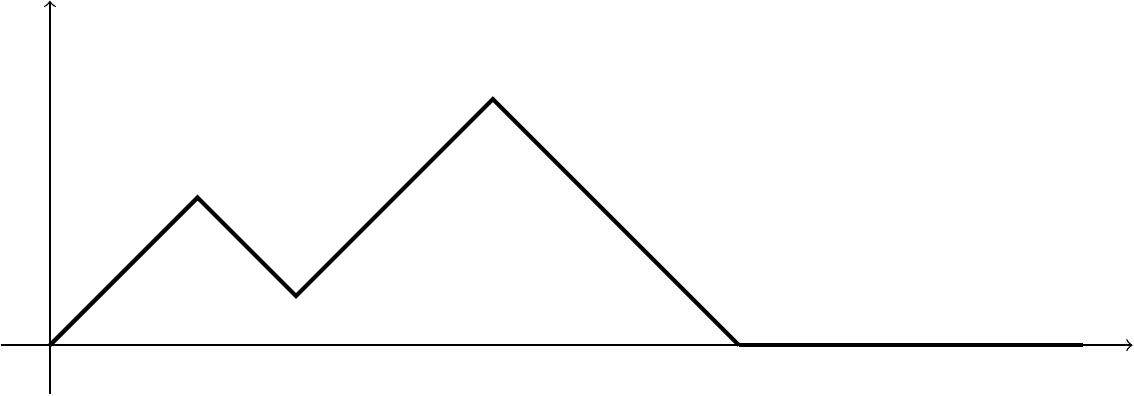}
\caption{Corresponding excursion $\overline{e}_s$}
\label{Fig:AskIe}
\end{figure}

\begin{figure}[H]
\includegraphics{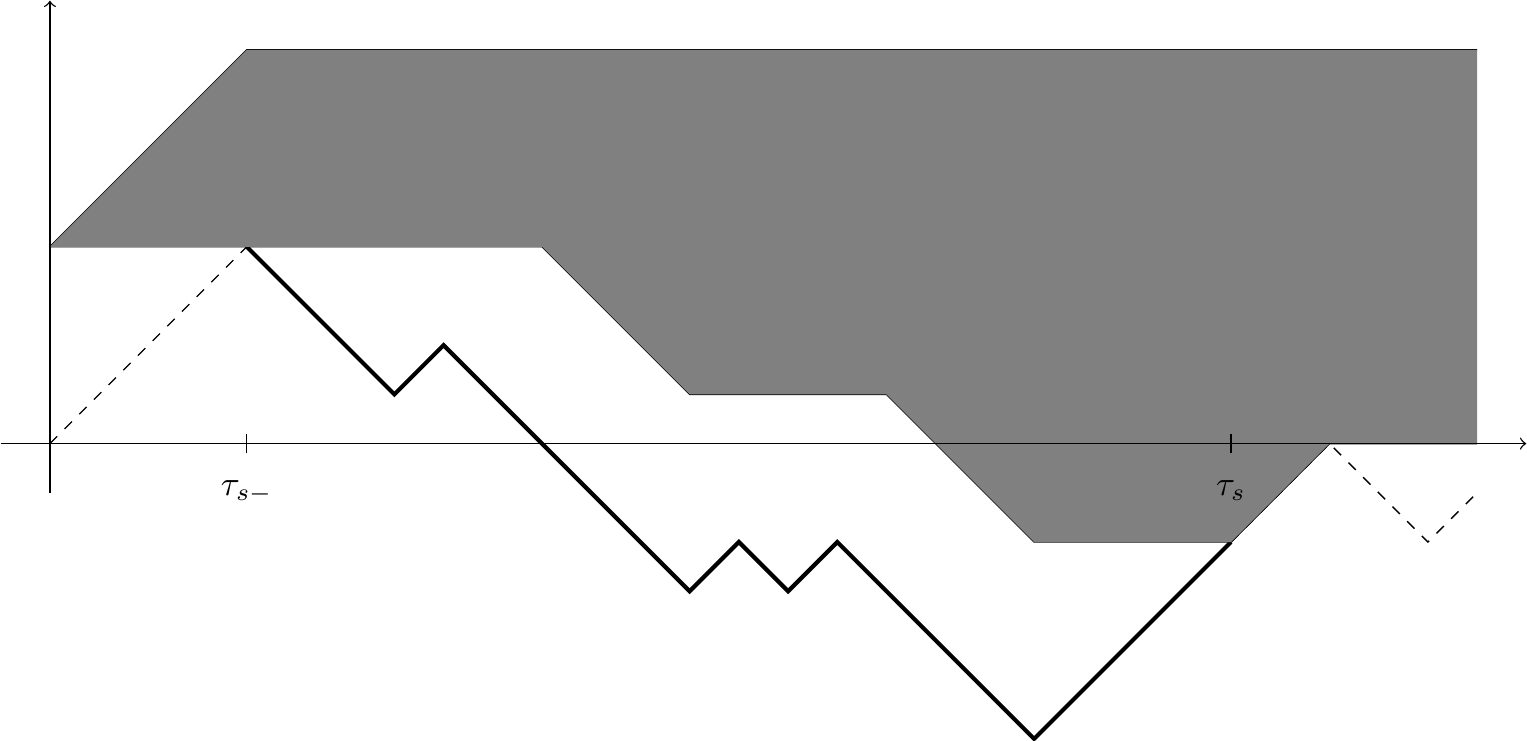}
\caption{Ask trade of Type II}
\label{Fig:AskIIs}
\end{figure}
\begin{figure}[H]
\includegraphics{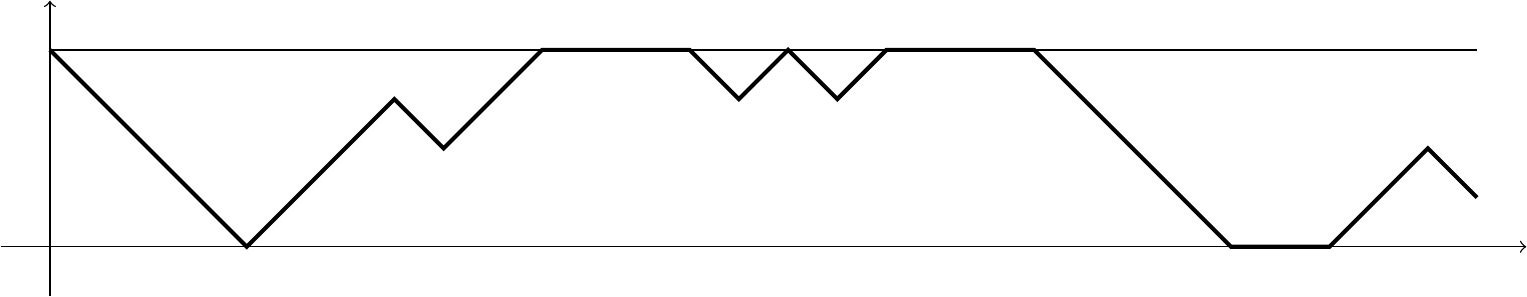}
\caption{Corresponding path of $\protect\alpha-W$}
\label{Fig:AskIIx}
\end{figure}
\begin{figure}[H]
\includegraphics{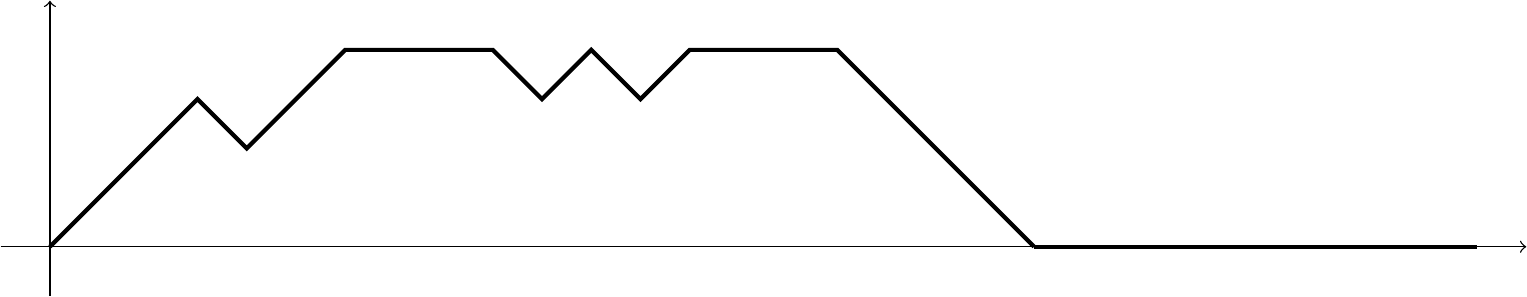}
\caption{Corresponding excursion $\overline{e}_s$}
\label{Fig:AskIIe}
\end{figure}

\subsection{Description of the trading excursion measure\label{x-measure}}
For any Poisson point process there exists an intensity measure.
\begin{definition}
Let us denote the intensity measures for the bid and ask excursion processes
by $\underline{n}$ and $\overline{n}$.
\end{definition}
The measures $\nbid$ and $\nask$ are $\sigma$-finite measures on $\mathcal{U}_\delta^+$,
and satisfy
\begin{equation}
\nbid(\Gamma)=\frac{1}{t}E[\underline{N}_t^\Gamma],\quad
\nask(\Gamma)=\frac{1}{t}E[\overline{N}_t^\Gamma],\quad t>0,
\end{equation}
where
\begin{equation}
\underline{N}_t^\Gamma=\sum_{0<s<t}I_{\Gamma}(\underline{e}_s),\quad
\overline{N}_t^\Gamma=\sum_{0<s<t}I_{\Gamma}(\overline{e}_s),\quad \Gamma\in \mathcal U_\delta^+.
\end{equation}

\begin{remark}
In the following we focus on $\underline{n}$ because $\overline n = \underline n$.
\end{remark}

Let us recall a convenient notation from Williams \cite[Sec.5.0, p.49]{PWM},
for the integral of a measurable function $F:U_\delta^+:\to\mathbb R$ with respect to the
measure $\underline{n}$ and a set $\Gamma\in\mathcal{U}_\delta^+$, 
\begin{equation}
\underline{n}(F)=\int F(u)\underline{n}(du),\quad \underline{n}(F;\Gamma)=\int_\Gamma F(u)\underline{n}(du).
\end{equation}
For better readability, we shall also write $\underline{n}[F]$ and $\underline{n}[F;\Gamma]$ instead of $\underline{n}(F)$ and $\underline{n}(F;\Gamma)$ when the expressions for $F$ or $\Gamma$ are more involved.
Furthermore, let us introduce for $x>0$ and $u\in C(\mathbb{R}_+;\mathbb{R})$
the hitting time 
\begin{equation}
T_x(u)=\inf\{t>0:u(t)=x\}.
\end{equation}
We can now give a description of the trading excursion measure that is
inspired by Williams' description of the Ito measure. Pick three independent
processes, namely two $\BES^3(0)$ processes $\rho$ and $\tilde\rho$, and a
standard Brownian motion $b$ (a $\BES^3(0)$ processes is a process whose law coincides with the law of $|B|$ where $B$ is a three dimensional standard Brownian motion starting in zero.). For all $x\in(0,\mu)$ we define a process $Z^x$
by 
\begin{equation}  \label{Zx}
Z^x=\left\{ 
\begin{array}{ll}
\rho_t & 0\leq t\leq T_x(\rho), \\ 
x-\tilde\rho_{t-T_x(\rho)} & T_x(\rho)<t\leq T_x(\rho)+T_x(\tilde\rho), \\ 
0 & t>T_x(\rho)+T_x(\tilde\rho),\end{array}
\right.
\end{equation}
and we define 
\begin{equation}  \label{Zmu}
Z^\mu=\left\{ 
\begin{array}{ll}
\rho_t & 0\leq t\leq T_\mu(\rho), \\ 
\mu-|b_{t-T_\mu(\rho)}| & T_\mu(\rho)<t\leq T_\mu(\rho)+T_\mu(|b|), \\ 
0 & t> T_\mu(\rho)+T_\mu(|b|).\end{array}
\right.
\end{equation}
Let us introduce length $R$ and height $H$ for excursions $u\in U_\delta^+$ by 
\begin{equation}
R(u)=\inf\{t>0:u(t)=0\},\quad H(u)=\sup\{u(t):0\leq t\leq R(u)\},
\end{equation}
and 
\begin{equation}
R_I(u)=R(u)I_{H(u)<\mu},\quad R_{II}(u)=R(u)I_{H(u)\geq\mu}.
\end{equation}
So, $R_I(u)$ is the length of a trading excursion that ends with a Type~{Ic}
trade, and zero otherwise, whereas $R_{II}(u)$ is the length of a trading
excursion that ends with a Type~{II} trade, and zero otherwise.

\begin{theorem}
\label{ThmWilT} For any $\Gamma\in\mathcal{U}_\delta^+$ 
\begin{equation}
\underline{n}(\Gamma)=\frac12\int_0^\mu
P[Z^x\in\Gamma]x^{-2}dx+\frac1{2\mu}P[Z^\mu\in\Gamma].
\end{equation}
\end{theorem}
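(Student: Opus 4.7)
The plan is to split the excursion measure according to the height of the excursion, writing any $\Gamma\in\mathcal{U}_\delta^+$ as $\Gamma=\Gamma_1\sqcup\Gamma_2$ with $\Gamma_1:=\Gamma\cap\{H<\mu\}$ and $\Gamma_2:=\Gamma\cap\{H\geq\mu\}$, and then invoke Williams' description of the It\^o excursion measure on each piece. Let $\underline{n}_0$ denote the It\^o excursion measure from $0$ of a Brownian motion reflected at $0$ (singly reflected), normalized so that the associated local time is the semimartingale local time used in this paper.

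For the first piece, I would argue that under the strong Markov property any excursion $u$ of $Y=W-\beta$ with $H(u)<\mu$ never feels the upper reflecting barrier at $\mu$. Hence the restrictions $\underline{n}|_{\{H<\mu\}}$ and $\underline{n}_0|_{\{H<\mu\}}$ coincide. Williams' description of $\underline{n}_0$ (see, e.g., \cite[Ch.XII.4]{RY}) states that under $\underline{n}_0$ the height $H$ has the $\sigma$-finite marginal $\tfrac12 x^{-2}dx$ on $(0,\infty)$, and that conditionally on $H=x$ the excursion is distributed as a $\BES^3(0)$ run up to $T_x$ concatenated with an independent $\BES^3(0)$ (traversed backwards from $x$ to $0$); this is precisely the law of $Z^x$ from \eqref{Zx}. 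Integrating yields the first term $\tfrac12\int_0^\mu P[Z^x\in\Gamma]\,x^{-2}dx$.

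For the second piece I would proceed in two sub-steps. First, $\underline{n}(\{H\geq\mu\})=\underline{n}_0(\{H\geq\mu\})=\int_\mu^\infty \tfrac{1}{2}x^{-2}dx=\tfrac{1}{2\mu}$, again because excursions are indistinguishable from singly reflected ones up to the time $T_\mu$. Second, I decompose a generic excursion in $\{H\geq\mu\}$ at $T_\mu$ using the strong Markov property for excursion measures: the pre-$T_\mu$ piece is a $\BES^3(0)$ run until it first hits $\mu$ (by Williams' ascending-part description, which under conditioning $H\geq\mu$ is exact), while the post-$T_\mu$ piece starts at $\mu$ and, until the excursion returns to $0$, coincides with a Brownian motion reflected at $\mu$ from above. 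By L\'evy's classical identity \cite[Thm.VI.2.3, p.240]{RY}, this reflected piece is distributed as $\mu-|b_\cdot|$ stopped at $T_\mu(|b|)$, independent of the ascending part. Stitching these two independent pieces together reproduces the process $Z^\mu$ defined in \eqref{Zmu}, and multiplying by the total mass $\tfrac{1}{2\mu}$ yields the second term.

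The hard part will be making the conditional decomposition on $\{H\geq\mu\}$ fully rigorous under the $\sigma$-finite measure $\underline{n}$: one must argue that $\underline{n}|_{\{H\geq\mu\}}$ is a finite measure whose normalization is a \emph{probability} measure whose marginal at $T_\mu$ and post-$T_\mu$ evolution factor as claimed. A clean route is to test the identity against a rich enough class of bounded measurable functionals $F:U_\delta^+\to\mathbb{R}$ of the form $F(u)=F_1(u_{\cdot\wedge T_\mu(u)})F_2(u_{(T_\mu(u)+\cdot)\wedge R(u)})$, use the strong Markov property of the doubly reflected Brownian motion at the first excursion entry to $\mu$ combined with the compensation formula for Poisson point processes, and check agreement with the two ingredients $(\rho,\tilde\rho,b)$ via independence. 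Once this holds on such a separating class, the formula in the theorem follows by a monotone class argument.
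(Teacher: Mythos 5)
Your proposal follows essentially the same route as the paper: the paper's proof simply combines Williams' description of the It\^o measure conditioned on the excursion height (for the $x<\mu$ contribution) with the decomposition of the excursion straddling the first hitting time of level $\mu$ (for the atom at $H=\mu$), citing \cite{RY}, \cite{Rog} and \cite{YY}. Your write-up reconstructs exactly these two ingredients, adding the (correct) observations that excursions of the doubly reflected process agree with those of the singly reflected one up to $T_\mu$ and that the post-$T_\mu$ piece is Brownian motion reflected at $\mu$, hence $\mu-|b|$ by L\'evy's identity, so it is a correct and somewhat more detailed version of the same argument.
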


\begin{proof}
This is a combination of two results about Williams' description of the Ito measure,
namely the excursion conditioned on a fixed height, see \cite[Thm.XII.4.5, p.499]{RY},
and the decomposition of the excursion straddling
the first hitting time of level $\mu$ as presented in  \cite[Prop.3.3, p.237]{Rog} and \cite[6.8~(a), p.75]{YY}.
\end{proof}

\begin{corollary}
\label{CorWilT} Let $F:C(\mathbb{R}_+;\mathbb{R})\to\mathbb{R}$ be a
non-negative measurable function. Then 
\begin{equation}
\underline{n}(F)=\frac12\int_0^\mu E[F(Z^x)]x^{-2}dx+\frac1{2\mu}E[F(Z^\mu)].
\end{equation}
The formula holds also true if $F$ is real- or complex-valued and $\underline{n}(|F|)<\infty$, or equivalently, if 
\begin{equation}
\int_0^\mu E[|F(Z^x)|]x^{-2}dx+E[|F(Z^\mu)|]<\infty.
\end{equation}
\end{corollary}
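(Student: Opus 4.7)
The plan is to follow the standard measure-theoretic machine, starting from Theorem~\ref{ThmWilT} and extending to non-negative measurable, then to signed/complex integrable functions. Theorem~\ref{ThmWilT} is exactly the statement of the corollary applied to indicator functions $F=I_{\Gamma}$, $\Gamma\in\mathcal U_\delta^+$, since $E[I_\Gamma(Z^x)]=P[Z^x\in\Gamma]$ and likewise for $Z^\mu$. Both sides of the proposed identity are linear in $F$ over the non-negative cone, so the formula extends immediately to non-negative simple functions on $U_\delta^+$.

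Next I would upgrade from simple to arbitrary non-negative measurable $F$ by monotone convergence. Choose an increasing sequence $F_n\uparrow F$ of non-negative simple functions. Applying monotone convergence on the measure $\underline n$ gives $\underline n(F_n)\uparrow \underline n(F)$, while monotone convergence for the expectations $E[F_n(Z^x)]\uparrow E[F(Z^x)]$ (for each $x$) together with another monotone convergence for the outer integral $\int_0^\mu(\cdot)x^{-2}dx$ handles the first term, and scalar monotone convergence handles $E[F_n(Z^\mu)]\uparrow E[F(Z^\mu)]$. Passing to the limit on both sides of the identity for $F_n$ yields the identity for $F$.

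For the second claim, apply the non-negative case to $|F|$ to obtain
\begin{equation}
\underline n(|F|)=\frac12\int_0^\mu E[|F(Z^x)|]x^{-2}dx+\frac1{2\mu}E[|F(Z^\mu)|],
\end{equation}
which immediately gives the asserted equivalence of integrability conditions. Under this integrability, decompose a real-valued $F$ as $F=F^+-F^-$ (both non-negative and each dominated by $|F|$, hence with finite $\underline n$-integral and finite $E[\cdot(Z^x)]$ integral in $x$ and finite $E[\cdot(Z^\mu)]$), apply the non-negative identity to $F^+$ and $F^-$ separately, and subtract. For complex-valued $F$ split into real and imaginary parts and combine by linearity. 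Since everything in sight is absolutely convergent, no exchange-of-limits issue arises.

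There is no real obstacle here; the only point requiring a moment of care is verifying that the outer integrand $E[F(Z^x)]$ in $\int_0^\mu E[F(Z^x)]x^{-2}dx$ is measurable in $x$ (so that the integral even makes sense), which follows by Fubini applied to $F_n$ and preserved in the monotone limit. Everything else is the standard extension from indicators to general measurable functions.
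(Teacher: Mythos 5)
Your proof is correct and is exactly the standard extension (indicators $\to$ simple functions $\to$ monotone convergence $\to$ decomposition into positive/negative and real/imaginary parts) that the paper implicitly relies on; indeed the paper states this corollary without proof as an immediate consequence of Theorem~\ref{ThmWilT}. The measurability of $x\mapsto E[F(Z^x)]$ that you flag is already implicit in the statement of Theorem~\ref{ThmWilT} for indicators, so your argument closes cleanly.
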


\begin{corollary}
We have 
\begin{equation}
\underline{n}[H\geq x]= \left\{ 
\begin{array}{ll}
\displaystyle\frac1{2x} & 0<x\leq\mu, \\ 
&  \\ 
\displaystyle0 & x>\mu,
\end{array}
\right.
\end{equation}
and 
\begin{equation}
\underline{n}[H\in
dx]=\frac1{2x^2}I_{(0,\mu)}(x)dx+\frac1{2\mu}\delta_\mu(dx).
\end{equation}
\end{corollary}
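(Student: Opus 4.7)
The plan is to apply Corollary~\ref{CorWilT} to the indicator functional $F(u) = I_{\{H(u) \geq x\}}$. This immediately reduces the problem to computing the laws of $H(Z^y)$ for $y \in (0,\mu)$ and of $H(Z^\mu)$, under the Williams-type construction~\eqref{Zx}--\eqref{Zmu}.

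The main observation I would invoke is that these heights are \emph{deterministic}. By construction, $Z^y$ ($y \in (0,\mu)$) rises as the BES$^3(0)$ process $\rho$ stopped exactly at the hitting time $T_y(\rho)$, and then descends back to zero along $y - \tilde\rho_{\cdot - T_y(\rho)}$; since $\tilde\rho \geq 0$ with $\tilde\rho_0 = 0$, the descent phase stays in $[0,y]$ and the ascent never overshoots level $y$, so $H(Z^y) = y$ almost surely. The same argument applied to~\eqref{Zmu} gives $H(Z^\mu) = \mu$ almost surely (the descent $\mu - |b|$ cannot exceed $\mu$). Hence
\begin{equation*}
P[H(Z^y) \geq x] = I_{\{y \geq x\}} \quad \text{for } y \in (0,\mu], \qquad P[H(Z^\mu) \geq x] = I_{\{\mu \geq x\}}.
\end{equation*}

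Plugging these into Corollary~\ref{CorWilT}, for $0 < x \leq \mu$ one gets the elementary integral $\frac{1}{2}\int_x^\mu y^{-2}\,dy + \frac{1}{2\mu}$, which telescopes to $\frac{1}{2x}$, and for $x>\mu$ both contributions vanish; this yields the first formula. The density expression then follows by differentiation of the tail on $(0,\mu)$, returning $\frac{1}{2x^2}$, together with identification of the atom at $x=\mu$ of mass $\frac{1}{2\mu}$ as the jump $\nbid[H \geq \mu] - \nbid[H > \mu]$, which arises entirely from the $Z^\mu$ term in the decomposition.

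I do not anticipate a genuine obstacle: once one reads off $H(Z^y) = y$ from the construction, everything collapses to a one-line integration. The only point worth underlining is that the atom at $\mu$ comes solely from the second term in Corollary~\ref{CorWilT}, since the integral against $y^{-2}\,dy$ places no mass at the single point $y = \mu$; this is consistent with the intuition that a doubly reflected Brownian motion on $[0,\mu]$ feels the far boundary only through those excursions whose height saturates at $\mu$, which is exactly the class of paths encoded by $Z^\mu$.
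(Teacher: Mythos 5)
Your proposal is correct and follows exactly the route the paper intends: the corollary is stated without proof as an immediate consequence of Corollary~\ref{CorWilT}, and the computation you give --- observing $H(Z^y)=y$ a.s.\ from the construction \eqref{Zx}--\eqref{Zmu}, so that $\underline{n}[H\geq x]=\tfrac12\int_x^\mu y^{-2}\,dy+\tfrac1{2\mu}=\tfrac1{2x}$ for $0<x\leq\mu$, with the atom at $\mu$ coming solely from the $Z^\mu$ term --- is the intended one-line derivation.
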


\section{The hyperbolic function table for intertrading times}

\subsection{The hyperbolic table under the trading excursion measure}

A trading excursion starts with a Ib trade. In this section we study the
time to the next trade after a trading excursion. Consider the time interval
from a Ib ask trade to the next trade. This is a trading excursion interval for $W$, and by Theorem~\ref{doubly reflected BM}, a zero excursion interval
for $\alpha-W$. So the time to the next trade is just the length of an excursion interval. We shall start with the trading excursion space $(U_\delta^+,\mathcal{U}_\delta^+,\underline{n})$ and then transfer the results to the probability space $(\Omega,\mathcal{F},P)$.

The time to the next trade for a trading excursion $u\in U_\delta^+$
is the length $R$ of the trading excursion, the type of the next trade
depends on the height $H$. If $H(u)<\mu$ the next trade is of Type~{I}, if $H(u)\geq\mu$ it is of Type~{II}. Below we shall see that $\underline{n}[H>\mu]=0 $. 

We state the following theorems for real $\lambda>0$. 
Using arguments based on results on the analyticity of Laplace transforms it can be
shown that they extend to a larger complex domain.
\begin{lemma}[On the joint law of $R$ and $H$ under $\protect\underline{n}$]
\label{LapRH} Suppose $\lambda>0$ and $0<y\leq\mu$, then we have 
\begin{equation}\label{fhfh}
\underline{n}[1-e^{-\lambda R};H<y]= -\frac1{2y}+\frac12\sqrt{2\lambda}\coth(y\sqrt{2\lambda}).
\end{equation}
\end{lemma}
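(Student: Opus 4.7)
The natural plan is to apply the Williams-type decomposition of $\underline n$ given in Corollary~\ref{CorWilT} with the explicit test functional
\begin{equation*}
F(u) = \bigl(1 - e^{-\lambda R(u)}\bigr)\, I_{\{H(u)<y\}}, \qquad u \in U^+_\delta .
\end{equation*}
The first step is to read off the value of $H$ on the canonical processes $Z^x$ and $Z^\mu$ defined in~(\ref{Zx}) and~(\ref{Zmu}). Since $\rho$ and $\tilde\rho$ are nonnegative and $\rho$ hits $x$ exactly at the gluing time $T_x(\rho)$, we get $H(Z^x)=x$ for every $x\in(0,\mu)$; analogously $H(Z^\mu)=\mu$. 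For $y\le\mu$ this immediately kills the atom at $\mu$, so Corollary~\ref{CorWilT} reduces to
\begin{equation*}
\underline n[1-e^{-\lambda R};H<y] \;=\; \tfrac12 \int_0^{y} E\bigl[1-e^{-\lambda R(Z^x)}\bigr]\, x^{-2}\,dx .
\end{equation*}

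The second step is to evaluate $E[e^{-\lambda R(Z^x)}]$. By construction $R(Z^x)=T_x(\rho)+T_x(\tilde\rho)$ is a sum of two independent $\mathrm{BES}^3(0)$ hitting times of $x$, and I would invoke the standard Laplace transform
\begin{equation*}
E\bigl[e^{-\lambda T_x(\rho)}\bigr] \;=\; \frac{x\sqrt{2\lambda}}{\sinh(x\sqrt{2\lambda})},
\end{equation*}
which follows from the classical formula $E_r[e^{-\lambda T_x}] = (x/r)\sinh(r\sqrt{2\lambda})/\sinh(x\sqrt{2\lambda})$ for $\mathrm{BES}^3$ starting at $r<x$ by letting $r\downarrow 0$. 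Squaring and substituting gives
\begin{equation*}
\underline n[1-e^{-\lambda R};H<y] \;=\; \frac12 \int_0^{y} \Bigl(\frac{1}{x^2} - \frac{2\lambda}{\sinh^2(x\sqrt{2\lambda})}\Bigr)\,dx .
\end{equation*}

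The third step is an elementary integration. Writing $c=\sqrt{2\lambda}$ and using $\frac{d}{dx}\bigl(c\coth(cx)\bigr)=-c^2\csch^2(cx)$, the antiderivative is $-\tfrac{1}{x}+c\coth(cx)$. The Taylor expansion $\coth(cx)=\tfrac{1}{cx}+\tfrac{cx}{3}+O(x^3)$ shows that this antiderivative vanishes at $x\downarrow 0$, so the definite integral equals $-\tfrac{1}{y}+\sqrt{2\lambda}\coth(y\sqrt{2\lambda})$, giving the claimed identity after the factor $\tfrac12$.

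The only subtle point is the integrability near zero of the integrand (needed to justify the use of Corollary~\ref{CorWilT} in its unsigned form and to match the boundary term at $x=0$), which is handled by the Taylor expansion above. Everything else reduces to bookkeeping: the BES$^3$ Laplace transform and a standard $\csch^2$ antiderivative.
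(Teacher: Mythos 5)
Your proposal is correct and follows essentially the same route as the paper: apply Corollary~\ref{CorWilT} to $F=(1-e^{-\lambda R})I_{H<y}$, note that $H(Z^x)=x$ kills the atom at $\mu$ since $y\leq\mu$, insert the squared $\BES^3$ hitting-time Laplace transform $x\sqrt{2\lambda}/\sinh(x\sqrt{2\lambda})$, and integrate $\tfrac12\int_0^y\bigl(x^{-2}-2\lambda\csch^2(x\sqrt{2\lambda})\bigr)dx$. Your explicit treatment of the antiderivative and its vanishing boundary term at $x=0$ merely spells out what the paper leaves implicit.
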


\begin{proof}
For this proof we use the description of the trading excursion measure given
in Section~\ref{x-measure}.
From (\ref{Zx}) we note first that $R(Z^x)=T_x(\rho)+T_x(\tilde\rho)$ 
and $H(Z^x)=x$ for $0<x<\mu$.
The random variables $T_x(\rho)$ and $T_x(\tilde\rho)$ are independent
first hitting times of $\BES^3$-processes for level $x$.
The corresponding Laplace transform is well-known, namely
\begin{equation}
E[e^{-\lambda T_x(\rho)}]=
E[e^{-\lambda T_x(\tilde\rho)}]=
\frac{x\sqrt{2\lambda}}{\sinh(x\sqrt{2\lambda})}.
\end{equation}
See \cite[(3.8), p.762]{Ken1978} with $\nu=1/2$,
see also \cite[Tab.2, Row 3, Col.1, p.450 and Sec.4.5, p.453]{BPY2001}.
From Corollary~\ref{CorWilT} we get
\begin{eqnarray}
&&\nbid[1-e^{-\lambda T};H<y]
=
\frac12\int_0^\mu E[1-e^{-\lambda R(Z^x)};H(Z^x)<y]x^{-2}dx\\
&&\qquad=
\frac12\int_0^y E[1-e^{-\lambda R(Z^x)}]x^{-2}dx
=
\frac12\int_0^y
\left(
1-E\left[e^{-\lambda (T_x(\rho)+T_x(\tilde\rho))}\right]
\right)x^{-2}dx\\
&&\qquad=
\frac12\int_0^y\left[
1-\left(\frac{x\sqrt{2\lambda}}{\sinh(x\sqrt{2\lambda})}\right)^2
\right]x^{-2}dx
=
-\frac1{2y}+\frac12\sqrt{2\lambda}\coth(y\sqrt{2\lambda}).
\end{eqnarray}
\end{proof}

\begin{theorem}[Hyperbolic table under the trading excursion measure]
Let $\lambda>0$.\label{nhyp}

\begin{enumerate}
\item \label{nTI} We have for the length $R_{I}$ of the trading excursion to
the next Type $I$ trade
\begin{equation}
\underline{n}[1-e^{-\lambda R_{I}}]=-\frac{1}{2\mu }+\frac{1}{2}\sqrt{2\lambda }\coth (\mu \sqrt{2\lambda }),  \label{coth}
\end{equation}

\item \label{nTII} for the length $R_{II}$ of the trading excursion to the
next Type $II$ trade 
\begin{equation}
\underline{n}[1-e^{-\lambda R_{II}}]=\frac{1}{2\mu }-\sqrt{2\lambda }\csch(2\mu \sqrt{2\lambda }),  \label{csch}
\end{equation}

\item \label{nT} and for the length $R$ of the trading excursion to the next
trade 
\begin{equation}
\underline{n}[1-e^{-\lambda R}]=\frac{1}{2}\sqrt{2\lambda }\tanh (\mu \sqrt{2\lambda}).  \label{tanh}
\end{equation}
\end{enumerate}
\end{theorem}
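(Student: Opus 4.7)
The plan is to exploit the Williams-type description of the excursion measure $\underline{n}$ from Corollary~\ref{CorWilT}: every excursion is distributed like either $Z^x$ for some $x\in(0,\mu)$ (in which case $H(Z^x)=x<\mu$, so the excursion ends in a Type~I trade) or like $Z^\mu$ (in which case $H(Z^\mu)=\mu$, ending in a Type~II trade). Since $R_{I}=R\,I_{H<\mu}$ and $R_{II}=R\,I_{H\geq\mu}$ neatly split this dichotomy, I would handle (i) and (ii) separately using these two pieces, then obtain (iii) by adding them.

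For part (i), I would specialise Lemma~\ref{LapRH} to $y=\mu$: on $\{H\geq\mu\}$ the integrand $1-e^{-\lambda R_{I}}$ vanishes, so
\[
\underline{n}[1-e^{-\lambda R_{I}}]=\underline{n}[1-e^{-\lambda R};H<\mu],
\]
and the right-hand side is exactly~\eqref{coth}. For part (ii), only the $Z^\mu$-piece contributes, so Corollary~\ref{CorWilT} gives $\underline{n}[1-e^{-\lambda R_{II}}]=\tfrac{1}{2\mu}\,E[1-e^{-\lambda R(Z^\mu)}]$. From~\eqref{Zmu} I read off $R(Z^\mu)=T_\mu(\rho)+T_\mu(|b|)$, which is a sum of two independent hitting times. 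The transform of $T_\mu(\rho)$ is already recorded in the proof of Lemma~\ref{LapRH} as $\mu\sqrt{2\lambda}/\sinh(\mu\sqrt{2\lambda})$, and for standard Brownian motion the exit-time formula $E[e^{-\lambda T_\mu(|b|)}]=1/\cosh(\mu\sqrt{2\lambda})$ is classical. Multiplying the two and simplifying the denominator via $\sinh(2u)=2\sinh u\cosh u$ yields~\eqref{csch}.

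For part (iii), adding (i) and (ii) cancels the $\pm 1/(2\mu)$ terms and leaves
\[
\underline{n}[1-e^{-\lambda R}]=\tfrac12\sqrt{2\lambda}\coth(\mu\sqrt{2\lambda})-\sqrt{2\lambda}\csch(2\mu\sqrt{2\lambda}),
\]
which I would reduce to~\eqref{tanh} via the identity $\tfrac12\coth u-\csch(2u)=\tfrac12\tanh u$; bringing the left-hand side over the common denominator $2\sinh u\cosh u$ and using $\cosh^2 u-1=\sinh^2 u$ makes this a one-line check. I do not expect a serious obstacle, since Lemma~\ref{LapRH} and the Williams-type decomposition of $\underline{n}$ already do the heavy analytic work. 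The only care needed is to observe that $\underline{n}$ puts no mass on $\{H>\mu\}$, so the two pieces $\{H<\mu\}$ and $\{H=\mu\}$ correspond exactly to Type~I and Type~II excursions, legitimising the splitting used above.
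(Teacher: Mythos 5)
Your proposal is correct and follows essentially the same route as the paper: part (i) is Lemma~\ref{LapRH} at $y=\mu$ (the paper also cites Yen--Yor but notes the agreement), part (ii) isolates the $Z^\mu$ contribution in Corollary~\ref{CorWilT} and multiplies the two classical hitting-time transforms, and part (iii) follows by addition and the duplication identity $\tfrac12\coth u-\csch(2u)=\tfrac12\tanh u$. The only cosmetic difference is that you compute $\tfrac1{2\mu}E[1-e^{-\lambda R(Z^\mu)}]$ in one step, whereas the paper separately evaluates $\underline{n}[e^{-\lambda R_{II}};H=\mu]$ and uses $\underline{n}[H\geq\mu]=\tfrac1{2\mu}$.
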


\begin{proof}
Part~(\ref{nTI}) is due to \cite[p.66]{YY} and agrees with the result from Theorem~\ref{LapRH} for the special case $y=\mu$.

For Part~(\ref{nTII}) we note first from (\ref{Zmu}) that $R(Z^\mu)=T_\mu(\rho)+T_\mu(|b|)$,
with $\rho$ a $\BES^3$-process and $|b|$ an independent reflected Brownian motion,
The random variables $T_\mu(\rho)$ and $T_\mu(|b|)$ are their
first hitting times of level $\mu$ respectively, and $H(Z^\mu)=\mu$.
The corresponding Laplace transforms are well-known, namely
\begin{equation}
E[e^{-\lambda T_\mu(\rho)}]=\mu\sqrt{2\lambda}\csch(\mu\sqrt{2\lambda}),\quad
E[e^{-\lambda T_\mu(|b|)}]=\frac1{\cosh(\mu\sqrt{2\lambda})}.
\end{equation}
See \cite[(3.8), p.762]{Ken1978} with $\nu=1/2$ and $\nu=-1/2$, 
see also \cite[Tab.2, Row 3, Col.3, p.450 and Sec.4.5, p.453]{BPY2001}.
Thus we get by Corollary~\ref{CorWilT}
\begin{eqnarray}
\nbid[e^{-\lambda R_{II}};H=\mu]&=&
\frac1{2\mu}E[e^{-\lambda(T_\mu(\rho)+T_\mu(|b|))}]\\
&=&
\frac{\sqrt{2\lambda}}{2\sinh(\mu\sqrt{2\lambda})}
\frac1{\cosh(\mu\sqrt{2\lambda})}
=\sqrt{2\lambda}\csch(2\mu\sqrt{2\lambda}).
\end{eqnarray}

Part~(\ref{nT}) is obtained by adding Parts~(\ref{nTI}) and~(\ref{nTII}) and an elementary
duplication formula for hyperbolic functions, \cite[4.5.31, p.84]{AS}.
\end{proof}

\begin{remark}
We can rewrite (\ref{coth}) and (\ref{csch}) as follows: 
\begin{equation}  \label{coth-alt}
\underline{n}[1-e^{-\lambda R}I_{H<\mu}]= \frac12\sqrt{2\lambda}\coth(\mu\sqrt{2\lambda}),
\end{equation}
and 
\begin{equation}  \label{csch-alt}
\underline{n}[e^{-\lambda R}I_{H=\mu}]=\sqrt{2\lambda}\csch(2\mu\sqrt{2\lambda}).
\end{equation}
Note that $\underline{n}[e^{-\lambda R}I_{H<\mu}]=\infty$ for all $\lambda>0 $ though.
\end{remark}

We can describe the distributions of $R$ and $H$ under $\underline{n}$ more
explicitly using theta functions. There are many notations and
parametrizations for theta functions, see \cite[Sec.21.9, p.487]{WW} for an
overview. We choose a variant inspired by \cite{Dev2009}, which allows a
simple statement of transformation formulas. Let\footnote{With this system of notation we would have $\theta _{1}(x)\equiv 0$, thus it
is not mentioned here.} 
\begin{eqnarray}
\theta _{2}(x) &=&2\sum_{n\geq 1}e^{-(n-1/2)^{2}\pi x}, \\
\theta _{3}(x) &=&1+2\sum_{n\geq 1}e^{-n^{2}\pi x},\label{t3} \\
\theta _{4}(x) &=&1+2\sum_{n\geq 1}(-1)^{n}e^{-n^{2}\pi x}.
\end{eqnarray}

\begin{theorem}[Theta table]\label{thetatable}
We have
\begin{enumerate}
\item 
\begin{equation}\label{f1}
\underline{n}[R>x,H<y]=\frac{1}{2y}\left[\theta _{3}\left( \frac{\pi x}{2y^{2}}\right)-1\right],\quad x>0,0<y\leq\mu
\end{equation}

\item 
\begin{equation}\label{f2}
\underline{n}[R>x,H=\mu ]=\frac{1}{2\mu }\left[ 1-\theta _{4}\left( \frac{\pi x}{8\mu ^{2}}\right) \right] ,\quad x>0,
\end{equation}

\item 
\begin{equation}\label{f3}
\underline{n}[R>x]=\frac{1}{2\mu }\theta _{2}\left( \frac{\pi x}{2\mu ^{2}}\right) ,\quad x>0.
\end{equation}
\end{enumerate}
\end{theorem}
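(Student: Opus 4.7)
The plan is to Laplace-invert each of the three hyperbolic-function identities already contained in Lemma~\ref{LapRH} and Theorem~\ref{nhyp}, using the classical Mittag--Leffler partial-fraction expansions of $\coth$, $\csch$ and $\tanh$. All three formulas will drop out by matching the resulting exponential series with the theta series in \eqref{t3} and its companions.

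\textbf{Step 1 (part~1).} Put $f(x):=\underline{n}[R>x, H<y]$. Writing $1-e^{-\lambda R}=\lambda\int_0^\infty e^{-\lambda t}\mathbf 1_{\{R>t\}}\,dt$ and applying Tonelli (legitimate because $\underline n[R>t, H<y]<\infty$ for every $t>0$) gives
\begin{equation*}
\lambda\int_0^\infty e^{-\lambda t} f(t)\,dt=\underline n[1-e^{-\lambda R};H<y]=-\tfrac1{2y}+\tfrac12\sqrt{2\lambda}\coth(y\sqrt{2\lambda}).
\end{equation*}
Substituting $z=y\sqrt{2\lambda}$ into $\coth z = z^{-1}+2z\sum_{n\ge1}(z^2+n^2\pi^2)^{-1}$ cancels the singular $1/(2y)$ term and rewrites the right-hand side as $\frac1y\sum_{n\ge1}\lambda/(\lambda+\alpha_n)$ with $\alpha_n=n^2\pi^2/(2y^2)$. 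By uniqueness of Laplace transforms together with monotone convergence, $f(x)=\frac1y\sum_{n\ge1}e^{-\alpha_n x}$, which is exactly $\frac1{2y}[\theta_3(\pi x/(2y^2))-1]$.

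\textbf{Step 2 (part~2).} Let $g(x):=\underline n[R>x, H=\mu]$; since $\underline n[H=\mu]=1/(2\mu)$ is finite, the same Fubini step yields $\underline n[e^{-\lambda R};H=\mu]=1/(2\mu)-\lambda\widehat g(\lambda)$. Combined with Theorem~\ref{nhyp}(\ref{nTII}) this reads $\lambda\widehat g(\lambda)=1/(2\mu)-\sqrt{2\lambda}\csch(2\mu\sqrt{2\lambda})$. Inserting the Mittag--Leffler expansion $\csch z=z^{-1}+2z\sum_{n\ge1}(-1)^n(z^2+n^2\pi^2)^{-1}$ with $z=2\mu\sqrt{2\lambda}$ reveals $\widehat g(\lambda)=\frac1\mu\sum_{n\ge1}(-1)^{n+1}/(\lambda+\beta_n)$ with $\beta_n=n^2\pi^2/(8\mu^2)$. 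The alternating exponential series obtained by term-by-term inversion is then exactly $\frac1{2\mu}[1-\theta_4(\pi x/(8\mu^2))]$.

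\textbf{Step 3 (part~3).} Adding the first two results at $y=\mu$ gives $\underline n[R>x]=\frac1{2\mu}[\theta_3(\pi x/(2\mu^2))-\theta_4(\pi x/(8\mu^2))]$, which coincides with $\frac1{2\mu}\theta_2(\pi x/(2\mu^2))$ by the classical identity $\theta_3(s)-\theta_4(s/4)=\theta_2(s)$, verified by splitting the index of the $\theta_4(s/4)$ series according to the parity of $n$. As an independent check one can invert Theorem~\ref{nhyp}(\ref{nT}) directly via $\tanh z=\sum_{n\ge0}2z/(z^2+(n+\tfrac12)^2\pi^2)$, which produces the $\theta_2$ series without passing through $\theta_3$ and $\theta_4$.

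The main potential obstacle is technical rather than conceptual: one must justify the Fubini step and the term-by-term Laplace inversion even though $\underline n$ is only $\sigma$-finite and $f,g,h$ blow up as $x\downarrow 0$. However, for every fixed $x>0$ the measure $\underline n[\,\cdot\, ;R>x]$ is finite, and the partial-fraction series are either positive (parts~1 and~3) or alternating with rapidly decaying terms (part~2), so monotone or dominated convergence renders every interchange legitimate. Uniqueness of the Laplace transform for locally integrable functions on $(0,\infty)$ then closes the argument.
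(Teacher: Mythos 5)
Your proposal is correct and follows essentially the same route as the paper: both compute $\int_0^\infty e^{-\lambda x}\underline{n}[R>x,\cdot]\,dx$ via Fubini from Lemma~\ref{LapRH} and Theorem~\ref{nhyp}, and then identify the result with the termwise Laplace transform of the theta series through the partial-fraction (Mittag--Leffler) expansions of the hyperbolic functions. You merely carry out explicitly the inversions and the $\theta_3-\theta_4=\theta_2$ bookkeeping that the paper leaves as ``easily checked'' and ``proved in a similar way.''
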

\begin{proof}
By Fubini's Theorem and (\ref{fhfh}) we compute the Laplace transform
\begin{eqnarray}
\int_0^\infty e^{-\lambda x}\underline{n}[R>x,H<y]dx&=&\frac1\lambda\underline{n}[1-e^{-\lambda R};H<y]\\
&=&-\frac1{2\lambda y}+\frac1{\sqrt{2\lambda}}\coth(y\sqrt{2\lambda}).
\end{eqnarray}
This agrees with the Laplace transform of (\ref{f1}), which is known resp.\ easily checked
by termwise-transformation followed by an application of the partial fraction expansion of the hyperbolic
cotangent.
Equations (\ref{f2}) and (\ref{f3}) can be proved in a similar way.
\end{proof}
For later usage we differentiate those formulas with respect to $x$ and $y$ and obtain 
\begin{equation}
\underline{n}[R\in dx]=-\frac{\pi }{4\mu ^{3}}\theta _{2}^{\prime }\left( 
\frac{\pi x}{2\mu ^{2}}\right) dx.
\end{equation}
\begin{eqnarray}
\underline{n}[R \in dx,H\in dy]&=& 
\left[ \frac{3\pi }{4y^4}\theta _{3}^{\prime }\left( \frac{\pi x}{2y^{2}}
\right) +\frac{\pi ^{2}x}{4y^{6}}\theta _{3}^{\prime \prime }\left( \frac{\pi x}{2y^{2}}\right) \right] I_{(0,\mu )}(y)dxdy\\
&&\qquad\qquad-\frac{\pi }{16\mu ^{3}}\theta_{4}^{\prime }\left( \frac{\pi x}{8\mu ^{2}}\right) dx\delta _{\mu
}(dy).
\end{eqnarray}

\subsection{The hyperbolic table under the probability measure}

We have two general devices for Poisson point processes to relate results
for the probability measure to results about its intensity measure, namely
the \emph{Exponential Formula} \cite[Prop.XII.1.12, p.476]{RY} and 
the \emph{Master Formula} \cite[Prop.XII.1.10 and Corl.XII.1.11, p.475]{RY}. 

\begin{theorem}[Hyperbolic table in exponential form]\label{hypexp}
Let $\lambda>0$ and $t>0$.

\begin{enumerate}
\item We have for the length $R_{I}$ of the trading excursion to the next
Type $I$ trade 
\begin{equation}
E\left[ \exp \left\{ -\lambda \sum_{0<s\leq t}R_{I}(\underline{e}_{s})\right\} \right] =\exp \left[ -t\left( -\frac{1}{2\mu }+\frac{1}{2}\sqrt{2\lambda }\coth (\mu \sqrt{2\lambda })\right) \right] ,
\end{equation}

\item for the length $R_{II}$ of the trading excursion to the next Type $II$
trade 
\begin{equation}
E\left[ \exp \left\{ -\lambda \sum_{0<s\leq t}R_{II}(\underline{e}_{s})\right\} \right] =\exp \left[ -t\left( \frac{1}{2\mu }-\sqrt{2\lambda}\csch(2\mu \sqrt{2\lambda })\right) \right] ,
\end{equation}

\item and for the length $R$ of the trading excursion to the next trade 
\begin{equation}
E\left[ \exp \left\{ -\lambda \sum_{0<s\leq t}R(\underline{e}_{s})\right\} \right] =\exp \left[ -t\left( \frac{1}{2}\sqrt{2\lambda }\tanh (\mu \sqrt{2\lambda })\right) \right] .
\end{equation}
\end{enumerate}
\end{theorem}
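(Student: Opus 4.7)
The plan is to obtain all three identities as direct applications of the Exponential Formula for Poisson point processes, cf.~\cite[Prop.XII.1.12, p.476]{RY}, using the closed-form expressions already derived in Theorem~\ref{nhyp}. No new computation of Laplace transforms is required; the work has effectively been done at the level of the intensity measure.

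Recall that $\underline{e}=(\underline{e}_s)_{s>0}$ has been shown to be a Poisson point process on $U_\delta^+$ with $\sigma$-finite intensity measure $\underline{n}$. The Exponential Formula states that for any nonnegative $\mathcal{U}_\delta^+$-measurable function $f$ and any $t>0$,
\begin{equation*}
E\left[\exp\left(-\sum_{0<s\leq t} f(\underline{e}_s)\right)\right] = \exp\bigl(-t\,\underline{n}[1-e^{-f}]\bigr).
\end{equation*}
I would apply this with the three choices $f(u)=\lambda R_I(u)$, $f(u)=\lambda R_{II}(u)$, and $f(u)=\lambda R(u)$ respectively. Each is a nonnegative measurable functional on excursion space since $R$ and $H$ are measurable with respect to $\mathcal{U}_\delta^+$ (they are defined by suprema and zero-level hitting times of continuous trajectories), and the indicator $I_{H<\mu}$ (resp. $I_{H\geq\mu}$) used to build $R_I$ (resp. $R_{II}$) is likewise measurable.

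Substituting directly, part~(1) follows from \eqref{coth}, part~(2) from \eqref{csch}, and part~(3) from \eqref{tanh}. Concretely, for part~(3),
\begin{equation*}
E\left[\exp\left(-\lambda\sum_{0<s\leq t} R(\underline{e}_s)\right)\right] = \exp\bigl(-t\,\underline{n}[1-e^{-\lambda R}]\bigr) = \exp\!\left[-t\cdot\tfrac{1}{2}\sqrt{2\lambda}\tanh(\mu\sqrt{2\lambda})\right],
\end{equation*}
and analogously for the other two cases.

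There is essentially no obstacle, but two minor verifications should be noted in the write-up: first, that $1-e^{-\lambda R_\ast(u)}$ is a legitimate integrand against $\underline{n}$, which is automatic since it is bounded by $1$ and Theorem~\ref{nhyp} shows its $\underline{n}$-integral is finite for every $\lambda>0$; second, measurability of the sums $\sum_{0<s\leq t} R_\ast(\underline{e}_s)$, which follows from standard point-process arguments (these are at most countable sums of nonnegative measurable terms). With these remarks, the Exponential Formula applies verbatim and the theorem reduces to Theorem~\ref{nhyp}.
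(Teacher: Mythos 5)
Your proposal is correct and matches the paper's own proof, which likewise invokes the Exponential Formula of \cite[Prop.XII.1.12, p.476]{RY} with $f(s,u)=\lambda R_{I}(u)$, $\lambda R_{II}(u)$, $\lambda R(u)$ and then substitutes the values from Theorem~\ref{nhyp}. The extra remarks on measurability and integrability are harmless additions to what is essentially the identical argument.
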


\begin{proof}
This follows from the exponential formula with $f(s,u)=\lambda R_{I}(u)$, $f(s,u)=\lambda R_{II}(u)$, $f(s,u)=\lambda R(u)$
and Theorem~\ref{nhyp} above.
\end{proof}

\begin{remark}
The sums on the left hand sides are summing over excursions until the local time reaches the level $t$, which corresponds to real time $\tau _{t}$.
\end{remark}

\begin{corollary}[Hyperbolic table in additive form]
Suppose $\lambda>0$ and $t>0$.

\begin{enumerate}
\item We have for the length $R_{I}$ of the trading excursion to the next
Type $I$ trade 
\begin{equation}
E\left[ \sum_{0<s\leq t}\left( 1-e^{-\lambda R_{I}(\underline{e}_{s})}\right) \right] =t\left[ -\frac{1}{2\mu }+\frac{1}{2}\sqrt{2\lambda }\coth (\mu \sqrt{2\lambda })\right].
\end{equation}

\item We have for the time to the next trade $T$ assuming it is Type~{II} 
\begin{equation}
E\left[ \sum_{0<s\leq t}\left( 1-e^{-\lambda R_{II}(\underline{e}
_{s})}\right) \right] =t\left[ \frac{1}{2\mu }-\sqrt{2\lambda }\csch(2\mu 
\sqrt{2\lambda })\right] .
\end{equation}

\item We have for the time to the next trade $T$ 
\begin{equation}
E\left[ \sum_{0<s\leq t}\left( 1-e^{-\lambda R(\underline{e}_{s})}\right) 
\right] =t\left[ \frac{1}{2}\sqrt{2\lambda }\tanh (\mu \sqrt{2\lambda })
\right] .
\end{equation}
\end{enumerate}
\end{corollary}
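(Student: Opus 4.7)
The plan is to apply the Master Formula for Poisson point processes (Proposition XII.1.10 in Revuz--Yor, already cited in the paragraph preceding the corollary) to each of the three functionals $1-e^{-\lambda R_I(u)}$, $1-e^{-\lambda R_{II}(u)}$ and $1-e^{-\lambda R(u)}$, and then to read off the resulting integrals from Theorem~\ref{nhyp}.

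More concretely, by the theorem stated immediately before the "Description of the trading excursion measure," the bid trading excursion process $(\underline{e}_s)_{s>0}$ is a Poisson point process on $U_\delta^+$ with intensity measure $\underline{n}$. The Master Formula asserts that for any measurable $f:\mathbb R_+\times U_\delta^+\to\mathbb R_+$,
\begin{equation*}
E\left[\sum_{0<s\leq t} f(s,\underline{e}_s)\right] \;=\; \int_0^t\!\!\int_{U_\delta^+} f(s,u)\,\underline{n}(du)\,ds.
\end{equation*}
I would apply this to the time-independent integrand $f(s,u)=1-e^{-\lambda R_I(u)}$ (and analogously for $R_{II}$ and $R$). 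Since these integrands are nonnegative and $\underline{n}$-integrable by parts (\ref{nTI})--(\ref{nT}) of Theorem~\ref{nhyp}, the formula gives directly
\begin{equation*}
E\left[\sum_{0<s\leq t}\!\left(1-e^{-\lambda R_I(\underline{e}_s)}\right)\right] = t\cdot\underline{n}\bigl[1-e^{-\lambda R_I}\bigr],
\end{equation*}
and similarly for the other two sums.

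Substituting the values
\begin{equation*}
\underline{n}[1-e^{-\lambda R_I}] = -\frac{1}{2\mu}+\frac{1}{2}\sqrt{2\lambda}\coth(\mu\sqrt{2\lambda}),\quad \underline{n}[1-e^{-\lambda R_{II}}] = \frac{1}{2\mu}-\sqrt{2\lambda}\csch(2\mu\sqrt{2\lambda}),
\end{equation*}
and
\begin{equation*}
\underline{n}[1-e^{-\lambda R}] = \tfrac12\sqrt{2\lambda}\tanh(\mu\sqrt{2\lambda})
\end{equation*}
from Theorem~\ref{nhyp} yields the three claimed identities in turn. There is no real obstacle here; the only thing to be slightly careful about is the measurability/nonnegativity prerequisite of the Master Formula, which holds trivially since $R_I, R_{II}, R\geq 0$ are measurable functionals on $U_\delta^+$ and $1-e^{-\lambda\cdot}\geq 0$ on $[0,\infty)$. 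So the proof reduces to citing \cite[Prop.XII.1.10, p.475]{RY} and Theorem~\ref{nhyp}.
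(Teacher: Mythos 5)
Your proposal is correct and matches the paper's own proof, which likewise invokes the Master Formula \cite[Prop.XII.1.10]{RY} with $f(s,u)=1-e^{-\lambda R_{I}(u)}$, $f(s,u)=1-e^{-\lambda R_{II}(u)}$, $f(s,u)=1-e^{-\lambda R(u)}$ and then substitutes the values from Theorem~\ref{nhyp}. Your added remarks on nonnegativity and measurability are harmless elaboration of the same argument.
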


\begin{proof}
This follows from the master formula \cite[XII.1.10]{RY} with 
$f(s,u)=1-e^{-\lambda R_{I}(u)}$, 
$f(s,u)=1-e^{-\lambda R_{II}(u)}$, 
$f(s,u)=1-e^{-\lambda R(u)}$ 
respectively and Theorem~\ref{nhyp} above.
\end{proof}

\section{Laplace transform for the avalanche length}

\label{subsec:notwo}

Orders in the LOB get executed via avalanches. In other words, limit orders
may accumulate on some levels, and when the price process crosses those
values, we will see a sudden decrease of the number of orders. We take
record if there is no order execution in a time period lasting longer than $\varepsilon >0$.

Recall from Definitions \ref{trading time points} and \ref{d:first and last trading time} that $\Theta$ denotes the set of all trading times, $\Upsilon(t)$ (resp.\ $\Xi$) denotes the time of last trade before (resp.\ next trade after) time $t$.

\begin{definition}
Let
\begin{equation}
a\in \Theta ,~\Upsilon (a)\leq (a-\varepsilon )_{+},~b\in \Theta ,~\Xi
(b)\geq b+\varepsilon ,
\end{equation}
\begin{equation}
\Xi (t)\leq t+\varepsilon \quad \forall t\in (a,b).
\end{equation}
An $\varepsilon $-\emph{avalanche }is defined as the process $\left\{
W_{t}:a\leq t\leq b\right\} $. We call $a$ and $b$ start and end of the
avalanche. The corresponding $\varepsilon $\emph{-avalanche length} is $b-a$.
\end{definition}

There is a sequence of stopping times $(T_{n}^{a})_{n\geq 1}$ enumerating
the start of avalanches, and a sequence of honest times $(T_{n}^{e})_{n\geq
1}$ enumerating the end of avalanches (for the completed filtration). We are interested into the distribution of the avalanche length for which we
will rely on the hyperbolic table of the distribution of intertrading times.
\begin{theorem}
Let $T$ be a stopping time starting an $\varepsilon$-avalanche and $A^\varepsilon$ be the corresponding avalanche length. Then we have the
Laplace transform 
\begin{equation}
E\left[ e^{-\lambda A^{\varepsilon}}\right]= \frac{H(\varepsilon)}{H(\varepsilon)+\int_0^\varepsilon(1-e^{-\lambda x})h(x)dx}.
\label{L-trafo full avalanche length}
\end{equation}
where 
\begin{equation}
H(\varepsilon)=\frac1{2\mu}\theta_2\left(\frac{\pi\varepsilon}{2\mu^2}\right), \quad h(x)=-\frac{\pi}{4\mu^3}\theta_2^{\prime }\left(\frac{\pi x}{2\mu^2}\right).
\end{equation}
\end{theorem}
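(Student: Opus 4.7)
The plan is to interpret the avalanche length as a functional of the trading excursion point process restarted at the stopping time $T$. Since $T \in \Theta$ forces $(\alpha-W)(T) = 0$ almost surely, the strong Markov property of the doubly reflected Brownian motion $X = \alpha - W$ (see Theorem~\ref{doubly reflected BM}) implies that the excursion process from $T$ onwards is a fresh Poisson point process with intensity $\underline{n}$, independent of $\mathcal{F}_T$. Set $\xi := \inf\{s > 0 : R(\overline{e}_s) > \varepsilon\}$, the local-time coordinate of the first excursion exceeding $\varepsilon$ in length. Since the trading times have Lebesgue measure zero, only the intertrading excursions contribute to $b-a$, and the defining conditions $\Xi(t) \le t + \varepsilon$ for $t \in (a,b)$ together with $\Xi(b) \ge b + \varepsilon$ give the PPP representation
\begin{equation*}
A^\varepsilon \;=\; \sum_{0 < s < \xi} R(\overline{e}_s)\,\mathbf{1}_{\{R(\overline{e}_s) \le \varepsilon\}}.
\end{equation*}

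Next I would split the PPP into two independent sub-PPPs according to excursion length: long excursions ($R > \varepsilon$) form a PPP with rate $H(\varepsilon) := \underline{n}[R > \varepsilon]$ per unit local time, while short excursions ($R \le \varepsilon$) form an independent PPP with intensity measure $\underline{n}|_{\{R \le \varepsilon\}}$. Hence $\xi$ is exponentially distributed with parameter $H(\varepsilon)$ and independent of the short-excursion PPP. Conditioning on $\xi$ and applying the exponential formula \cite[Prop.~XII.1.12]{RY} to the short-excursion PPP restricted to the local-time interval $[0,\xi]$ gives
\begin{equation*}
E\!\left[e^{-\lambda A^\varepsilon} \,\big|\, \xi\right] \;=\; \exp\!\left(-\xi \int_0^\varepsilon (1 - e^{-\lambda x})\, h(x)\, dx\right),
\end{equation*}
where $h(x)\,dx = \underline{n}[R \in dx]$ on $(0,\varepsilon)$. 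Taking expectation and using $E[e^{-c\xi}] = H(\varepsilon)/(H(\varepsilon) + c)$ yields the claimed identity~\eqref{L-trafo full avalanche length}.

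The explicit formulas for $H$ and $h$ now come directly from Section~6: Theorem~\ref{thetatable}(3) gives $\underline{n}[R > \varepsilon] = \frac{1}{2\mu}\theta_2(\pi\varepsilon/(2\mu^2))$, matching the stated $H(\varepsilon)$, and the companion density formula recorded immediately after Theorem~\ref{thetatable} yields $h(x) = -\frac{\pi}{4\mu^3}\theta_2'(\pi x/(2\mu^2))$. The main obstacle in this argument is justifying the very first step: the stopping time $T$ is characterised only through the avalanche property, so one has to be sure that the excursion PPP can legitimately be restarted at $T$. Because the event ``$T$ starts an $\varepsilon$-avalanche'' is $\mathcal{F}_T$-measurable and $T \in \Theta$ guarantees $X(T) = 0$, the strong Markov property of the doubly reflected Brownian motion (Theorem~\ref{doubly reflected BM}) supplies exactly the required independent fresh PPP. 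Once that is granted, the rest of the proof is the clean PPP computation above.
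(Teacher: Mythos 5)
Your argument is correct and is essentially the paper's own proof in different notation: the paper writes the accumulated excursion lengths as a subordinator $X_s=\sum_{0\le r\le s}R(\underline{e}_s)$, splits it as $X=J+Y$ by thinning at the threshold $\varepsilon$, identifies the avalanche length as $A=J_S$ with $S$ (your $\xi$) the first jump time of $Y$, exponential with rate $H(\varepsilon)$, and uses the L\'evy--Khintchine cumulant of $J$ where you invoke the exponential formula --- the same decomposition, the same independence, the same final integration. Your extra paragraph justifying the restart of the excursion process at $T$ via the strong Markov property of $\alpha-W$ makes explicit a step the paper leaves implicit, which is a welcome addition rather than a deviation.
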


\begin{proof}
Let $\ebid$ denote the trading excursion process for the ask side.
We have seen that it is a Poisson point process with intensity measure
$\nbid$. Let $R$ denote the excursion length functional.
Set
\begin{equation}
X_s=\sum_{0\leq r\leq s}R(e_r),\quad s\geq0.
\end{equation}
By Theorem~\ref{hypexp} and Theorem~\ref{thetatable} we see that $X$ is a L\'evy process with L\'evy measure
\begin{equation}
\nu_X(dx)=\nbid(dx) = h(x)dx,\quad x>0.
\end{equation}
For $\varepsilon>0$ we can write $X=J+Y$ with
\begin{equation}
Y_s=\sum_{0\leq r\leq s}\Delta X_sI_{\Delta X_s>\varepsilon},\quad
J_s=X_s-Y_s,\quad s\geq0.
\end{equation}
Then $J$ and $Y$ are two independent L\'evy processes with L\'evy measures
\begin{equation}
\nu_J(dx)=I_{x\leq\varepsilon}h(x)dx,\quad
\nu_Y(dx)=I_{x>\varepsilon}h(x)dx,
\quad x>0.
\end{equation}
Let $S=\inf\{s\geq0:\Delta Y_s>0\}$.
This is the first jump time of a compound Poisson process with L\'evy measure
$\nu_Y$ and thus exponential with parameter $\beta$ given by
\begin{equation}
  \beta=\nu_Y(\mathbb R_+)=\nbid[R>\epsilon]=H(\epsilon).
\end{equation}
The L\'evy-Khintchine formula for the cumulant of $J$ says
\begin{equation}
\kappa(\lambda)=\int_0^\infty(e^{-\lambda x}-1)\nu_J(dx).
\end{equation}
A straight integration gives
\begin{equation}
\kappa(\lambda)=-\int_0^\varepsilon(1-e^{-\lambda x})h(x)dx.
\end{equation}
The full avalanche length is $A=J_S$. By independence we obtain the Laplace transform
\begin{equation}
E[e^{-\lambda A}]=
\int_0^\infty e^{\kappa(\lambda)s}\beta e^{-\beta s}ds=\frac{\beta}{\beta-\kappa(\lambda)}
\end{equation}
and combining this with the results above yields the result.
\end{proof}

\begin{remark}
Let us ignore Type~{II} trades, and assume that orders are only executed as
in the Type~{I} case. Dassios and Wu \cite{DW} derive the Laplace transform
of the avalanche length $L^{\varepsilon }$ in the context of Parisian
options. The same formula can be inferred (Dudok de Wit \cite{DDW}) from the
L\'{e}vy measure of the subordinator consisting of Brownian passage times.
It results that
\begin{equation}
E\left[ e^{-\lambda L^{\varepsilon }}\right] =\frac{1}{\sqrt{\lambda
\varepsilon \pi }\mathrm{\func{erf}}\left( \sqrt{\lambda \varepsilon }\right) +e^{-\lambda \varepsilon }},  \label{Laurent}
\end{equation}
which can be proven in a completely analogous way by 
choosing $\tilde{h}\left( x\right) =x^{-3/2}/\sqrt{2\pi }$, which is the density
of the excursion length $T$ under the Ito measure $n$, instead of function $h$. In this
case the integral in the denominator can be evaluated in terms of the error
function by some elementary computations.
\end{remark}


\section{Technical remarks, discussions and proofs}

\subsection{Depth of the order book after the first trade}

The next lemma states that the limit order book has an order depth of at
least $\mu$ after the first trade $\tau_1$ has happened. Here, we work under
the same assumptions as in Section \ref{s:tradingtimes}.

\begin{lemma}
\label{l:depths of lob} $\tau_1$ is the first trade and for any $t\geq
\tau_1 $ there is a closed set $K_t$ which is a Lebesgue null-set such that 
\begin{equation*}
(\alpha(t),\alpha(t)+\mu)\backslash K_t \subseteq \{x\in\mathbb{R}:
V(t,x)>0\} \subseteq (\alpha(t),\infty).
\end{equation*}
\end{lemma}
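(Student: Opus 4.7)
The plan is to split the proof into two parts: identifying $\tau_1$ as the first trade, and verifying the two inclusions for fixed $t \geq \tau_1$.

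That $\tau_1$ is the first trade follows directly from Proposition~\ref{structure alpha}. Applied with $n=0$, the proposition gives $\alpha(t) = w_\ast(0,t) + \mu$ on $[0,\tau_1]$, and by the definition $\tau_1 = \Gamma_0 = \inf\{s : w(s) = w_\ast(0,s) + \mu\}$ one has $w(t) < \alpha(t)$ for $t < \tau_1$ while $w(\tau_1) = \alpha(\tau_1)$; this shows $\tau_1 = \min\Theta$.

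For the inclusions I would fix $t \geq \tau_1$, pick $n \geq 1$ with $t \in [\tau_n, \tau_{n+1})$, and use $V(t,x) = L_t^{x-\mu} - L_{\sigma(t,x)}^{x-\mu}$. The upper inclusion $\{x : V(t,x)>0\} \subseteq (\alpha(t), \infty)$ reduces, via the infimum-definition of $\alpha$, to showing $V(t, \alpha(t)) = 0$. If $t$ is itself a trading time then $\sigma(t, \alpha(t)) = t$ and the identity is immediate; otherwise, setting $s^\ast := \sigma(t, \alpha(t))$ and $y^\ast := \alpha(t) - \mu$, I would show that the local time increment $L_t^{y^\ast} - L_{s^\ast}^{y^\ast}$ vanishes by splitting on parity. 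For $n$ odd, $\alpha(t) = w^\ast(\tau_n, t)$ and since $w^\ast(\tau_n, s) = \alpha(t)$ for $s \in [s^\ast, t]$, the constraint $t < \tau_{n+1} = \Psi_{\tau_n}$ forces $w(s) > y^\ast$ strictly on $(s^\ast, t]$, so no local time at $y^\ast$ is accrued. For $n$ even, $y^\ast = w_\ast(\tau_n, t)$ coincides with the running minimum of $w$ on a subinterval containing $[s^\ast, t]$, at which Tanaka's formula forces the local time increment to vanish.

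For the lower inclusion, take $x \in (\alpha(t), \alpha(t) + \mu)$. Since $x > \alpha(t) \geq w^\ast(\tau_n, t)$ in the odd-$n$ case (symmetric in the even case), the path $w$ never reaches $x$ on $[\tau_n, t]$, so $\sigma(t, x) \leq \tau_n$ and the range of $w$ on $[\sigma(t, x), t]$ contains $[w_\ast(\tau_n, t), \alpha(t)]$. For those $x$ with $x - \mu$ strictly in the open interior of this range, the classical strict positivity of Brownian local time at interior levels of the path's range gives $V(t, x) > 0$. The residual bad levels---where $x - \mu$ meets the boundary of the range, or falls below $w_\ast(\tau_n, t)$ and must be recovered by extending the range argument backwards through $[0, \tau_n]$---assemble into the closed Lebesgue-null exceptional set $K_t$.

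The main obstacle is the parity-dependent case analysis together with the careful control of $\sigma(t, x)$ as $x$ varies; the Brownian-specific input reduces to the Tanaka-formula dichotomy between vanishing local time at running extrema and strict positivity in the interior of the range, while the remaining work is geometric bookkeeping using the explicit form of $\alpha(t)$ from Proposition~\ref{structure alpha}.
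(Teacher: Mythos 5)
Your identification of $\tau_1$ as the first trade is fine, because the paper establishes the $n=0$ case of Proposition~\ref{structure alpha} (namely $\alpha=w_\ast(0,\cdot)+\mu$ on $(0,\tau_1]$) directly from the definitions. The rest of your argument, however, is circular within this paper: both inclusions are derived from the representation $\alpha(t)=w^\ast(\tau_n,t)$ resp.\ $\alpha(t)=w_\ast(\tau_n,t)+\mu$ for general $n\geq 1$, but the inductive step in the paper's proof of Proposition~\ref{structure alpha} explicitly invokes Lemma~\ref{l:depths of lob} to conclude $\alpha(t)=\gamma(t)$ on $[\tau_{n+1},\tau_{n+2}]$. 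The lemma is precisely the input needed to push that proposition beyond $[0,\tau_1]$, so it cannot in turn be deduced from the proposition. The paper avoids this by proving the lemma self-containedly: it handles $[0,\tau_1]$ by hand and then runs a continuation argument --- take $T$ maximal such that the inclusions hold on $[\tau_1,T)$, use continuity of $w$ and the decomposition $V(t,x)=V(T,x)\mathbf{1}_{\{\sigma_{T,t}(x)=T\}}+L_t^{x-\mu}-L_{\sigma_{T,t}(x)}^{x-\mu}$ on a short interval $[T,T+\delta]$ to contradict maximality --- never touching Proposition~\ref{structure alpha}.

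Even granting the proposition, two steps of your sketch would not survive being written out. In the even case $\alpha(t)=w_\ast(\tau_n,t)+\mu$ is in general not attained by $w$ on $[\tau_n,t]$ (for instance for $t$ shortly after $\tau_n$), so $s^\ast=\sigma(t,\alpha(t))$ lies strictly before $\tau_n$ and the increment $L_t^{y^\ast}-L_{s^\ast}^{y^\ast}$ runs over a time interval on which $y^\ast$ need not be the running minimum; moreover, what kills the local time at the minimum level is not Tanaka's formula but continuity of the occupation density in the space variable together with its vanishing below the range of the path. More importantly, the case you relegate to ``extending the range argument backwards through $[0,\tau_n]$'' is not residual: immediately after every Type~II trade the range of $w$ on $[\tau_n,t]$ has width much smaller than $\mu$, so for most $x\in(\alpha(t),\alpha(t)+\mu)$ the level $x-\mu$ lies below that range, and the required local time must have been accrued on $(\sigma(t,x),\tau_n]$. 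Verifying that it was --- i.e.\ that the book already has depth $\mu$ at time $\tau_n$ and that this persists --- is exactly the content of the lemma, so the sketch defers the essential step rather than proving it.
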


\begin{proof}
  The last inclusion is trivial.

  For $t\in[0,\tau_1]$ we have $V(t,x) = 0$ for any $x<w_\ast(0,\tau_1)+\mu$ and, hence, the first trade does not take place in $[0,\tau_1)$. 
  Moreover, we have 
  $$\{w(t):t\in[0,\tau_1]\} = [w_\ast(0,\tau_1),w^*(0,\tau_1)] = [w_\ast(0,\tau_1),w_\ast(0,\tau_1)+\mu].$$
  Consequently, $L_{\tau_1}^{\cdot}$ has support $[w_\ast(0,\tau_1),w_\ast(0,\tau_1)+\mu]$. Let 
  $$K_{\tau_1}:=\{x\in[w_\ast(0,\tau_1),w_\ast(0,\tau_1)+\mu]: L_{\tau_1}^x=0\} = \{x\in[w^\ast(0,\tau_1),w^\ast(0,\tau_1)+\mu]: L_{\tau_1}^{x-\mu}=0\}.$$
Then, $K_{\tau_1}$ is closed by continuity of the occupation density and it is a Lebesgue null-set. Clearly, we have 
  $$ \{x: V(\tau_1,x)>0 \} = \{x: L_{\tau_1}^{x-\mu}>0\} = [w^\ast(0,\tau_1),w^\ast(0,\tau_1)+\mu]\backslash K_{\tau_1}. $$
  The first claim follows.
  
 Let $T \in [\tau_1,\infty]$ be maximal such that for any $t\in [\tau_1,T)$ the claim holds. Assume by contradiction that $T<\infty$. By continuity of $w$ the claim holds at time $T$. Again by continuity there is $\delta>0$ such that $w^\ast(T,T+\delta) < w_\ast(T,T+\delta)$. We have
  $$ V(t,x) = V(T,x)\mathbf1_{\sigma_{T,t}(x)=T} + L^{x-\mu}_t-L^{x-\mu}_{\sigma_{T,t}(x)} $$
 where $\sigma_{T,t}(x) := \inf\{s\in [T,t]: w(s) = x\text{ or }s=T\}$ for any $x\in\mathbb R$, $t\in [T,T+\delta]$. For $x> w^*(T,t)$ we have $\sigma_{T,t}(x) = T$ and hence 
 $$ V(t,x) = V(T,x) + L^{x-\mu}_t-L^{x-\mu}_{T} $$
 and for $x \leq w^*(T,t)$ we have $ V(t,x) = 0$. Thus,
  $$ \{x: V(t,x) >0\} = \{x > w^*(T,t) : V(T,x)>0\} \cup \{x \in\mathbb R: L^{x-\mu}_t-L^{x-\mu}_{T}>0 \}. $$
This clearly contradicts the maximality of $T$. Thus $T=\infty$. The second claim follows.
\end{proof}

\subsection{Proper trades\label{proper trades}}

The condition for $\tau$ to be a trading time of the path $w$ means that the
order book is not void in any sufficiently small interval $(w(\tau),w\left(
\tau\right) +\varepsilon)$. This does not necessarily imply that an actual
trade takes place at $\tau$. In fact, if the order book is initially empty,
then 
\begin{equation*}
\tau:=\inf\left\{ t>0:w(t)-\inf\{w(s):s\in (0,t)\}=\mu\right\}
\end{equation*}
is the time of the first trade and, if additionally $L_\tau^{\inf\{w(s):s\in
(0,t)\}} = 0$ (which happens if the occupation density is continuous), then $V(\tau-,\alpha(\tau)) = 0$, i.e.\ the limit order book has orders only right
above the level $\alpha(\tau)$. However, this phenomenon is somewhat an
artifact of working in continuous time, and for the purposes of the current
study it is quite sensible to include such times as well under the label
`trading times'. In fact, a proper trade takes place at time $\tau>0$ iff $V\left( \tau {-},w({\tau})\right) >0$ (and then, as always, $V\left( \tau,w({\tau })\right) =0$).

Note, whenever the order volume can be described by a continuous function
then the volume at the best bid and ask will be zero.

Finally, we want to identify the proper trades. First, we need to know that the limit order book has at
least an order depth of $\mu$ starting from the first trading time $\tau_1$.

\begin{lemma}
\label{l:lob open interval} Let $\eta$ be a random time with $\eta\geq
\tau_1 $ where $\tau_1$ is defined in Definition~\ref{type II trades}. Then
we have 
\begin{equation*}
(\alpha(\eta),\alpha(\eta)+\mu)\subseteq \{x\in\mathbb{R}:V(\eta,x)>0\}~\mbox{$P$-a.s.}
\end{equation*}
\end{lemma}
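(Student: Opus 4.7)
My plan is to upgrade Lemma~\ref{l:depths of lob} by showing that its exceptional Lebesgue-null set $K_\eta$ does not meet the open interval $(\alpha(\eta),\alpha(\eta)+\mu)$ almost surely. The probabilistic input I would use is the following pathwise property of Brownian motion, which holds on a set of full $P$-probability: for any $0\le S\le T$ defined pathwise from $W$ and any $y\in(\inf_{r\in[S,T]} W_r,\sup_{r\in[S,T]} W_r)$ one has $L^y_T-L^y_S>0$. This is stronger than the mere existence of a continuous occupation density; it uses that Brownian motion accumulates strictly positive local time at every level in the open interior of its range.

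Fix $x\in(\alpha(\eta),\alpha(\eta)+\mu)$ and set $y:=x-\mu$, $\sigma:=\sigma(\eta,x)$. Since $x>W_\eta$, equation~(\ref{funda}) reduces $V(\eta,x)$ to $L^y_\eta-L^y_\sigma$, so by the property above it suffices to verify pathwise that $y\in(\inf_{[\sigma,\eta]}W,\sup_{[\sigma,\eta]}W)$. The upper bound is immediate: if $\sigma>0$ then $W_\sigma=x>y$, and if $\sigma=0$ then Proposition~\ref{structure alpha} together with $\eta\ge\tau_1$ yields $\alpha(\eta)\le\sup_{[0,\eta]}W$, so the supremum still exceeds $y$.

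For the lower bound I would choose $n$ with $\eta\in[\tau_n,\tau_{n+1})$. Proposition~\ref{structure alpha} ensures $\sup_{[\tau_n,\eta]}W\le\alpha(\eta)<x$, hence $\sigma\le\tau_n$. If $n$ is even, $\alpha(\eta)-\mu=\inf_{[\tau_n,\eta]}W$, so $\inf_{[\sigma,\eta]}W\le\inf_{[\tau_n,\eta]}W=\alpha(\eta)-\mu<y$ and the argument is complete. If $n$ is odd, I split further. For $\sigma\le\tau_{n-1}$, the identity $\inf_{[\tau_{n-1},\tau_n]}W=W_{\tau_n}-\mu$ (because $\tau_n=\Gamma_{\tau_{n-1}}$) combined with $W_{\tau_n}\le\alpha(\eta)<x$ yields $\inf_{[\sigma,\eta]}W\le W_{\tau_n}-\mu<y$. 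For $\sigma\in(\tau_{n-1},\tau_n)$, the constraint $W_s<\inf_{[\tau_{n-1},s]}W+\mu$ valid on $[\tau_{n-1},\tau_n)$, evaluated at $s=\sigma$, gives $\inf_{[\tau_{n-1},\sigma]}W>y$; together with $\inf_{[\tau_{n-1},\tau_n]}W=W_{\tau_n}-\mu<y$ this localises the infimum of $W$ on $[\tau_{n-1},\tau_n]$ strictly inside $(\sigma,\tau_n)$, so $\inf_{[\sigma,\eta]}W\le W_{\tau_n}-\mu<y$ once more.

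The main obstacle will be the odd-$n$ sub-case $\sigma\in(\tau_{n-1},\tau_n)$: there one must carefully marshal the two strict inequalities $x<\inf_{[\tau_{n-1},\sigma]}W+\mu$ (from $\sigma<\tau_n=\Gamma_{\tau_{n-1}}$) and $W_{\tau_n}<x$ (from $W_{\tau_n}\le\alpha(\eta)<x$) to place the infimum of $W$ over $[\tau_{n-1},\tau_n]$ strictly after time $\sigma$. Everything else is direct bookkeeping via Proposition~\ref{structure alpha}.
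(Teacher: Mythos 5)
Your reduction of the claim to showing that $y=x-\mu$ lies strictly between $\inf_{r\in[\sigma,\eta]}W_r$ and $\sup_{r\in[\sigma,\eta]}W_r$ is sound, and your probabilistic input (strict positivity of the local-time increment at levels interior to the range, simultaneously over all pathwise intervals, which reduces via rational endpoints and the Markov property to the fixed-interval statement) is essentially the same fact the paper itself invokes. The odd-$n$ analysis is correct. However, the even-$n$ case contains a false step: you assert that Proposition~\ref{structure alpha} gives $\sup_{r\in[\tau_n,\eta]}W_r\le\alpha(\eta)$ and hence $\sigma\le\tau_n$. For $n$ even one has $\alpha(t)=w_{\ast}(\tau_n,t)+\mu$, which is \emph{non-increasing} on the stretch, whereas $w^{\ast}(\tau_n,\cdot)$ is non-decreasing, so the asserted inequality can fail. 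Concretely, if after $\tau_n$ the path rises by $0.9\mu$ and then falls to $0.5\mu$ below $W_{\tau_n}$, with $\eta$ at the bottom, then $\alpha(\eta)=W_{\tau_n}+0.5\mu$ while $\sup_{[\tau_n,\eta]}W=W_{\tau_n}+0.9\mu$; for $x=W_{\tau_n}+0.7\mu\in(\alpha(\eta),\alpha(\eta)+\mu)$ the last exit time $\sigma$ from level $x$ lies in $(\tau_n,\eta)$, so $[\sigma,\eta]\subseteq[\tau_n,\eta]$ with $\sigma>\tau_n$ and your inequality $\inf_{[\sigma,\eta]}W\le\inf_{[\tau_n,\eta]}W$ points the wrong way. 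The even case is therefore not complete as written.

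The gap is repairable with exactly the localisation trick you already deploy in the odd sub-case $\sigma\in(\tau_{n-1},\tau_n)$: if $n$ is even and $\sigma\in(\tau_n,\eta)$, then $\sigma<\tau_{n+1}=\Gamma_{\tau_n}$ forces $x=W_\sigma<w_{\ast}(\tau_n,\sigma)+\mu$, i.e.\ $w_{\ast}(\tau_n,\sigma)>y$, while $w_{\ast}(\tau_n,\eta)=\alpha(\eta)-\mu<y$; hence the infimum of $W$ over $[\tau_n,\eta]$ is attained after $\sigma$ and $\inf_{[\sigma,\eta]}W=\alpha(\eta)-\mu<y$. With this patch your argument closes. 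It is worth noting that your route differs from, and is considerably more explicit than, the paper's own proof: the paper verifies the inclusion only at $t=\tau_1$ (where $\sigma=0$ and $\{x:L^{x-\mu}_{\tau_1}>0\}$ is the open interior of the shifted range) and then asserts the propagation to general $\eta\ge\tau_1$ without the case analysis over the stretches $[\tau_n,\tau_{n+1})$ that you supply; a corrected version of your argument would actually substantiate that step.
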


\begin{proof}
  We have 
    \begin{align*}
     \tau_1 &= \inf\{t>0: W_\ast(0,t)+\mu = W^\ast(0,t) \}, \\
     V(t,x) &= L_t^{x-\mu} - L^{x-\mu}_{\sigma_t(x)}, \\
     \{x\in\mathbb R:L_t^{x-\mu}>0\} &= (W_\ast(0,t)+\mu,W^\ast(0,t)+\mu),\quad P\text{-a.s.}.
    \end{align*}  
 Thus, we get 
 $$\{x\in\mathbb R: V(\tau_1,x) > 0\} = \{W^\ast(0,\tau_1),W^\ast(0,\tau_1)+\mu\}=(\alpha(\tau_1),\alpha(\tau_1)+\mu)
 \quad\mbox{$P$-a.s.}$$
Thus, for a random time $\eta\geq \tau_1$ we have
$$\{x\in\mathbb R: V(\eta,x) > 0\} \supseteq (\alpha(\eta),\alpha(\eta)+\mu)\quad\mbox{$P$-a.s.}$$
\end{proof}
The next proposition identifies the proper trades exactly as the trades of
Type~{Ia} and Type~{Ib}.
\begin{proposition}
\label{p: proper trades} Let $\Theta_!:=\{t>0: V(t-,w(t))>0\}$. Then 
\begin{equation*}
P(\Theta_! = \Theta_{\mathrm{I}_a} \cup \Theta_{\mathrm{I}_b}) = 1.
\end{equation*}
In other words, a proper trade takes place if and only if a trade of Type~{Ia} or Type~{Ib}
takes place.
\end{proposition}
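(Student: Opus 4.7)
My plan is to write
\[
V(\tau-, w(\tau)) = L_\tau^{w(\tau)-\mu} - L_{\sigma}^{w(\tau)-\mu}, \qquad \sigma := \sigma(\tau-, w(\tau)),
\]
and to treat each trading type separately, using Proposition~\ref{no-IIab} and Corollary~\ref{no isolated trades} to restrict attention to $\Theta = \Theta_{\mathrm{I}_a} \cup \Theta_{\mathrm{I}_b} \cup \Theta_{\mathrm{I}_c} \cup \Theta_{\mathrm{II}_c}$ almost surely. For $\tau \in \Theta_{\mathrm{I}_c}$, Proposition~\ref{structure alpha} places $\tau$ in an odd interval $[\tau_{2k-1},\tau_{2k})$ where $\alpha = w^*(\tau_{2k-1},\cdot)$ grows only at trading times, so on $(\Upsilon(\tau),\tau)$ it is constant equal to $w(\tau)$. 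The strict between-trade bound $W \in (\alpha - \mu, \alpha)$ then forces $\sigma = \Upsilon(\tau)$ and $W > w(\tau) - \mu$ throughout $(\Upsilon(\tau),\tau)$, whence $L^{w(\tau)-\mu}$ is constant on that interval and $V(\tau-, w(\tau)) = 0$.

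For $\tau = \tau_{2k+1} \in \Theta_{\mathrm{II}_c}$, set $m := w_*(\tau_{2k},\tau) = w(\tau) - \mu$. On $[\tau_{2k},\tau]$ the process $W$ stays above $m$ and attains $m$ only on a negligible set, so $L_\tau^m = L_{\tau_{2k}}^m$ by the classical fact that Brownian local time vanishes at the infimum of the path on a closed interval (a consequence of joint continuity of $L^{\cdot}$ in the space variable together with $L^a = 0$ for $a$ below the path minimum). Let $t^*_{\max}$ denote the a.s.\ unique argmax time of $w^* := w^*(\tau_{2k-1}, \tau_{2k})$. I claim $\sigma \geq t^*_{\max}$: if $w(\tau_{2k}) \geq m + \mu$, an intermediate value argument produces a last visit of $W$ to $m + \mu$ either on the odd-interval descent in $(t^*_{\max}, \tau_{2k})$ or on the even-interval descent in $(\tau_{2k}, T_m)$, where $T_m$ is the infimum-attainment time; if $w(\tau_{2k}) < m + \mu$, then $m + \mu \in (w(\tau_{2k}), w^*]$ and the odd-interval descent from $w^*$ to $w(\tau_{2k})$ crosses $m + \mu$ somewhere in $(t^*_{\max}, \tau_{2k})$. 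On $(t^*_{\max}, \tau_{2k})$ the running maximum is frozen at $w^*$, so $W > w^* - \mu = w(\tau_{2k}) \geq m$ and no local time at $m$ accumulates, giving altogether $V(\tau-, w(\tau)) = 0$.

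For $\tau \in \Theta_{\mathrm{I}_a} \cup \Theta_{\mathrm{I}_b}$, almost sure strict monotonicity of the Brownian running maximum at its new-max times gives $\alpha(s) < w(\tau)$ for every $s \in [\tau_{2k-1}, \tau)$, so $\sigma < \tau_{2k-1}$. On the preceding even interval $[\tau_{2k-2}, \tau_{2k-1}]$, after the infimum time $T^{\mathrm{inf}}$ of $W$, the ask is constant equal to $w(\tau_{2k-1}) < w(\tau)$, so $W$ cannot visit $w(\tau)$ there either, hence $\sigma \leq T^{\mathrm{inf}}$. On $[T^{\mathrm{inf}}, \tau]$ the process $W$ runs continuously from $W(T^{\mathrm{inf}}) = w(\tau_{2k-1}) - \mu < w(\tau) - \mu$ (the strict inequality because $w(\tau) > w(\tau_{2k-1})$ for Ia/Ib) up to $W(\tau) = w(\tau) > w(\tau) - \mu$, so the intermediate value theorem yields a visit of $W$ to level $w(\tau) - \mu$ at some time in $(T^{\mathrm{inf}}, \tau) \subset (\sigma, \tau)$. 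Since Brownian local time is strictly increasing in $t$ on any interval containing a visit to the corresponding level, this gives $L_\tau^{w(\tau) - \mu} > L_\sigma^{w(\tau) - \mu}$, i.e.\ $V(\tau-, w(\tau)) > 0$. The main obstacle is the IIc step, which combines the vanishing of Brownian local time at the path infimum with a case split to pin down the location of $\sigma$.
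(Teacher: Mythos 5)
Your route is genuinely different from the paper's. The paper argues softly via Lemma~\ref{l:lob open interval}: if $V(t-,w(t))>0$ then $\alpha(t-\epsilon)<w(t)$ for small $\epsilon$, which forces trades to accumulate from the left (hence \type{Ia}/\type{Ib}), and conversely a \type{Ia}/\type{Ib} trade inherits the positive volume already present at an earlier time $t_1$ with $\alpha(t_1)+\mu>\alpha(t)$. You instead compute $V(\tau-,w(\tau))$ explicitly as the local-time increment $L^{w(\tau)-\mu}_\tau-L^{w(\tau)-\mu}_\sigma$ and locate $\sigma$ case by case using Proposition~\ref{structure alpha}. The three case computations (\type{Ic}, \type{IIc}, \type{Ia}/\type{Ib}) are essentially sound, and the \type{IIc} step --- killing the local time at the running infimum by continuity in the space variable --- is correct and cleanly justified.

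There is, however, a genuine gap in the overall structure: $\Theta_!$ is defined over \emph{all} $t>0$, not only over $\Theta$, so the inclusion $\Theta_!\subseteq\Theta_{\mathrm{I}_a}\cup\Theta_{\mathrm{I}_b}$ also requires $V(t-,w(t))=0$ for every non-trading time $t$. You restrict attention to $\Theta$ from the outset and never address this, and in your direct framework it is not automatic: for $t\notin\Theta$ one must still rule out a crossing of the level $w(t)-\mu$ on $(\sigma(t-,w(t)),t)$, which takes its own case analysis. (The paper gets this almost for free: $V(s,w(t))>0$ for $s\nearrow t$ forces $\alpha(s)<w(t)$ by the support inclusion in Lemma~\ref{l:depths of lob}, and continuity of $\alpha$ together with $\alpha\geq W$ then yields $\alpha(t)=w(t)$, i.e.\ $t\in\Theta$.) A second, repairable, imprecision sits in the \type{Ia}/\type{Ib} step: the fact that local time strictly increases on any interval containing a visit to the level is false when quantified over all levels simultaneously --- an isolated visit, such as a local minimum taking exactly that value, carries no local time. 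What is true, and suffices because your intermediate-value argument actually produces a crossing from strictly below $w(\tau)-\mu$ to strictly above it, is that local time increases across a genuine crossing, uniformly in the level (obtainable from the Ray--Knight theorem applied at rational times and levels). Since uncountably many levels $w(\tau)-\mu$ are in play, you should state and use the crossing version; the paper's own Lemma~\ref{l:lob open interval} leans on the same uniform-in-level positivity.
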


\begin{proof}
 Let $t\in\Theta_!$. Let $\delta>0$ such that $V(t-\epsilon,w(t))>0$ for any $\epsilon\in(0,\delta)$. 
 Now let $\epsilon\in(0,\delta)$. Lemma \ref{l:lob open interval} yields that $\alpha(t-\epsilon) < w(t)$. 
 Thus, there is $t_\epsilon\in (t-\epsilon,t)$ such that $\alpha(t_\epsilon)=w(t_\epsilon)$. Hence, 
 we have $t_\epsilon\in\Theta$. Consequently, we have $\lasttrade(t) = t$ which 
 implies $t\in \Theta_{\mathrm{I}_a} \cup \Theta_{\mathrm{I}_b}$.
 
 Now, let $t\in \Theta_{\mathrm{I}_a} \cup \Theta_{\mathrm{I}_b}$. 
 Proposition~\ref{structure alpha} yields that there is $t_0\in[0,t]$ such 
 that $\alpha(s)=w^*(t_0,s)$ for any $s\in[t_0,t]$. Since $t\notin\Theta_{\mathrm{II}}$ we 
 have $t_0<t$. Hence, there is $t_1\in[t_0,t)$ such that $\alpha(t_1)+\mu > \alpha(t)$. 
 Lemma~\ref{l:lob open interval} yields that $V(t_1,x) > 0$ for 
 any $x\in (\alpha(t_1),\alpha(t_1)+\mu)\supseteq [\alpha(t),\alpha(t_1)+\mu)$. 
 Consequently, we have $V(s,x)>0$ for any $s\in [t_1,t)$, $x\in [\alpha(t),\alpha(t_1)+\mu)$. 
 This implies that $V(t-,w(t)) = V(t-,\alpha(t)) > 0$. Hence, $t\in\Theta_!$.
\end{proof}

\newcommand{\etalchar}[1]{$^{#1}$}

\end{document}